\begin{document}
\maketitle              % typeset the header of the contribution
\begin{abstract}
 %!TEX TS-program = pdflatex
%!TEX encoding = UTF-8 Unicode
%!TeX spellcheck = en-US
%!TeX root = ..\socg23.tex

%----------------------- Abstract-------------------------------------

Let $\gamma$ be a generic closed curve in the plane.
Samuel Blank, in his 1967 Ph.D.\ thesis, determined if $\gamma$ is self-overlapping by
\emph{geometrically} constructing a combinatorial word from $\gamma$.
% by drawing
%``cables'' from each face to a point at infinity and tracing the intersection points between the cables and $\gamma$.
More recently, Zipei Nie, in an unpublished manuscript, computed the minimum homotopy
area of $\gamma$ by 
constructing a combinatorial word \emph{algebraically}. % by interpreting the faces of $\gamma$ as the generators of the fundamental group of a multi-punctured plane.
We provide a unified framework for working with both words and determine the settings under 
which Blank's word and Nie's word are equivalent.
Using this equivalence, we give a new geometric proof for
the correctness of Nie's algorithm.
Unlike previous work, our proof is constructive which allows us to naturally compute the
actual homotopy that realizes the minimum area.  Furthermore, we contribute to the
theory of self-overlapping curves by providing the first polynomial-time
algorithm to compute a self-overlapping decomposition of any closed curve
$\gamma$ with minimum area.

\end{abstract}
%
%

%!TEX TS-program = pdflatex
%!TEX encoding = UTF-8 Unicode
%!TeX spellcheck = en-US
%!TeX root = ..\socg23.tex

%----------------------- Introduction -------------------------------------
\section{Introduction}
\label{sec:intro}

% \begin{TODO}
%     At the moment the introduction was for STACS version that focussed on the
%     representations of curves.
%     Rewrite the introduction and start with minimum area homotopy, then how
%     curve representations and SOD comes into play.
% \end{TODO}

A \EMPH{closed curve} in the plane is a continuous map $\gamma$ from the circle
$\Sp^1$ to the plane~$\R^2$.
In the plane, any closed curve is homotopic to a point.
A homotopy that sweeps out the minimum possible area
is a \EMPH{minimum homotopy}.
Chambers and Wang~\cite{cw2013} introduced the minimum
homotopy area between two simple homotopic curves
with common endpoints as a way to measure the similarity
between the two curves.
They suggest that homotopy area is more robust against noise
than another popular similarity measure on curves called the \emph{Fr\'echet distance}.
However, their algorithm requires that each curve be simple,
which is restrictive.

Fasy, Karako\c{c}, and Wenk~\cite{fkw2017} proved that
the problem of finding the minimum homotopy area is easy on a closed curve that is the boundary of an immersed disk.
Such curves are called \EMPH{self-overlapping}~\cite{evansFasyWenk,so-graphics,mukherjee2014,shor-van-wyk,titus,whitney1937}.
They also established a tight connection between minimum-area homotopy and self-overlapping curves by showing that any generic closed curve can be decomposed at some vertices
into self-overlapping subcurves such that the combined homotopy from the subcurves is minimum.
This structural result gives an exponential-time algorithm for
the minimum homotopy area problem by testing each decomposition in a brute-force manner.

Nie, in an unpublished manuscript~\cite{nie2014}, described a polynomial-time algorithm to
determine the minimum homotopy area of any closed curve in the plane.
Nie's algorithm borrows tools from geometric group theory by representing the curve as a word in the fundamental group $\pi_1(\curve)$, and connects minimum homotopy area to the \emph{cancellation norms}~\cite{cancellation-norm-def,bringmann_truly_2019,folding-1980} of the word, which can be computed using a dynamic program.
However, the algorithm does not naturally compute an associated \emph{minimum-area homotopy}.
%However, such strategy does not immediately imply a polynomial-time algorithm to compute an associated \emph{minimum-area homotopy}, as the geometric objects presented in the dynamic program naturally grows exponentially in complexity.
%\hsien{Does our Dehn twists produce words exponential in length?  Will it show up when we compute the homotopy?}
%\hsien{No, in SO subcurve any positive folding gives homotopy of linear length; we can compute min-area, then min-area SOD, then positive folding per SO subcurve, then combine.}

%\todo{Discuss difficulty with geometric objects and connect to the discussion of curve representations.}

Alternatively, one can interpret the words from the dynamic program \emph{geometrically} as crossing sequences by traversing any subcurve cyclicly and recording the crossings along with their directions with a collection of nicely-drawn \emph{cables} from each face to a point at infinity.
Such geometric representation is known as the Blank words~\cite{blank,poe-eic1-1968}.
In fact, the first application of these combinatorial words given by Blank is an algorithm that determines if a curve is self-overlapping.
Blank words are geometric in nature and thus the associated objects are polynomial in size.
When attempting to interpret Nie's dynamic program from the geometric view, one encounters the question of how to extend Blank's definition of cables to \emph{subcurves}, where the cables inherited from the original curve are no longer positioned well with respect to the subcurves.
To our knowledge, no geometric interpretation of the dynamic program is known.

\subsection{Our Contributions}

%In this work, we provide the first polynomial-time algorithm to compute a minimum homotopy of any closed curve in the plane.
%
%To do so, 
We first show that Blank and Nie's word constructions are, in fact, equivalent under the right assumptions (Section~\ref{sec:curve-to-word}).
Next, we extend the definition of Blank's word to subcurves and arbitrary cable drawings (Section~\ref{SS:norm}), and interpret the dynamic program by Nie geometrically (Section~\ref{SS:correctness}).
Using the self-overlapping decomposition theorem by Fasy, Karako\c{c}, and Wenk~\cite{fkw2017} we provide a correctness proof to the algorithm.
%The strength of our approach is that the geometric interpretation allows us to compute an \emph{actual} minimum-area homotopy in polynomial time.
Finally, we conclude with a new result that a minimum-area self-overlapping decomposition
can be found in polynomial~time.
We emphasize that extending Blank words to allow arbitrary cables is in no way straightforward.
In fact, many assumptions on the cables have to be made in order to connect self-overlapping curves and minimum-area homotopy; handling arbitrary cable systems, as seen in the dynamic program, requires further tools from geometric topology like \emph{Dehn twists}.% (Section~\ref{SS:independence}).

\section{Background}
\label{sec:background}

In this section, we introduce concepts and definitions that are used throughout
the paper.
We assume the readers are familiar with the basic terminology for~curves and
surfaces.

\subsection{Curves and Graphs}

A \EMPH{closed curve} in the plane is a continuous map
$\EMPH{$\gamma$}: \mathbb{S}^1 \to \R^2$, and a \EMPH{path} in the plane is a
continuous map $\EMPH{$\zeta$}: [0,1]\to \R^2$.
A path $\zeta$ is \emph{closed} when~$\zeta(0)=\zeta(1)$.
In this work, we are presented with a \EMPH{generic} curve; that is, one
where there are a finite number of self-intersections, each of
which is transverse and no three strands~cross at the same point.
%In other words, there are exactly two subpaths of the curve at each self-crossing,
%and the corresponding two tangent vectors must be linearly independent. 
See \figref{generic} for an example.
\setlength{\intextsep}{0pt}%
\begin{wrapfigure}{r}{.3\textwidth}
    \begin{center}
    \includegraphics[width=.25\textwidth]{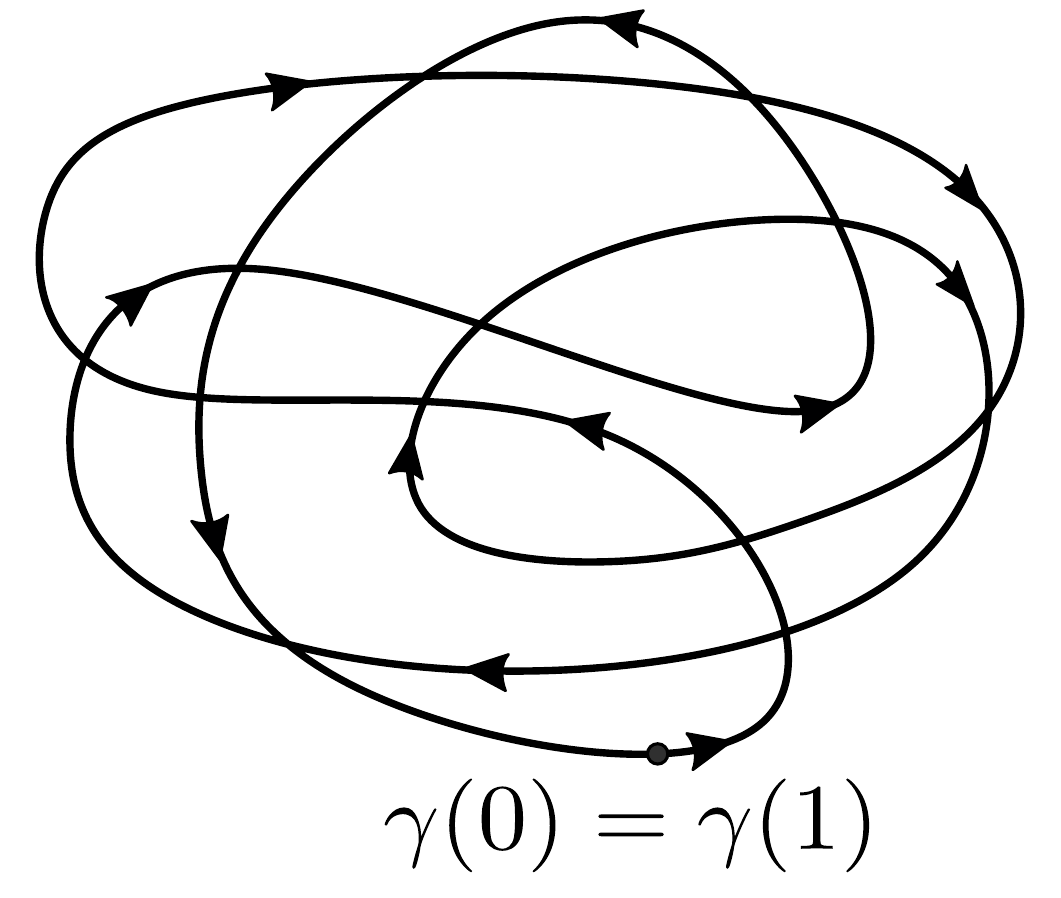}
    \end{center}
    \caption{A generic plane curve induces a four-regular graph.}
    \label{fig:generic}
\end{wrapfigure}
\setlength{\intextsep}{12pt}%
The image of a generic closed curve is naturally associated with a four-regular plane graph.
%we abuse notation and denote this graph as {$\curve$} as well.
The self-intersection points of a curve are \EMPH{vertices},
the paths between vertices are \EMPH{edges}, and the connected
components of the complement of the curve are \EMPH{faces}.
Given a curve, choose an arbitrary starting point~$\gamma(0)=\gamma(1)$ and
orientation for $\curve$.

The \EMPH{dual graph $\curve^*$} is another (multi-)graph, whose
vertices represent the faces of $\gamma$, and two vertices in $\curve^*$ are joined by an edge if there is an edge between the two corresponding faces in $\gamma$.
The dual graph is another plane graph with an inherited embedding~from~$\gamma$.
% by choosing
% an arbitrary point $f^*$ in each face, then for every edge~$e$
% choose a path~$e^*$ between the two points in the faces incident to~$e$
% such that~$e^*$ intersects~$e$ once transversely and~$e^*$ does not intersect
% any other edges of $\curve$.
% The dual graph is another plane graph. \hsien{Change definition to disconnect from drawings.}

%\paragraph*{Tree-Cotree Pairs}
Let $T$ be a spanning tree of $\curve$.
Let $E$ denote the set of edges in $\curve$, the tree~$T$ partitions $E$ into two subsets,
 $T$ and $T^* \coloneqq E
\setminus T$.
The edges in $T^*$ define a spanning tree of $\curve^*$ called the \EMPH{cotree}.
The partition of the edges $(T,T^*)$ is called the \EMPH{tree-cotree pair}.

We call a rooted spanning cotree $T^*$ of $\curve^*$ a \EMPH{breadth-first search tree}
(BFS-tree) if it can be generated from a breadth-first search rooted at the
vertex in $\curve^*$ corresponding to the unbounded face in $\curve$.
Each bounded face $f$ of $\curve$ is a vertex in a breadth-first search tree $T^*$, we associate $f$
with the unique edge incident to $f^*$ in the direction of the root.
Thus, there is a correspondence between edges of $T^*$ and faces of $\curve$.

\subsection{Homotopy and Isotopy}

A \EMPH{homotopy} between two closed curves $\gamma_1$ and $\gamma_2$ that
share a point $p_0$ is a continuous map $H\colon [0,1]\times \Sp^1 \to \mathbb{R}^2$
such that $H(0,\cdot)=\gamma_1$, $H(1,\cdot)=\gamma_2$, and $H(s,0)=p_0=H(s,1)$.
%\footnote{This is known as \EMPH{free homotopy} in standard topology textbook.}
%
We define a homotopy between two paths similarly, where the two endpoints
are fixed throughout the continuous morph.
Notice that homotopy between two closed curves as \emph{closed curves} and the
homotopy between them as \emph{closed paths} with an identical starting points are different.
%We restrict ourselves to $\mathbb{R}^2$ where any two closed curves are homotopic.
%
A homotopy between two injective paths $\zeta_1$ and $\zeta_2$ is an \EMPH{isotopy} if every intermediate 
path~$H(s,\cdot)$ is injective for all $s$.
%in $\R^2$ is
% a homotopy $H\colon [0,1]\times [0,1] \to \R^2$ with $H(0,\cdot)=\zeta_1,
%  H(1,\cdot)=\zeta_2$ such that $H(t,\cdot)$ is injective for all $t\in[0,1]$.
The notion of isotopy naturally extends to a collection of paths.

We can think of $\curve$ as a topological space and consider the \EMPH{fundamental
group~$\pi_1(\curve)$}.
Elements of the fundamental group are called \EMPH{words}, %\hsien{might be overloading}
whose letters correspond to equivalence classes of homotopic closed paths in $\curve$.
%By proposition 1A.2 in \cite{hatcher},   % I don't think this needs citation
The fundamental group of $\curve$ is a free group with basis consisting of the classes
corresponding to the cotree edges of any tree-cotree pair of $\curve$.

Let $H$ be a homotopy between curves $\gamma_1$ and $\gamma_2$.
Let $\EMPH{$\#H^{-1}(x)$} \colon \R^2 \to \Z$ be the function that assigns to each $x\in\R^2$
the number of times the intermediate curves $H$ sweep over $x$.
The homotopy area of $H$ is
\[
    \EMPH{$\Area(H)$} \coloneqq \int_{\R^2} \#H^{-1}(x) \, dx.
\]

The minimum area homotopy between $\gamma_1$ and $\gamma_2$ is the infimum of
the homotopy area over all homotopies between between $\gamma_1$ and $\gamma_2$.
We denote this by
\(
    \EMPH{$\Area_H(\gamma_1,\gamma_2)$} \coloneqq \inf_{H} \, \Area(H).
\)
When $\gamma_2$ is the constant curve at a specific point $p_0$ on $\gamma_1$,
define $\EMPH{$\Area_H(\gamma)$} \coloneqq \Area_H(\gamma,p_0)$.
See \figref{eight} for an example of~a homotopy.

\begin{figure}[htb]
    \captionsetup[subfigure]{justification=centering}
    \centering
    \begin{subfigure}[b]{0.15\textwidth}
        \includegraphics[width=\textwidth]{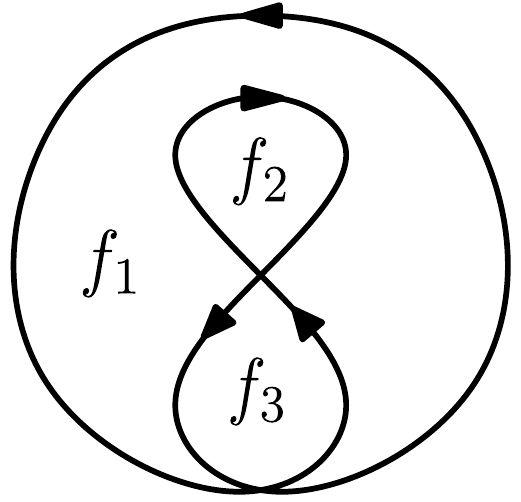}
       \subcaption{}\label{fig:initial-curve}
    \end{subfigure}
        \hspace{1cm}
        \begin{subfigure}[b]{0.15\textwidth}
        \includegraphics[width=\textwidth]{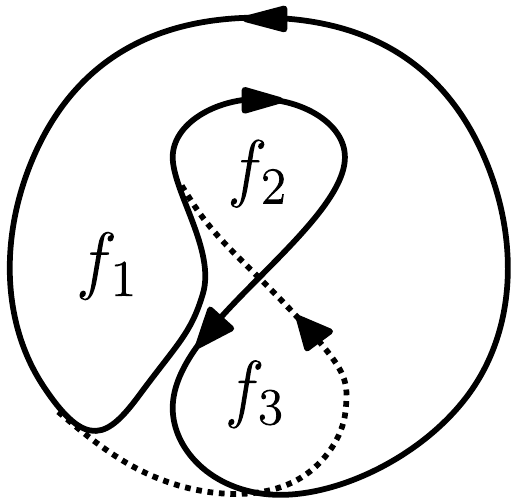}
        \subcaption{}\label{fig:sweepf3-1}
        \end{subfigure}
        \hspace{1cm}
        \begin{subfigure}[b]{0.15\textwidth}
    \includegraphics[width=\textwidth]{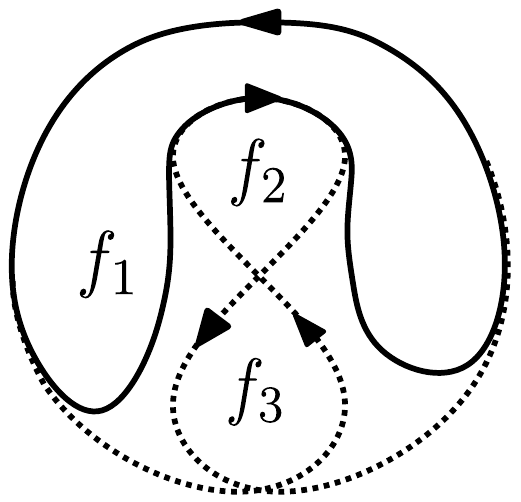}
      \subcaption{}\label{fig:sweepf3-2}
        \end{subfigure}
        \hspace{1cm}
        \begin{subfigure}[b]{0.15\textwidth}
    \includegraphics[width=\textwidth]{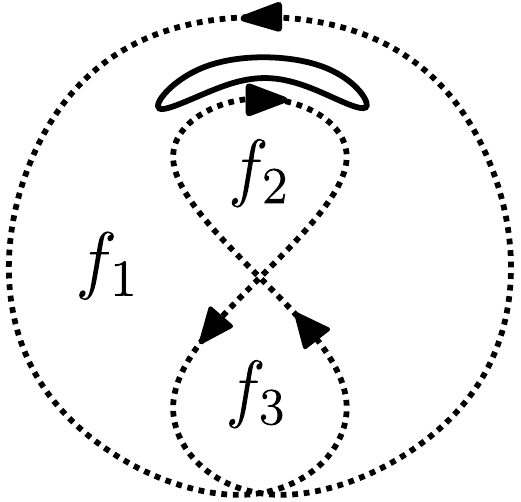}
	\subcaption{}\label{fig:avoidf2}
    \end{subfigure}
    \caption{(\subref{fig:initial-curve}) A generic closed curve in the plane.
        (\subref{fig:sweepf3-1}) We see a homotopy that sweeps over the face $f_3$.
        (\subref{fig:sweepf3-2}) The homotopy sweeps $f_3$ again.
        (\subref{fig:avoidf2}) The homotopy avoids sweeping over the face $f_2$.
        This is a minimum area homotopy for the curve, the area is $\Area(f_1) + 2\cdot\Area(f_3)$.
    }
    \label{fig:eight}
\end{figure}

For each $x\in \RR \setminus \curve$, the \EMPH{winding number} of $\gamma$ at $x$,
denoted as \EMPH{$\wind(x,\gamma)$}, is the number of times $\gamma$ ``wraps around'' $x$,
with a \emph{positive} sign if it is counterclockwise, and \emph{negative} sign otherwise.
%Put it differently, the winding number is the number of times $\gamma$ passes through an
%arbitrary ray from $x$ to infinity from right to left, minus the number of times $\gamma$
%passes the ray from left to right.
%A pleasant description is given in~\cite{chinn-steenrod}.
The winding number is a constant on each face.
The \EMPH{winding area} of $\gamma$ is defined to be the integral
\[
\EMPH{$\Area_W(\gamma)$} \coloneqq \int_{\R^2} \abs{\wind(x,\gamma)} \, dx
= \sum_{\text{face $f$}} \abs{\wind(f,\gamma)} \cdot \Area(f).
\]

% Next we describe a decomposition of a curve into simple cycles
% based on the depth that is due to Chang-Erickson~\cite{changErickson17}.
% Let $d$ denote the depth of $\gamma$, for all $1<j<d$, let $R_j$ denote
% the set of points with $\depth(x,\gamma)\geq d+1-j$.
% Consider a open neighborhood of the closure of $R_j\cup \tilde{R}_{j-1},$
%call this $\tilde{R}_j$. Each $\tilde{R}_j$ is the disjoint union of closed disks.
%Then $\tilde{R}_0$ is the empty set,
% $\tilde{R}_1$ is disk containing all points of depth $d,
% \tilde{R}_2$ is disk containing all points of depth $d-1$ and so on.
%Finally, $\tilde{R}_d$ is a disk containing the entire curve.
%We call such a decomposition the \EMPH{depth cycles}
%of a curve $\gamma$.

The \EMPH{depth} of a face $f$ is the minimal number of edges crossed
by a path from $f$ to the exterior face.
The depth is a constant on each face.
We say the depth of a curve is equal the maximum depth over all faces.
We define the \EMPH{depth area}~to~be
\[
\EMPH{$\Area_D(\gamma)$} \coloneqq \int_{\R^2} \depth(x,\gamma)\, dx
= \sum_{\text{face $f$}} \depth(f) \cdot \Area(f).
\]

Chambers and Wang~\cite{cw2013} showed that the winding area gives a lower
bound for the minimum homotopy area.
On the other hand, there is always a homotopy with area $\Area_D(\gamma)$;
one such homotopy can be constructed by smoothing the curve
at each vertex into simple depth cycles \cite{changErickson17},
 then contracting each simple cycle.
Therefore we have
\begin{equation}
\label{eq:sandwich}
    \Area_W(\gamma) \leq \Area_H(\gamma) \leq \Area_D(\gamma).
\end{equation}

\subsection{Self-Overlapping Curves}
\label{SS:so-curves}

% \begin{TODO}
%     Reassess if we need to move this subsection to Section~4 after the later
%     parts are done.
% \end{TODO}

A generic curve $\gamma$ is \EMPH{self-overlapping} if there exists an immersion of the two disk $F:\mathbb{D}^2\rightarrow \RR^2$ such that $\curve= F|_{\partial\!D^2}$.
We say a map $F$ \EMPH{extends} $\gamma$.
The image $F(\mathbb{D}^2)$ is the \EMPH{interior} of~$\gamma$.
%Intuitively, the image the disk $\mathbb{D}^2$ has two sides, one colored with blue and the other with red.
%A curve $\gamma$ is self-overlapping if it is the boundary of a disk $\mathbb{D}^2$
%with only one color facing up~\cite{mukherjee2014}.
There are several equivalent ways to define self-overlapping curves
\cite{evansFasyWenk,titus,shor-van-wyk,so-graphics,mukherjee2014}.
Properties of self-overlapping curves are well-studied~\cite{evans18}; in particular,
any self-overlapping curve has rotation number 1, where the \EMPH{rotation number} of a curve $\gamma$ is the winding number of the derivative $\gamma'$ about the origin \cite{whitney1937}.
Also, the minimum homotopy area of any self-overlapping curve is equal to its winding area:
$\Area_W(\gamma) = \Area_H(\gamma)$~\cite{fkw2017}.

The study of self-overlapping curves traces back to Whitney~\cite{whitney1937} and Titus~\cite{titus}.
Polynomial-time algorithms for determining if a curve is self-overlapping
have been given \cite{blank,shor-van-wyk}, as well as NP-hardness result for extensions to surfaces and higher-dimensional spaces~\cite{eppsteinMumford}.
%See the thesis by Evans~\cite{evans18} for a more detailed history.
%These works are not only interested in determining if a curve is self-overlapping, they also determine the number of equivalent extensions.
%See \figref{extensions} for an example of a curve with two extensions
% that are not diffeomorphic.
% In a related work, Eppstein and Mumford show it is NP-complete to determine if a
% curve is the boundary of an immersed surface with boundary in $\R^3$
% \cite{eppsteinMumford}. \hsien{Other results we want to cite?}

For any curve, the \EMPH{intersection sequence}%
\footnote{also known as the unsigned Gauss code~\cite{changErickson17,gauss}}
$[\gamma]_V$ is a cyclic sequence of vertices~$[v_0, v_1, \ldots , v_{n-1}]$
with $v_n = v_0$, where each $v_i$ is an intersection point of $\gamma$.
Each vertex appears exactly twice in $\gamma_V$.
Two vertices $x$ and $y$ are \EMPH{linked} if the two appearances of $x$ and
$y$ in $\gamma_V$ alternate in cyclic order:  $\dots x \dots y \dots x \dots y \dots \,$.

A pair of symbols of the same vertex $x$ induces two
natural subcurves generated by \EMPH{smoothing} the vertex $x$;
see \figref{vertex-cut} for an example.
(In this work, every smoothing is done in the way that respects the orientation
and splits the curve into two subcurves.)
A \EMPH{vertex pairing} is a collection of pairwise unlinked vertex pairs in $[\gamma]_V$.

A \EMPH{self-overlapping decomposition} $\Gamma$ of $\gamma$ is a vertex pairing
such that the induced subcurves are self-overlapping; see
\figref{total-decomp} and \figref{intersections-p1} for examples.
The subcurves that result from a vertex pairing are not necessary
self-overlapping; see \figref{intersections-p0}.
For a self-overlapping decomposition $\Gamma$ of $\gamma$, denote the set of
induced subcurves by $\{\gamma_i\}_{i=1}^\ell$.
Since each $\gamma_i$ is self-overlapping, the minimum homotopy area is equal to
its winding area.
We define the \EMPH{area of self-overlapping decomposition} to be
\[
\EMPH{$\Area_\Gamma(\gamma)$} \coloneqq
\sum_{i=1}^\ell \Area_W(\gamma_i)
= \sum_{i=1}^\ell \Area_H(\gamma_i).
\]

\begin{figure}[t]
    \captionsetup[subfigure]{justification=centering}
    \centering
    \begin{subfigure}[b]{0.22\textwidth}
        \includegraphics[width=\textwidth]{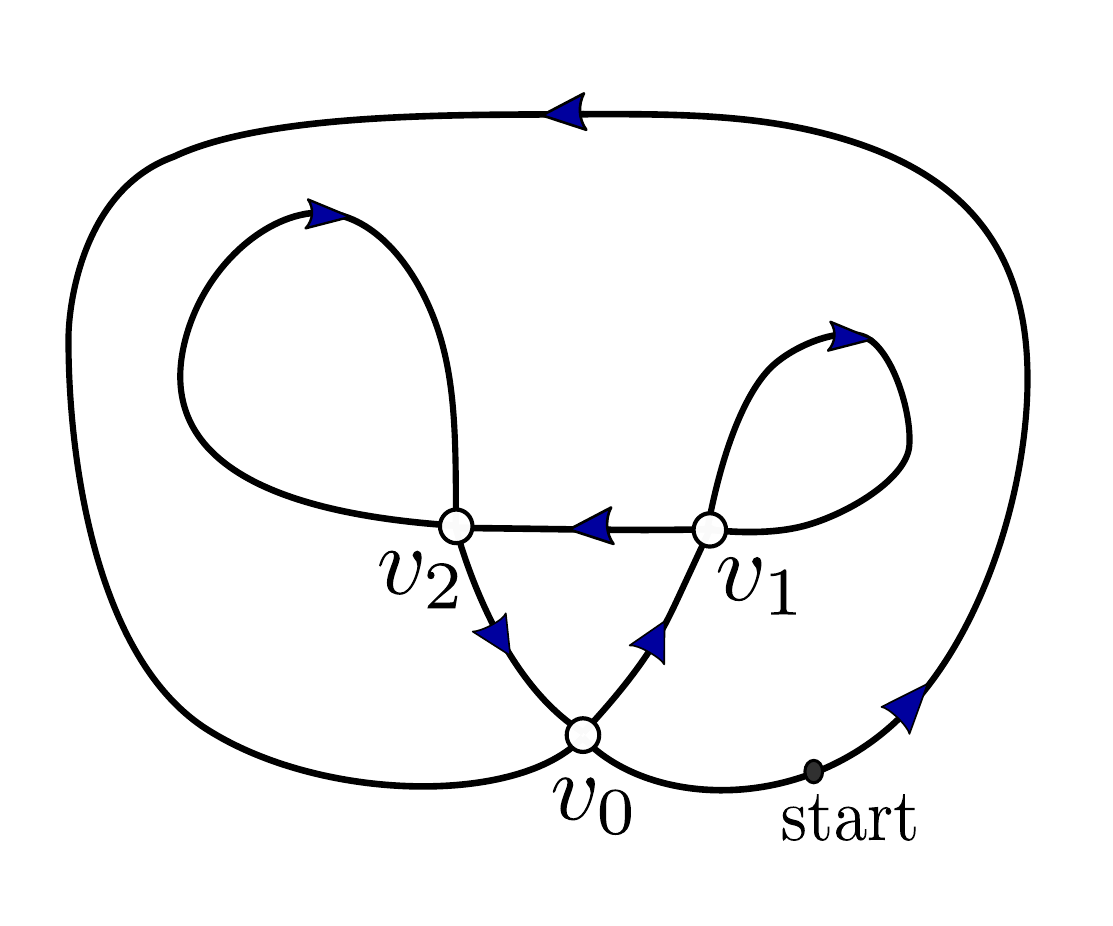}
        \subcaption{}\label{fig:intersections}
    \end{subfigure}
    \begin{subfigure}[b]{0.22\textwidth}
        \includegraphics[width=\textwidth]{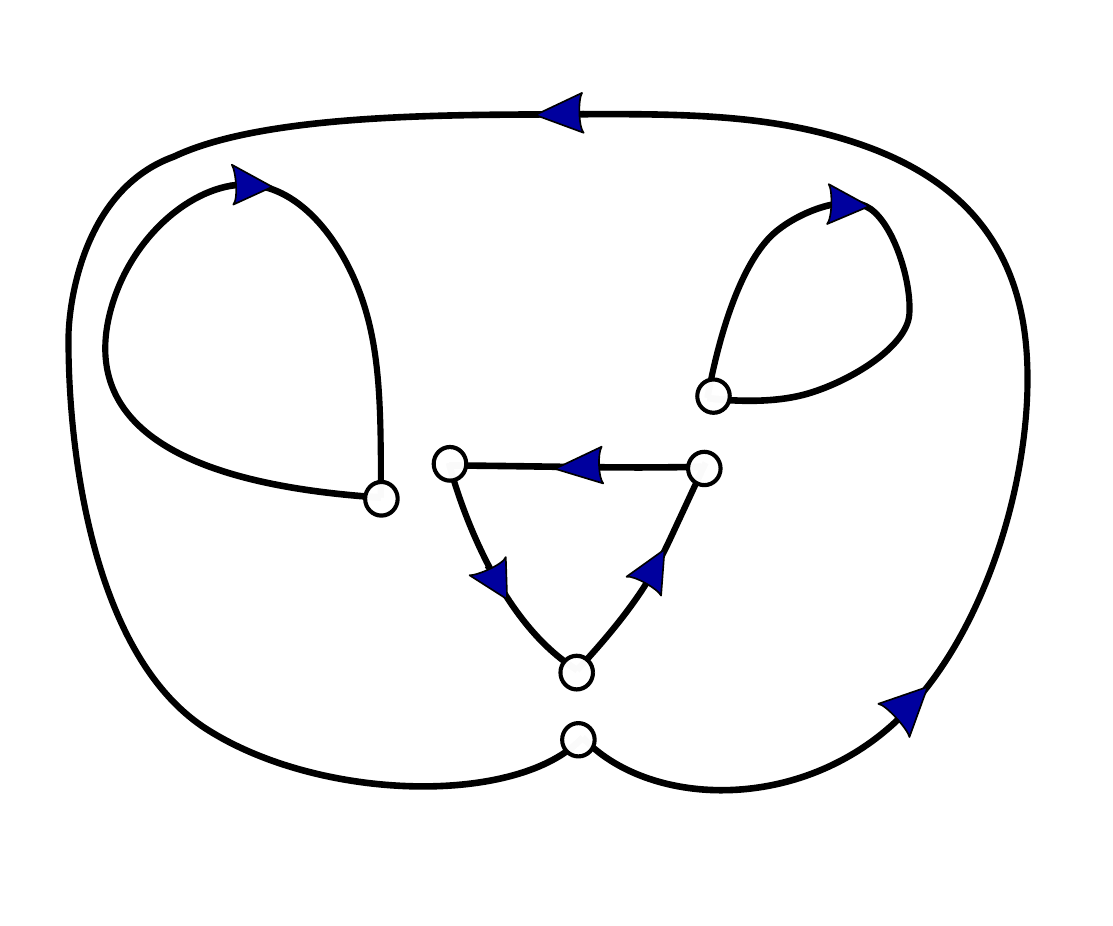}
        \subcaption{}\label{fig:total-decomp}
    \end{subfigure}
    \begin{subfigure}[b]{0.22\textwidth}
        \includegraphics[width=\textwidth]{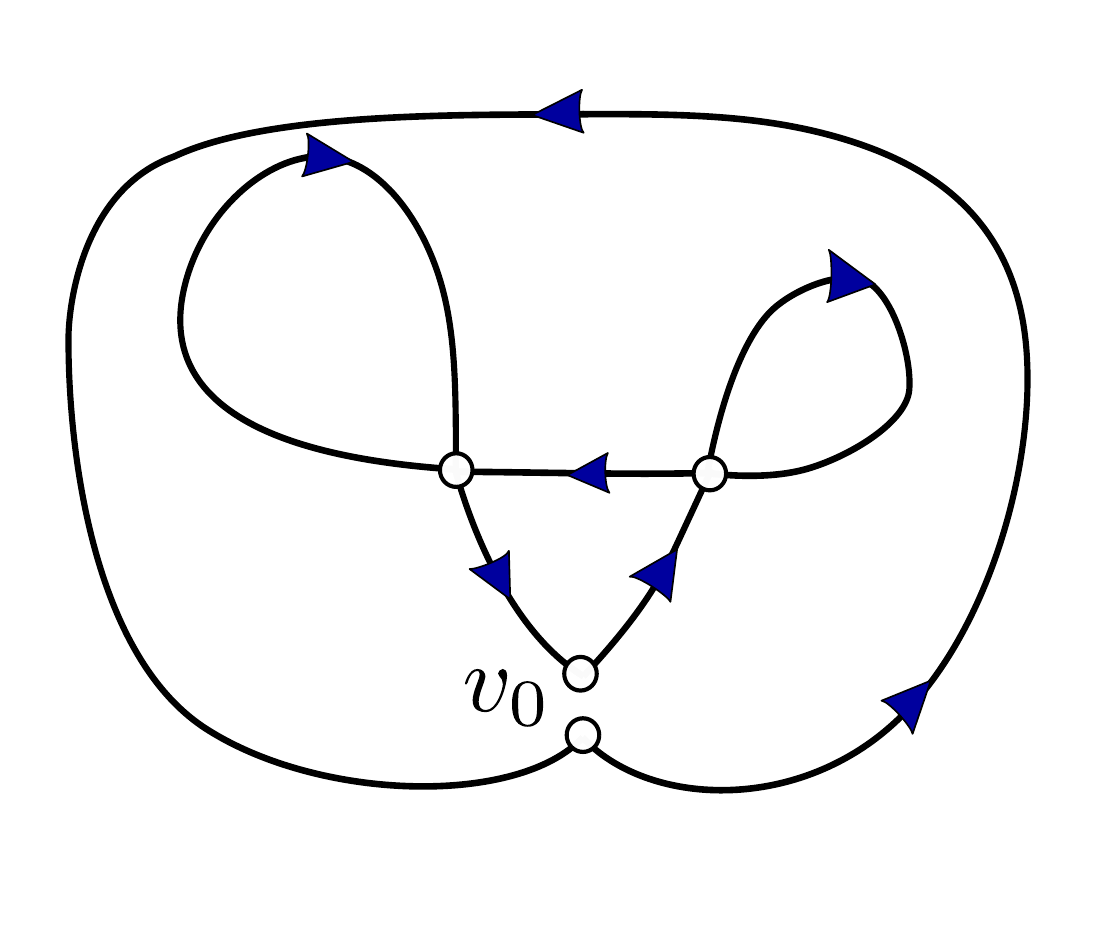}
        \subcaption{}\label{fig:intersections-p0}
    \end{subfigure}
    \begin{subfigure}[b]{0.22\textwidth}
        \includegraphics[width=\textwidth]{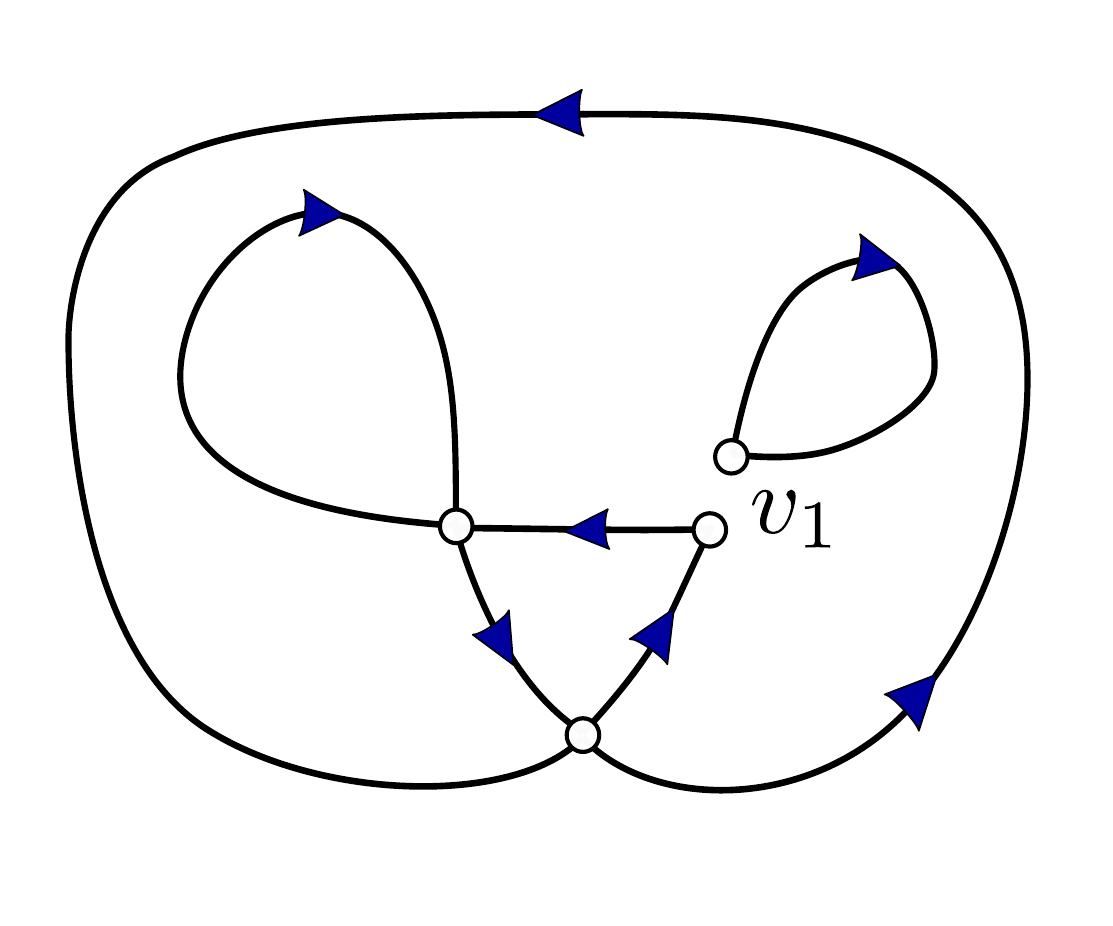}
        \subcaption{}\label{fig:intersections-p1}
    \end{subfigure}

    \caption{
        (\subref{fig:intersections}) Curve $\gamma$ with intersection sequence
        $\gamma_V=[v_0,v_1,v_1,v_2,v_2,v_0]$.
        (\subref{fig:total-decomp}) All vertices are paired. %in the intersection sequence.
        (\subref{fig:intersections-p0})  One of the subcurves is not self-overlapping.
        (\subref{fig:intersections-p1}) Both subcurves are self-overlapping.}
    \label{fig:vertex-cut}
\end{figure}

Fasy, Karakoç, and Wenk~\cite{fkw2017,karakoc17} proved the following structural theorem.
\begin{theorem}[Self-Overlapping Decomposition~{\cite[Theorem 20]{fkw2017}}]
    \label{thm:structural}
    Any curve $\curve$ has
    a self-overlapping decomposition whose area is minimum over all
    null-homotopies of $\curve$.
\end{theorem}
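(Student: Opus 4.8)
It suffices to show that $\Area_H(\gamma)=\min_{\Gamma}\Area_\Gamma(\gamma)$, the minimum over all self-overlapping decompositions $\Gamma$ of $\gamma$, with the minimum attained; the existence of at least one such $\Gamma$ is part of the statement. We prove the two inequalities $\Area_H(\gamma)\le\Area_\Gamma(\gamma)$ for \emph{every} self-overlapping decomposition $\Gamma$, and $\Area_H(\gamma)\ge\Area_{\Gamma_0}(\gamma)$ for \emph{some} decomposition $\Gamma_0$.

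\emph{A decomposition never beats a homotopy.} Let $\Gamma$ have induced subcurves $\gamma_1,\dots,\gamma_\ell$. Since the paired vertices are pairwise unlinked, the smoothings they prescribe are laminar in the cyclic order on $[\gamma]_V$, so we may perform them one at a time, each step splitting off an innermost subcurve $\gamma_j$ that meets the rest of $\gamma$ only at its smoothing point $v_j$. As $\gamma_j$ is self-overlapping there is a null-homotopy of $\gamma_j$ of area $\Area_W(\gamma_j)=\Area_H(\gamma_j)$ fixing $v_j$; running it contracts $\gamma_j$ to $v_j$ and leaves a curve with one fewer paired vertex whose inherited decomposition has area $\Area_\Gamma(\gamma)-\Area_H(\gamma_j)$. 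Concatenating the homotopies produced by induction on $\ell$ gives a null-homotopy of $\gamma$ of total area $\sum_j\Area_H(\gamma_j)=\Area_\Gamma(\gamma)$, so $\Area_H(\gamma)\le\Area_\Gamma(\gamma)$.

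\emph{A minimum homotopy produces a decomposition.} We argue by induction on the number of vertices of $\gamma$, via the claim: if $\gamma$ is not self-overlapping, there is a vertex $x$ such that the two subcurves $\gamma',\gamma''$ obtained by the orientation-respecting smoothing at $x$ satisfy $\Area_H(\gamma)=\Area_H(\gamma')+\Area_H(\gamma'')$. Granting the claim, $\gamma'$ and $\gamma''$ each have strictly fewer vertices, so by induction each has a self-overlapping decomposition realizing its minimum homotopy area; together with the pair $\{x\}$ these form a vertex pairing of $\gamma$ --- pairwise unlinked, since in $[\gamma]_V$ the vertices of $\gamma'$ and of $\gamma''$ sit in the two arcs cut out by the two copies of $x$ --- into self-overlapping subcurves, of area $\Area_H(\gamma')+\Area_H(\gamma'')=\Area_H(\gamma)$. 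The base case $\gamma$ self-overlapping uses the empty pairing. (A non-self-overlapping curve always has a vertex, since a simple curve bounds an embedded disk.)

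\emph{The cutting claim, and the main obstacle.} Fix a minimum-area null-homotopy and view it, after collapsing the terminal circle and reparametrizing its trace, as a map $F\colon\mathbb{D}^2\to\R^2$ with $F|_{\partial\mathbb{D}^2}=\gamma$ and $\Area(F)=\Area_H(\gamma)$. Bring $F$ to a normal form: replace it by a tame representative (piecewise smooth, with only fold and cusp singularities) and apply area-nonincreasing surgeries --- undoing finger moves, deleting fold circles, removing bigons over which $F$ is two-to-one --- which minimality forbids from strictly decreasing the area. After this, $F$ is a composition of elementary homotopies, each contracting an immersed subdisk of $\mathbb{D}^2$ to a point, and the first such subdisk is glued to the rest along an arc whose two endpoints are the two preimages under $F|_{\partial\mathbb{D}^2}$ of a single vertex $x$ of $\gamma$. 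Cutting $\mathbb{D}^2$ along this arc realizes the orientation-respecting smoothing of $\gamma$ at $x$, with $F$ restricting to fillings of $\gamma'$ and $\gamma''$ whose areas sum to $\Area(F)$; since each restriction has area at least the minimum homotopy area of its boundary, $\Area_H(\gamma)=\Area(F)\ge\Area_H(\gamma')+\Area_H(\gamma'')$, and the reverse inequality is immediate by concatenating homotopies, proving the claim. The real obstacle is establishing this normal form: that the non-immersive part of a minimum-area homotopy is confined to folds that peel off one at a time as contractions of immersed disks and that meet $\partial\mathbb{D}^2$ over vertices of $\gamma$, so that the induced cuts are legal, pairwise-unlinked vertex smoothings. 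This step needs genuine input from the topology of fold maps in the Whitney--Titus tradition rather than combinatorics alone; equivalently, the bookkeeping can be carried out on Blank's combinatorial word of $\gamma$.
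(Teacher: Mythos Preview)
The paper does not prove this theorem; it is quoted from Fasy, Karako\c{c}, and Wenk \cite{fkw2017} and used throughout as a black box. So there is no proof in this paper to compare your attempt against.

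Your sketch follows the strategy of the original source: the easy inequality $\Area_H(\gamma)\le\Area_\Gamma(\gamma)$ by concatenating the subcurve homotopies, and the hard inequality by induction on the number of vertices via a cutting claim --- that any non-self-overlapping $\gamma$ has a vertex $x$ whose orientation-respecting smoothing is additive for minimum homotopy area. The inductive scaffolding around that claim is correct, and the unlinkedness argument for the resulting pairing is fine.

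The gap is exactly where you say it is, but it is more serious than your phrasing suggests. Two points. First, you write ``Fix a minimum-area null-homotopy''; the existence of a minimizer, as opposed to an infimum, already requires an argument (this is part of what \cite{fkw2017} establishes). Second, and more importantly, even granting a tame representative $F$ with only fold and cusp singularities, the assertion that after your surgeries ``the first such subdisk is glued to the rest along an arc whose two endpoints are the two preimages \dots\ of a single vertex $x$ of $\gamma$'' is precisely the content of the theorem, not a consequence of tameness. Why should the fold locus meet $\partial\mathbb{D}^2$ at all, and why at matched preimages of a single transverse self-intersection rather than at arbitrary boundary points? The normal form you invoke --- a minimum-area filling decomposing as a laminar family of immersed disks glued along arcs over vertices --- is essentially a restatement of the self-overlapping decomposition you are trying to produce. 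Your final paragraph concedes this honestly, but as written the cutting claim is asserted rather than argued; the ``area-nonincreasing surgeries'' you list do not by themselves force the fold arcs to land on vertices of $\gamma$.
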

%

% Given the above theorem one might think we should look for a self-overlapping decomposition with minimum area directly, which
% \todo{Remove or complete the above sentence.}
%
% The above theorem implies if one finds the minimum area over all self-overlapping decompositions, then one has found the minimum homotopy area.
% \hsien{Talk about how no poly-time algorithm is known, and why ours requires the
% existance before the algorithmic; may in the introduction.}

%!TEX TS-program = pdflatex
%!TEX encoding = UTF-8 Unicode
%!TeX spellcheck = en-US
%!TeX root = ..\socg23.tex

%----------------------- Curves_to_Words-------------------------------------
\section{From Curves to Words}
\label{sec:curve-to-word}

In order to work with plane curves, one must choose a \emph{representation}.
%There are several natural ways to represent plane curves in the literature.
%Examples include
%
%\emph{polygonal curves}, consisting of edge segments and vertices at the joints
%of every pair of consecutive segments;
%
%\emph{closed walks} on some underlying combinatorial surface; and
%
%\emph{angle-and-length sequences}, often used in the complex plane.
An important class of representations for plane curves are the various
\emph{combinatorial words}.
One example is the \emph{Gauss code}~\cite{gauss}.
%To construct the Gauss code from a curve, assign each crossing a label;
%then, traverse the curve and record the sequence of labels encountered,
%possibly also with the direction of how the second strand passes through
%the first at each intersection.
Determining whether a Gauss code corresponds to an actual plane curve is
one of the earliest computational topology questions
\cite{erickson-note}.

A plane curve (and its homotopic equivalents) can also be viewed as
a word in the fundamental group $\pi_1(\gamma)$ of $\curve$~\cite{blank,poe-eic1-1968,nie2014}.
If we put a point $p_i$ in each bounded face $f_i$, the curve~$\curve$ %as a space
is generated by the unique generators of each~$\RR^2 - \set{p_i}$.
Nie~\cite{nie2014} represents curves as words in the fundamental group to find the minimum area swept out by contracting a curve to a point.
If the curve lies in a plane with punctures, one can define the
\emph{crossing sequence} of the curve with respect to a \emph{system of arcs},
cutting the plane open into a simply-connected region.
%An example that uses the crossing sequences is given by Blank in
Blank~\cite{blank} represents curves using a crossing sequences to determine if a curve is self-overlapping.
%---that is, the boundary of an immersed disk---
%by checking whether the combinatorial word,
%constructed by drawing arcs from each face to infinity then traversing the curve and recording
%the signed intersection sequence of the curve and the arcs, satisfies certain groupability condition.  (See Section~\ref{SS:norm} and Theorem~\ref{thm:blank} for details.)
%In other words, he created the system of arcs based on the input curve $\gamma$,
%viewed as a space by adding a puncture to each face of $\gamma$.
While Blank constructed the words \emph{geometrically} by drawing arcs and
Nie defined the words \emph{algebraically}, the dual view between the system
of arcs and fundamental group suggests that the resemblance between Blank
and Nie's constructions is not a coincidence.

In this section,
%we show that Blank and Nie's word constructions are, in fact, equivalent under the right .
we describe the construction by Blank;
then, we interpret Blank's construction as a way of choosing the basis for the fundamental group under further restriction~\cite{poe-eic1-1968}.
We prove that the Blank word is indeed unique when the restriction is enforced,
providing clarification to Blank's original definition.
We give a complete description of Nie's word
construction and prove that Nie's word and Blank's word are equivalent.
%We demonstrate the one-to-one correspondence between Blank's word and Nie's word,
%thus establishing their equivalence as combinatorial words.

% we describe the assumptions required in order for the words to be well-defined.
% We then prove the equivalence between Blank's word and Nie's word under such assumptions.
% The equivalence has been explored by Po\'enaru~\cite{poe-eic1-1968};
% however our combinatorial one-to-one
% correspondence provides a geometric way of interpreting Nie's work.

\subsection{Blank's Word Construction}
\label{sec:blank}

We now describe Blank's word construction~\cite[page~5]{blank}.
Let $\gamma$ be a generic closed curve in the plane,
pick a point in the unbounded face of $\curve$, call it the \EMPH{basepoint~$p_0$}.
From each bounded face $f_i$, pick a \EMPH{representative point $p_i$}.
%\todo{BTF/CW: perhaps we should have picked them before, not here}
%
Now connect each $p_i$ to $p_0$ by a simple path in such a way that no two paths
intersect each other.
We call the collection of such simple paths a \EMPH{cable system},
denoted as \EMPH{$\Pi$}, and
each individual path \EMPH{$\pi_i$} from $p_i$ to $p_0$ as
a~\EMPH{cable}.

Orient each $\pi_i$ from $p_i$ to $p_0$.
Now traverse $\curve$ from an arbitrary \EMPH{starting point} of $\curve$ and
construct a cyclic word by writing down the indices of $\curve$ crossing the
cables $\pi_i$ in the order they appear on $\curve$; each index $i$ has a
\emph{positive} sign if we cross $\pi_i$ from right to left and a
\emph{negative} sign if from left to right.
We denote negative crossing
with an overline $\str{\overline{i}}$.  We call the resulting combinatorial word
over the faces a \EMPH{Blank word} of $\curve$ with respect to $\Pi$, denoted as
\EMPH{$[\curve]_B(\Pi)$}.  \figref{blank-word} provides an example of Blank's
construction.

\begin{figure}[h!]
    \captionsetup[subfigure]{justification=centering}
    \centering
    \begin{subfigure}[t]{0.3\textwidth}
        \includegraphics[width=\textwidth]{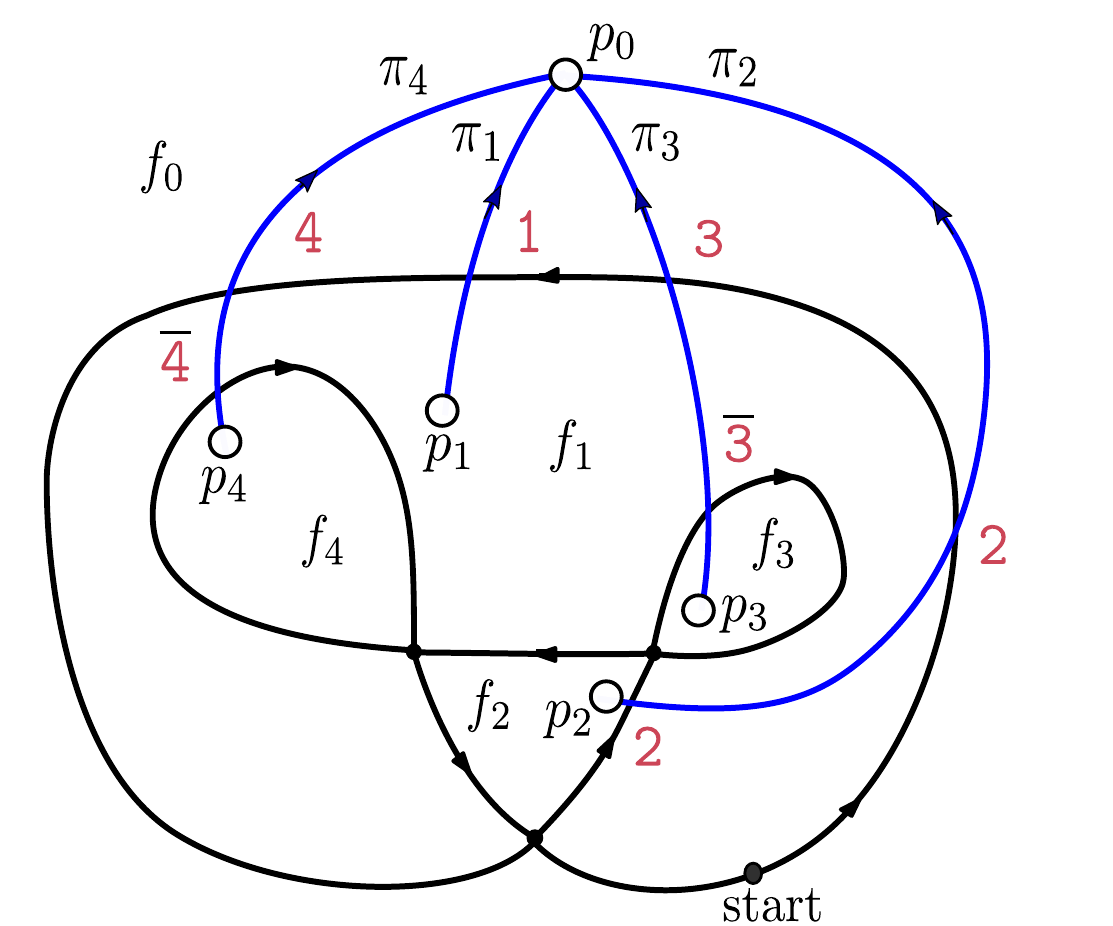}
         \subcaption{}\label{fig:blank1}
    \end{subfigure}
    \hspace{1cm}
    \begin{subfigure}[t]{0.3\textwidth}
        \includegraphics[width=\textwidth]{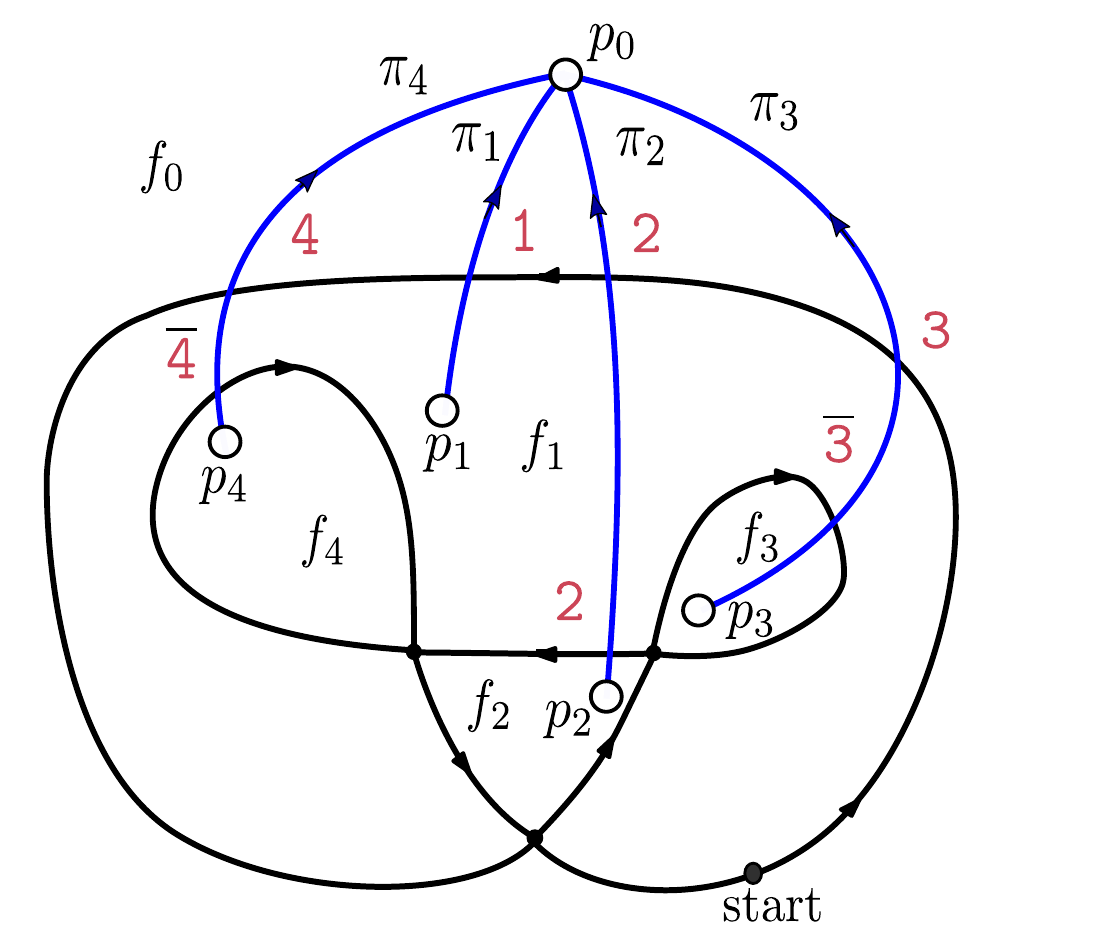}
        \subcaption{}\label{fig:blank2}
    \end{subfigure}

    \caption{
        (\subref{fig:blank1}) A curve $\gamma$ with labeled faces and edges, $\Pi_a$ is drawn in blue.
        The Blank word of~$\gamma$ corresponding to $\Pi_a$ is $[\gamma]_B(\Pi_a) = 			[\str{23142\overline{3}\overline{4}}]$.
        (\subref{fig:blank2}) The same curve with a different choice of cables $\Pi_b$.
        The corresponding Blank word is~$[\gamma]_B(\Pi_b) =~[\str{3214\overline{3}2\overline{4}}]$.
    }
    \label{fig:blank-word}
\end{figure}

% \begin{TODO}
%     HC: I just strip out the shortest path assumption from the cable systems.
%     Check later sections whenever we talk about cable system do we need
%     shortest-paths, and add the adjective when necessary.
% \end{TODO}

A word $w$ is \EMPH{reduced} if there are no two consecutive symbols in $w$ that are identical
and with opposite signs.
We can enforce every Blank word to be reduced by imposing the following  \EMPH{shortest path assumption}:
each cable has a minimum number of intersections with~$\gamma$ among all paths from $p_i$ to $p_0$.
%
% \begin{itemize}\label{I:shortest}
%     \item each cable has a minimum number of intersections with $\gamma$ among all paths from
%         $p_i$ to $p_0$.\todo{BTF/CW: this is a bulleted list with only one item.
%         needs to be structured a bit better. also, what `shortest' is needs to
%         be defined.} \hsien{``Shortest'' means minimum number of intersections.}
% \end{itemize}
%
A simple proof~\cite{blank,frisch10} shows that if $\Pi$ satisfies the shortest path assumption,
the corresponding Blank word with respect to $\Pi$ is reduced.
However, the choice of the cable system, and how it affects the constructed Blank word, was never explicitly discussed in the original work (presumably
because for the purpose of detecting self-overlapping curves, any cable system
satisfying the shortest path assumption works).
In general, reduced Blank words constructed from different cable systems for the same curve are not identical, see \figref{blank1} and \figref{blank2} for an example.
In this paper, we show that if the two cable systems have the same \emph{cable ordering}--the (cyclic) order of cables around point $p_0$ in the unbounded face--then their corresponding
(reduced) Blank words are the same, under proper assumptions on the cable~system.
%
%  \carola{Where does the isotopy come from? How is it defined? It's an isotopy between what -- two cable systems? Where do we need this? I'm confused.}
%  \hsien{Isotopy is between two cable systems.  They are given, later by the algorithm.  The reason is the make sure that the reduced Blank word is always well defined one you specified its isotopy class, so that actually location of the cables does not matter.}

Our first observation is that the Blank words are invariant under
cable isotopy; therefore the cable system can be specified up to isotopy.  %The proof is in \appendref{words}.

\begin{restatable}[Isotopy Invariance]{lemma}{blankisotopy}
%\begin{lemma}[Isotopy Invariance]
\label{lem:isotopy}

    	The reduced Blank word is invariant under cable~isotopy.

\end{restatable}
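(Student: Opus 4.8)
The plan is to show that an ambient isotopy of the cable system does not change the reduced Blank word, by a standard "generic position + local moves" argument. First I would make precise what "cable isotopy" means: a continuous family $\Pi_t$, $t\in[0,1]$, of cable systems, each a collection of pairwise disjoint simple paths $\pi_i(t)$ from $p_i$ to $p_0$, with the basepoint and representative points fixed, and with the endpoints on $p_0$ keeping the same cyclic order (so that the cable ordering is preserved — this is implicit since the isotopy is of the whole system and the $p_i$ don't move). The curve $\gamma$ is held fixed throughout. The Blank word is the sequence of signed indices recording crossings of $\gamma$ with $\bigcup_i \pi_i(t)$, read along $\gamma$. I want to argue this cyclic word, after reduction, is independent of $t$.

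Next I would put the family in generic position relative to the fixed curve $\gamma$: for all but finitely many values of $t$, the cables $\pi_i(t)$ meet $\gamma$ transversally in finitely many points, away from the vertices of $\gamma$ and away from the points $p_i,p_0$; at the finitely many exceptional parameters exactly one of a short list of elementary events occurs. The events to enumerate are: (i) a cable passes transversally through a point of $\gamma$ (a pair of crossings of opposite sign is created or destroyed at a single point on $\gamma$ — a "finger move"/Reidemeister-II-type move between a cable and $\gamma$); (ii) a crossing point slides along $\gamma$ past a vertex of $\gamma$, or past another crossing point, changing only the position but not the index or sign; (iii) a crossing slides off the endpoint $p_i$ of a cable (this would change the word, so I must argue it cannot happen) — here I use that $p_i$ lies in the open face $f_i$ and $\gamma$ does not pass through $p_i$, together with the endpoint being fixed, so no crossing can be created or annihilated at $p_i$; similarly at $p_0$ in the unbounded face. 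I should also note that two distinct cables never cross each other at any time, by the definition of a cable system, so no inter-cable events occur. For each elementary event I then check the effect on the cyclic word: events of type (ii) are no-ops on the word; events of type (i) insert or delete a factor of the form $i\,\overline{i}$ or $\overline{i}\,i$ at a single location along $\gamma$, hence do not change the reduced cyclic word; events of type (iii) are impossible.

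I would then assemble these local observations: between consecutive exceptional parameters the combinatorial word is literally constant (only type-(ii) sliding happens, which does not alter the recorded letters), and across an exceptional parameter the word changes only by a free-reduction-invariant insertion/deletion of an adjacent cancelling pair. Since there are finitely many exceptional parameters in $[0,1]$, composing these steps shows the reduced Blank word at $t=0$ equals the reduced Blank word at $t=1$. Finally, I would remark that the shortest-path assumption is not needed for the statement itself (Blank words are invariant under isotopy whether or not they are reduced, once we pass to reduced words), but it is what guarantees the word is already reduced; and that the hypothesis preserving the cable ordering is automatic for an isotopy of the whole system, which is why the lemma is clean.

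The main obstacle is the genericity/transversality bookkeeping: carefully justifying that a generic isotopy decomposes into exactly the elementary moves listed, and that no "illegal" event (a crossing appearing or disappearing at $p_i$ or $p_0$, or two cables colliding) can occur. This is where one must use that the $p_i$ stay in the open faces and are fixed, that the cables remain pairwise disjoint by hypothesis, and a standard multijet transversality argument for the one-parameter family $(t,x)\mapsto \Pi_t(x)$ versus $\gamma$. Once that normal form is in hand, the effect-on-the-word analysis is routine.
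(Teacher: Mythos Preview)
Your proposal is correct and takes essentially the same approach as the paper: discretize the isotopy into a finite sequence of elementary local events and check that each one leaves the reduced cyclic word unchanged. The paper phrases this in the language of homotopy moves (no $1{\rightleftarrows}0$ self-loop move since cables are simple; the $2{\rightleftarrows}0$ bigon move between a cable and a strand of $\gamma$ inserts/deletes an adjacent cancelling pair; the $3{\rightarrow}3$ triangle move reorders strands along the cable but not cables along $\gamma$), while you set up the same case analysis via an explicit transversality argument and add the endpoint-safety check (your type~(iii)), but the content is the same.
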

\begin{proof}
    Let $\gamma$ be a curve.
    Discretize the isotopy of the cables and consider all the possible
    \emph{homotopy moves}~\cite{changErickson17} performed on $\curve$ and the
    cables involving up to two strands from $\curve$ and a cable, because
    isotopy disallows the crossing of two cables.
    No $\biarc10$ move---the move
    that creates/destroys a self-loop---is possible as cables do not
    self-intersect.
    Any $\biarc20$ move which creates/destroys a bigon is in
    between a cable and a strand from $\gamma$, which means the two
    intersections must have opposite signs, and therefore the reduced Blank word
    does not change.
    Any~$\arc33$ move which moves a strand across another
    intersection does not change the signs of the intersections, so while the
    order of strands crossing the cable changes, the order of cables crossed by
    $\gamma$ remains the same.  
    Thus the reduced Blank word stays the same.
\end{proof}
We remark that we can perform an isotopy so that the Blank words are reduced even
when the cables are not necessary shortest paths.
In the rest of the paper, we sometimes assume Blank words to be reduced based on the context.
%\hsien{Paragraph on we assume words to be reduced even without shortest path assumption, and justification by isotopy.}

\paragraph*{Manage the Cable Systems}
Next, we show that Blank words are well-defined once we fix the choice of basepoint $p_0$ and the cyclic cable ordering around $p_0$, as long as the cables are drawn in a reasonable way.
%
% For the purpose of demonstrating equivalence and proving uniqueness\todo{BTF/CW: of
% what. we just had a lemma that proved uniqueness ...}, we further
% restrict the cables to satisfy one extra condition.
%
Fix a tree-cotree pair $(T,T^*)$ of $\curve$, where the root of the
cotree is on $p_0$.
We say that a cable system $\Pi$ is \EMPH{managed} with respect to the cotree
$T^*$ if each path $\pi_i$ has to be a path on~$T^*$ from the root $p_0$ to the leaf $p_i$.
%
% \begin{itemize}\label{I:managed}
%     \item Each path $\pi_i$ has to be a path on $T^*$ from the root $p_0$ to the leaf $p_i$.
% \end{itemize}
%
Given such a collection of cotree paths, one can slightly perturb them to ensure that all
paths are simple and disjoint.\footnote{In other words, the cables are
\EMPH{weakly-simple}~\cite{tou-stsps-1989}.}
%\hsien{Do we need the shortest-path assumption in the definition?}
%\note{Better define using weak-simplicity, to avoid issue of changing the
%homotopy area.}
See \figref{cable-mgmt} for examples.
Not every cable system can be managed with respect to $T^*$, see \figref{no-mgmt} for an example.

\begin{figure}[htb]
    \captionsetup[subfigure]{justification=centering}
    \centering

    \hspace{.2cm}
    \begin{subfigure}[b]{0.2\textwidth}
        \includegraphics[width=\textwidth]{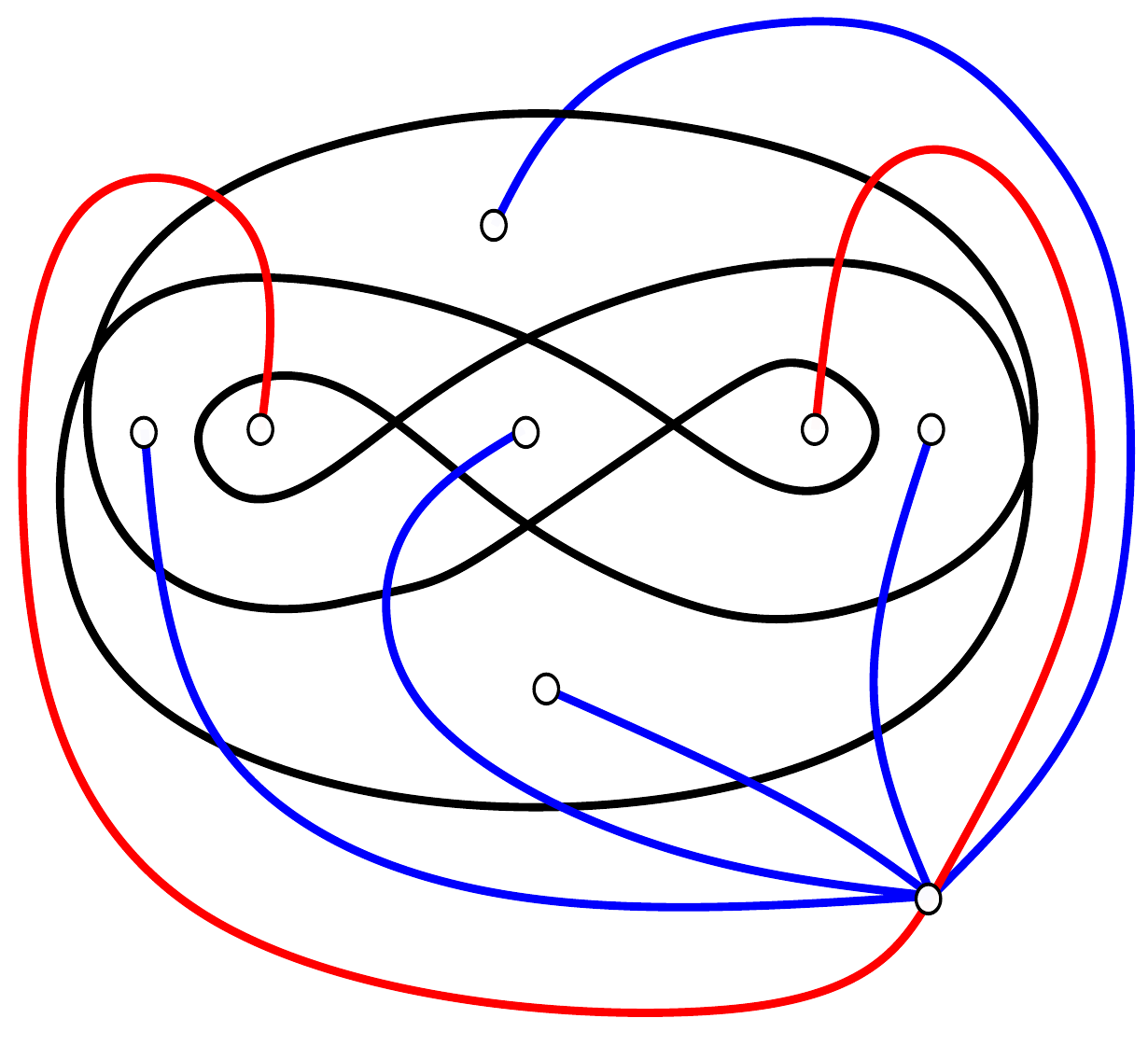}
        \subcaption{}\label{fig:mgmt-2}
    \end{subfigure}
    \hspace{.2cm}
    \begin{subfigure}[b]{0.2\textwidth}
        \includegraphics[width=\textwidth]{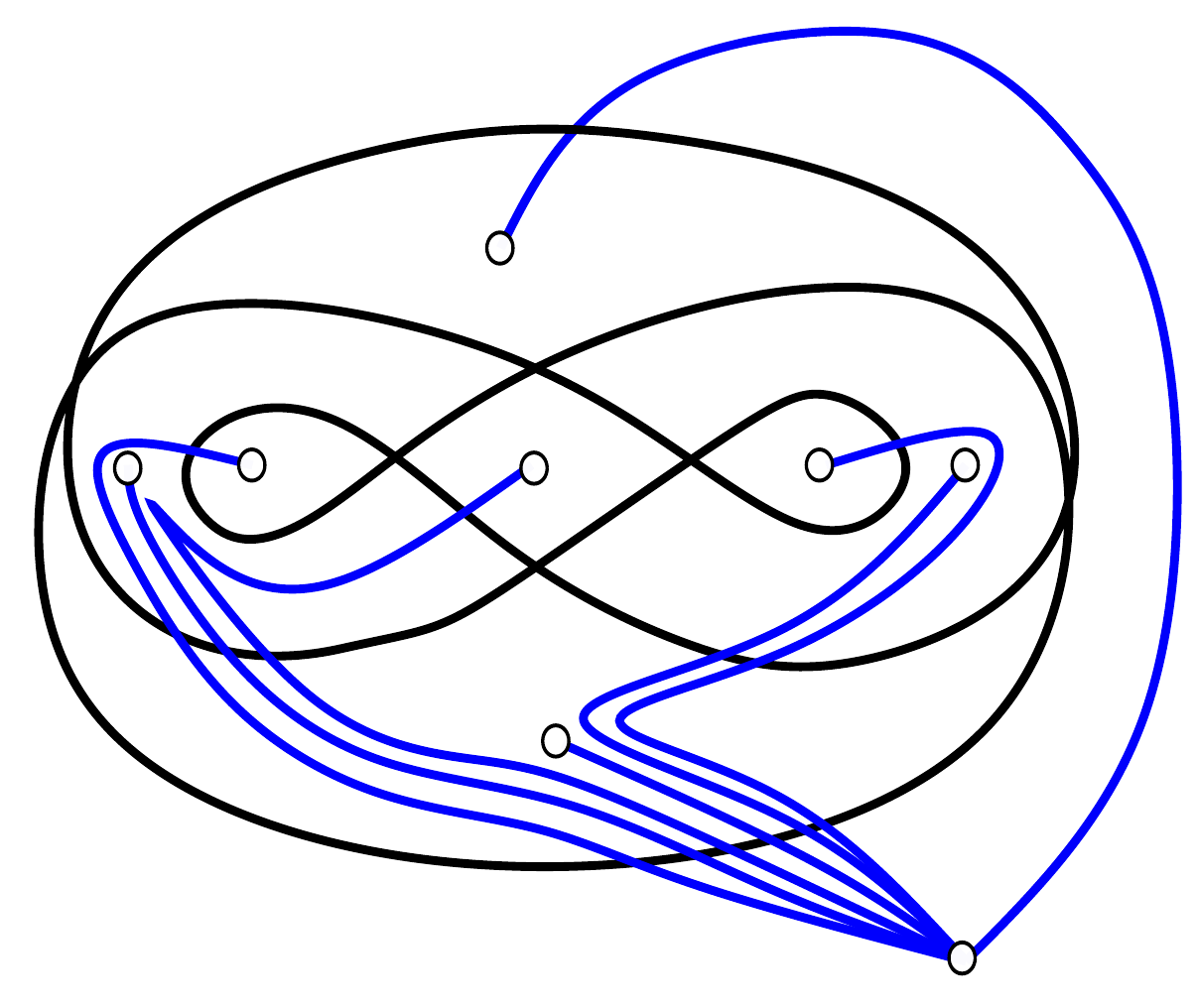}
        \subcaption{}\label{fig:mgmt-3}
    \end{subfigure}
    \hspace{.2cm}
    \begin{subfigure}[b]{0.2\textwidth}
        \includegraphics[width=\textwidth]{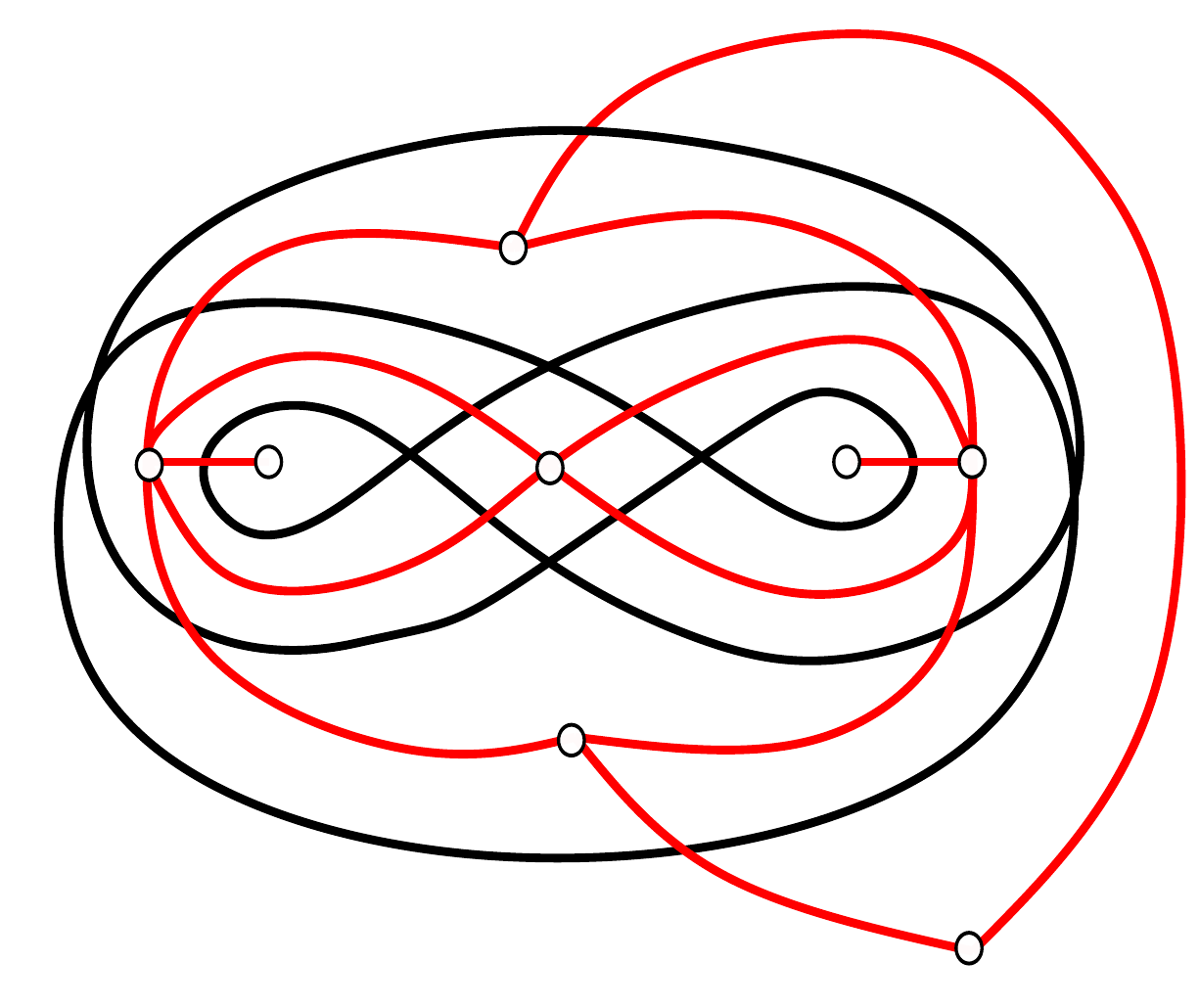}
        \subcaption{}\label{fig:mgmt-4}
    \end{subfigure}
    \hspace{.2cm}
    \begin{subfigure}[b]{0.2\textwidth}
        \includegraphics[width=\textwidth]{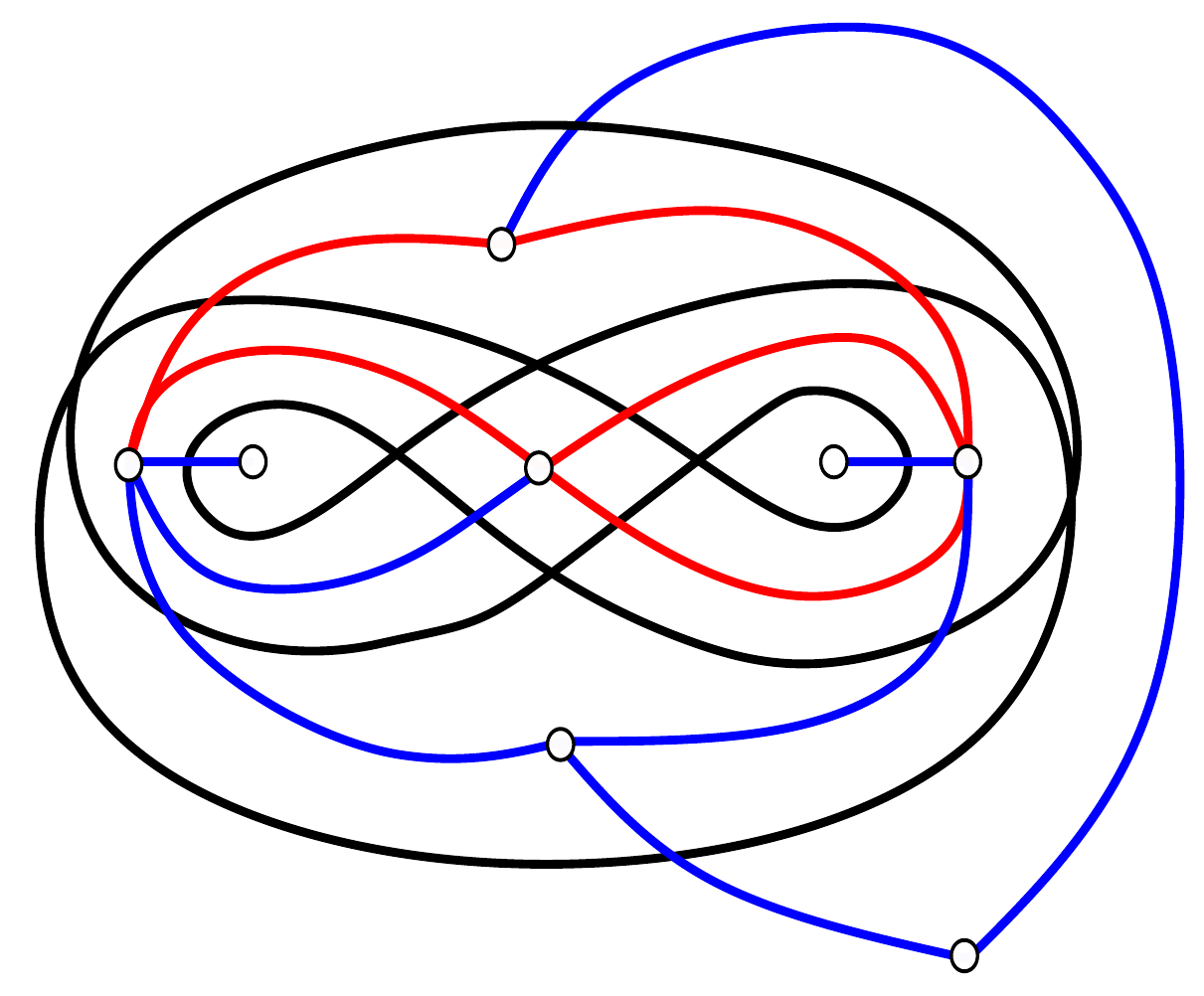}
        \subcaption{}\label{fig:mgmt-5}
    \end{subfigure}

    \caption{
        %(a) A curve $\gamma$.
        (\subref{fig:mgmt-2}) A cable system $\Pi_1$ on $\curve$ that is not managed.
        The red cables do not follow existing paths to the exterior face.
        (\subref{fig:mgmt-3}) A managed cable system $\Pi_2$ on $\curve$.
        (\subref{fig:mgmt-4}) The dual $\curve^*$ in red.
        (\subref{fig:mgmt-5}) The spanning tree $T^*$ in~$\curve^*$ generated by the managed cable system $\Pi_2$.
        }
\label{fig:cable-mgmt}
\end{figure}

%In \appendref{words},
We now show that if two managed cable systems satisfying shortest path assumption 
with identical cable ordering around $p_0$, their corresponding Blank words are the same.
Note that managed cable systems require a fixed tree-cotree pair.
We emphasize that the shortest path assumption is necessary; one can construct
two (not necessarily shortest) cable systems having the same cable ordering but
different corresponding reduced Blank words (see \figref{blank3} and
\figref{blank4}).

% \paragraph*{Contour Representation}
% To prove that the reduced Blank word is unique after fixing the cable ordering, we need to introduce
% an alternative view to the input curve $\gamma$.
% %
% Let $f$ be a face of $\gamma$.
% The \EMPH{depth} of face $f$ is the minimum number of times a path, from the interior of $f$ to the unbounded face, has to cross~$\gamma$.
% (If the path passes through a vertex of $\gamma$, it counts as two crossings.)
% %
% For each $j$, define region~$R_{\ge j}$ to be the closure of the union of faces whose depth is at least $j$.
% Each region might consist of several connected components; in addition, each component is formed by \emph{blocks} each bounded by a simple cycle, where two blocks can share only vertices (but not edges).
% We refer to the boundary cycles of the blocks as \EMPH{contours}.
% Denote the region bounded by a contour~$\kappa$ as \EMPH{$D_\kappa$}.
% The collection of contours for~$\gamma$ forms a nested family of simple cycles; contour
%$\kappa$ is the ancestor of contour $\kappa'$ if region $D_\kappa$ properly contains $D_{\kappa'}$.
% We define the depth of a contour $\kappa$ to be the depth of the face within $\kappa$ with smallest depth.

\begin{restatable}[Blank Word is Unique]{lemma}{blankunique}
    \label{lem:blank-unique}
    Given a curve $\curve$, if the basepoint $p_0$ and the cable ordering of
    a managed cable system $\Pi$ satisfying the shortest path assumption is
    fixed, then the Blank word of $\curve$ is unique.
\end{restatable}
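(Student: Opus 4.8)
The plan is to show that any two managed cable systems $\Pi$ and $\Pi'$ that share the basepoint $p_0$, the cotree $T^*$, and the cyclic cable ordering around $p_0$, and both satisfy the shortest path assumption, produce the same reduced Blank word. Since by Lemma~\ref{lem:isotopy} the reduced Blank word depends only on the isotopy class of the cable system (rel endpoints, and disallowing cable-cable crossings), it suffices to prove that $\Pi$ and $\Pi'$ are isotopic as cable systems. The key structural fact I would exploit is that a managed cable $\pi_i$ is, by definition, a path in the cotree $T^*$ from the root $p_0$ to the leaf $p_i$; since $T^*$ is a \emph{tree}, this path is \emph{unique} as a combinatorial object. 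So the only freedom in a managed cable system is (a) how the perturbation/"weak-simplicity" resolution is carried out when several cables run along the same cotree edge, and (b) the actual geometric drawing of each cotree path within its faces and across its edges.

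First I would argue that freedom of type (b) is irrelevant up to isotopy: within a single face of $\gamma$ (an open disk) any two arcs with the same endpoints and entering/leaving through the same edges in the same order are isotopic, and crossing a fixed edge transversally at one point is unique up to isotopy; concatenating these face-by-face isotopies along the common combinatorial cotree path gives an isotopy of the two drawings of $\pi_i$ that never crosses $\gamma$ except at the prescribed transverse points. Next I would handle freedom of type (a): when a bundle of cables shares a cotree edge, the weak-simplicity perturbation orders them transversally, and the order in which they appear along that edge is forced — it must agree with the order in which their target faces hang off the subtree on the far side of that edge, which in turn is determined by the embedding of $T^*$ in the plane, i.e.\ by the rotation system at each dual vertex. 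But the rotation system of $T^*$ at $p_0$ is exactly the given cyclic cable ordering, and at every interior dual vertex it is inherited from the plane embedding of $\gamma^*$, which is fixed. Hence the bundling order is the same for $\Pi$ and $\Pi'$ at every cotree edge, so the perturbations are isotopic as well.

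Then I would assemble these two observations into a single ambient isotopy carrying $\Pi$ to $\Pi'$ that at no time creates a crossing between two cables (because the cables stay inside disjoint tubular neighborhoods of the shared cotree, with consistent internal ordering), and apply Lemma~\ref{lem:isotopy} to conclude that $[\gamma]_B(\Pi) = [\gamma]_B(\Pi')$. Finally, I would note where the shortest path assumption enters: it is needed only to guarantee that the common Blank word is \emph{reduced} (so that "the Blank word" is unambiguous as a reduced cyclic word), matching the counterexample the text flags in Figures~\ref{fig:blank3}–\ref{fig:blank4}; the uniqueness up to isotopy argument above does not itself need it, but without reduction two managed systems with the same ordering could differ by spurious cancelling pairs. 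I expect the main obstacle to be step (a): making rigorous that the left-to-right order of a bundle of cables along a shared cotree edge is \emph{forced} by the plane embedding — one must carefully track how the subtree of $T^*$ beyond that edge sits in the plane and match its induced linear order on the leaves with the order of crossings, which is the one place the combinatorics of the rotation system has to be spelled out rather than waved at.
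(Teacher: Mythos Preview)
Your proposal is correct and shares the paper's high-level strategy: fix the cotree, argue that any two managed cable systems with the given cyclic ordering at $p_0$ are isotopic, and invoke Lemma~\ref{lem:isotopy}. The organization differs, though. The paper contracts the primal tree so that $\gamma$ becomes a family of nested loops, and then inducts on \emph{depth}: at each level it argues that the cables have a unique way (up to isotopy) to extend from the depth-$\ell$ loops to the depth-$(\ell{+}1)$ loops, using the shortest path assumption to guarantee that a cable whose target face has depth~$\ell$ actually terminates at level~$\ell$. You instead separate the freedom into ``which combinatorial cotree path'' (unique because $T^*$ is a tree) and ``what transverse bundle order along shared cotree edges'' (forced by the cyclic ordering at $p_0$ together with the planar rotation system of $T^*$). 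Your decomposition is a bit more structural and makes clearer that the real content is in the bundle-ordering step you flag as the obstacle; the paper's depth induction is more operational but leans on the shortest path assumption to make the layering work. Your side remark that shortest path is needed only to ensure the word is reduced is sharper than the paper's use of it---the paper invokes shortest path inside the induction, but as you observe, the isotopy argument itself goes through from managedness plus the fixed ordering alone. One small imprecision: the rotation system of $T^*$ at $p_0$ is the cyclic order of cotree \emph{edges}, which is coarser than the given cyclic order of \emph{cables}; you will need the full cable ordering (not just the edge ordering) to pin down the bundle order along the first cotree edge out of $p_0$, so be careful to distinguish the two when you write out step~(a).
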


\begin{proof}

    We will argue that once the basepoint $p_0$ and the order of cables in $\Pi$ around $p_0$ is
    fixed, all the drawings of $\Pi$ respecting the cable ordering lead to the same Blank word.
    %
    %Consider the (planar) dual graph of $\gamma$; the cables can be viewed as a collection
    %of simple paths in the dual.
    Because the cable system is managed, the tree-cotree pair of $\gamma$ are fixed and
    we can safely contract the primal tree and treat the
    graph as a collection of nested loops.
    If the path passes through a vertex of $\gamma$, it counts as two crossings.
    We prove that all the cables to the loops at certain depth have a fixed ordering by
    induction on the depth.
    This is sufficient as any two cable systems with the same cables and ordering on every
    loop of the same depth must be isotopic, thus by \lemref{isotopy} their Blank words are identical.
    Because of the shortest path assumption, there is only one way to draw the cables to the depth-1 contours.
    % Now if we denote the cable ordering of $\Pi$ as a cyclic permutation $\pi$, the cables
    %connecting to depth-1 faces partitions the rest of the symbols in $\pi$ (excluding those associated
    %with depth-1 faces) into a sequence of subwords $\pi_1, \dots, \pi_{k}$, where $k$ is the
    %number of depth-$1$ faces.
    % (Notice that some $\pi_i$ might be empty, and each $\pi_i$ is no longer cyclic.)
    % Imagine we extend the cables to the boundary of the depth-2 contours in a pairwise disjoint fashion; the cable orders on the depth-2 contours are uniquely determined.

    For any $\ell$, imagine all cables of depth at least $\ell$ are currently drawn from $p_0$
    to the depth-$\ell$ loops, where on each loop the collection of the cables are precisely
    those faced contained within the loop, and the cable ordering on the loops is fixed.
    Due to the shortest path assumption, every cable of depth $\ell$ has to terminate at their
    corresponding face.
    There is at most one unique way to extend each cable of depth $\ell$ to its representative
    point in the face while keeping all depth-$\ell$ cables disjoint and simple, up to isotopy of the cables.
    (If no such drawing exist, this particular cable ordering is not realizable as a cable system.) %\note{requires a proof}
    By \lemref{isotopy}, isotopy does not change the order the curve $\gamma$
    passing through these depth-$\ell$ cables.
    Now we partition the cables to faces of depth greater than $\ell$ based on
    children loops that contains the corresponding faces.
    There is one unique way to extend each cable of depth greater than $\ell$ to the
    depth-$(\ell+1)$ loops up to isotopy.
    Again by \lemref{isotopy}, isotopy does not change the order the curve
    $\gamma$ passing through the cables of depth more than $\ell$.
    By induction, the Blank word of $\gamma$ is unique.
\end{proof}

\begin{figure}[htb]
    \centering
    \includegraphics[width=.3\textwidth]{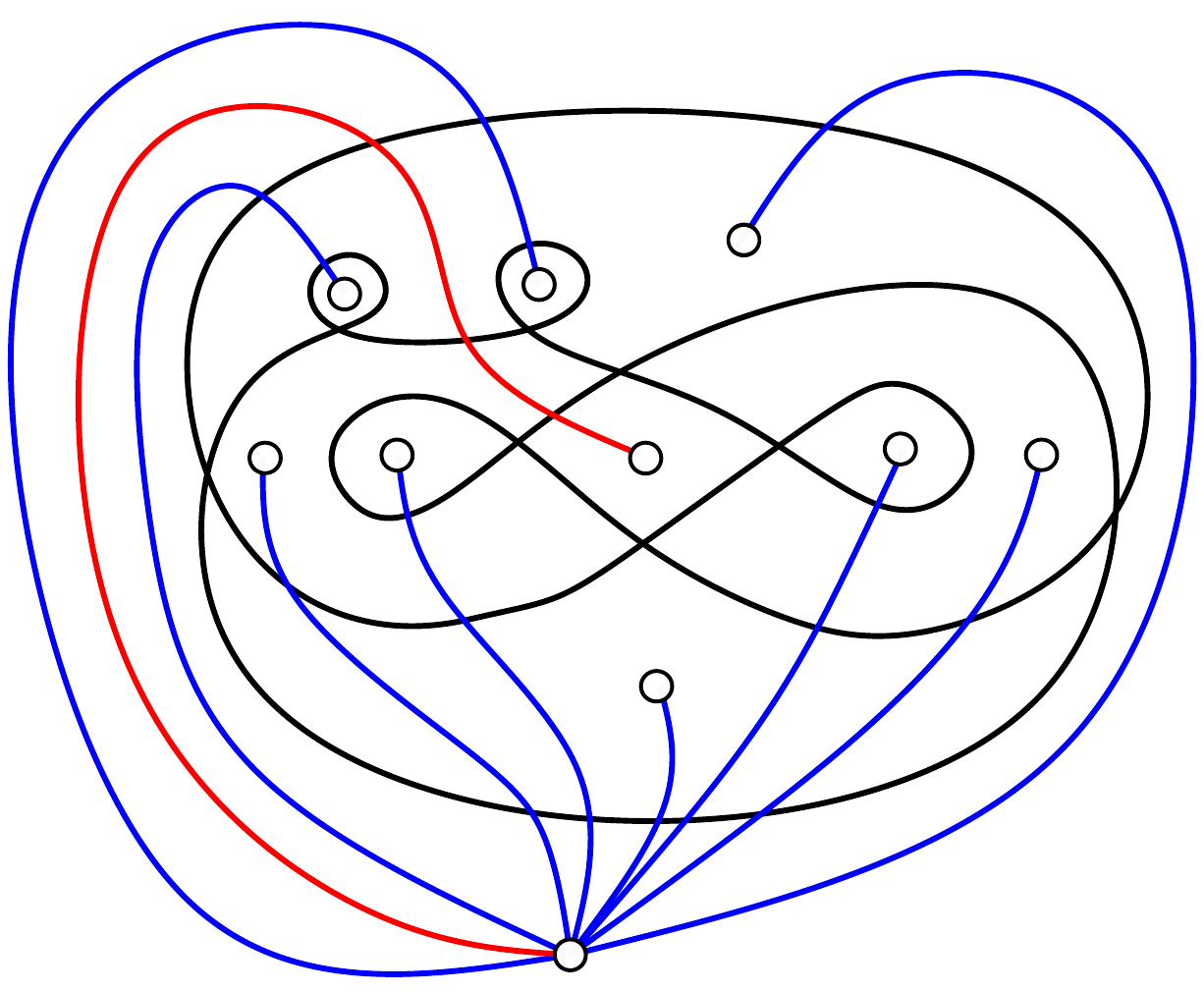}
    \caption{An example of a cable system with a cable, in red, that cannot be managed.}
    \label{fig:no-mgmt}
\end{figure}

\begin{figure}[h!]
    \captionsetup[subfigure]{justification=centering}
    \centering
    \begin{subfigure}[t]{0.35\textwidth}
        \includegraphics[width=\textwidth]{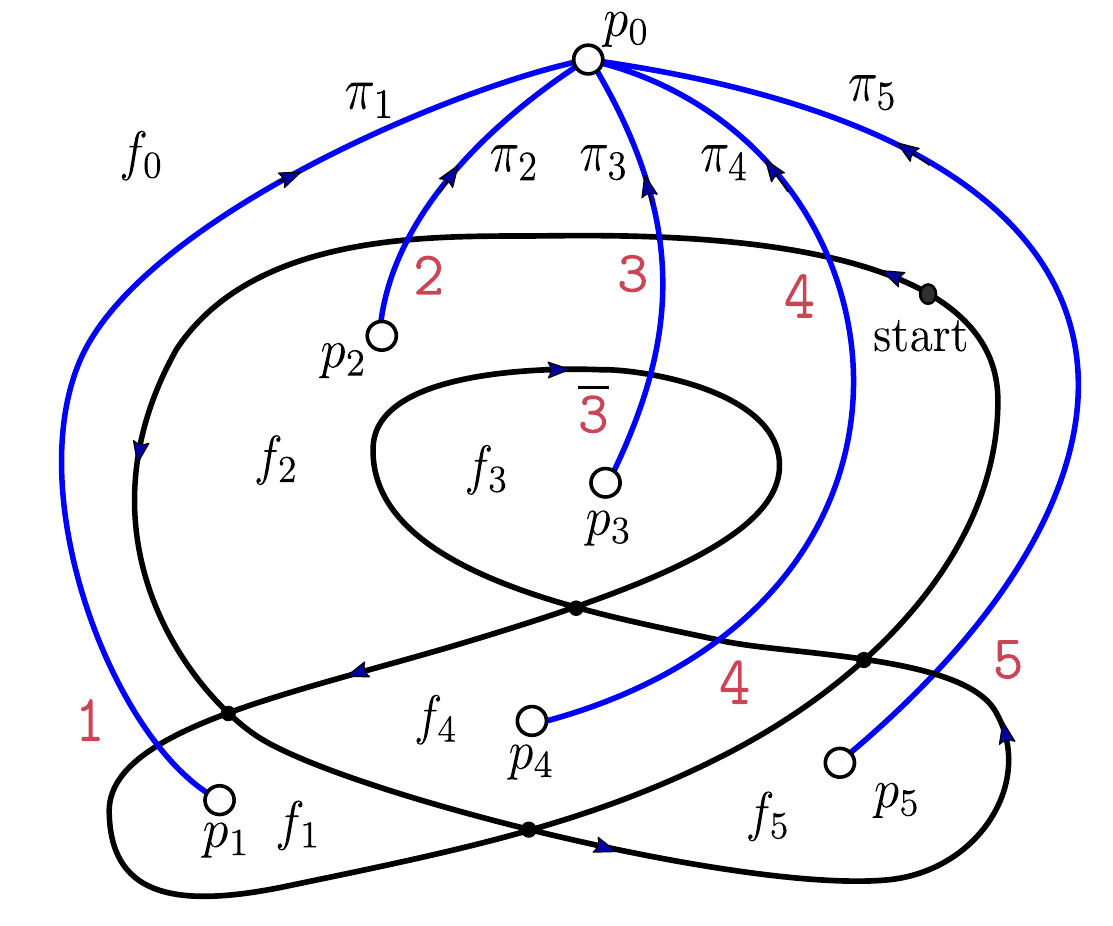}
        %\caption{A shortest path cable system.}
        \subcaption{}\label{fig:blank3}
    \end{subfigure}
    \hspace{1cm}
    \begin{subfigure}[t]{0.35\textwidth}
        \includegraphics[width=\textwidth]{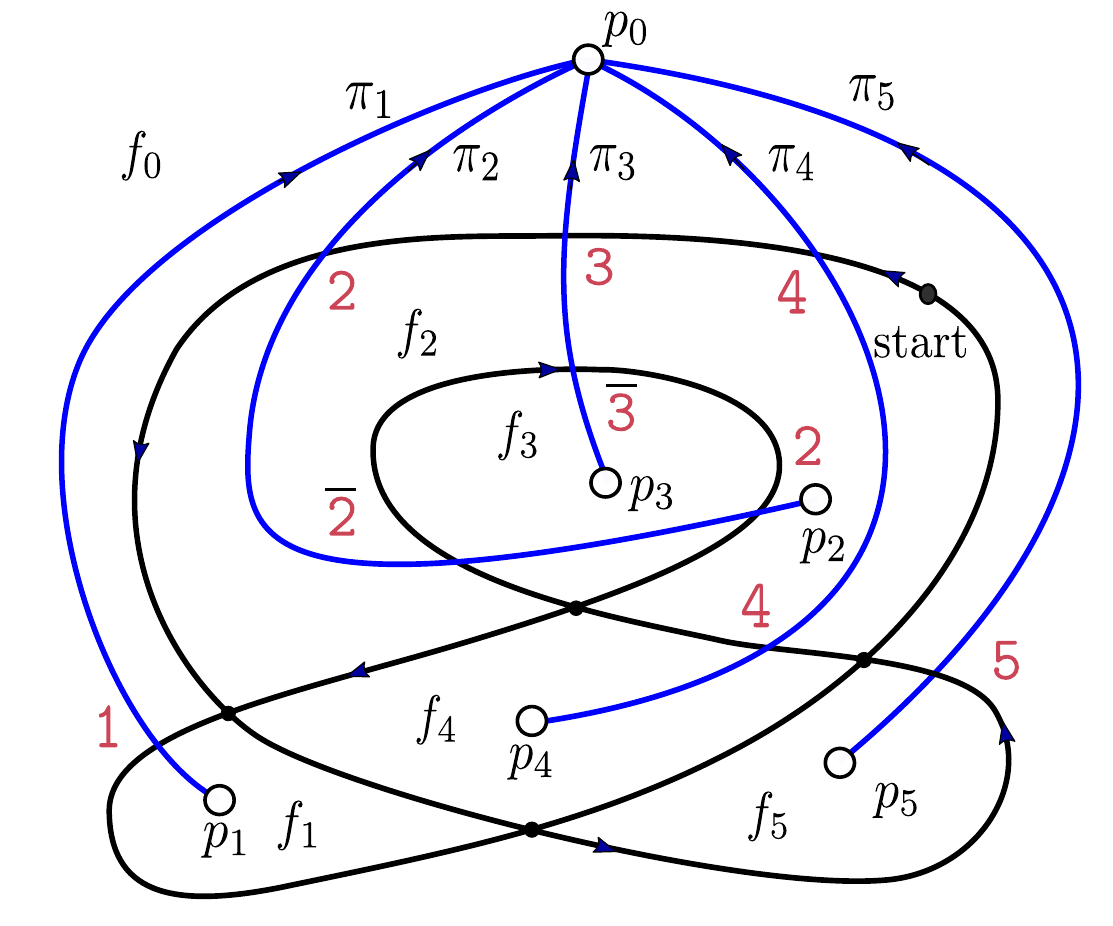}
        %\caption{A non shortest path cable system.}
        \subcaption{}\label{fig:blank4}
    \end{subfigure}
    \caption{
        (\subref{fig:blank3}) A curve with labeled faces and edges. A shortest path cable system $\Pi_a$ is drawn in blue.
        From the indicated start position, the Blank Word is $[\gamma]_B(\Pi_a) = [\str{43254\overline{3}1}]$.
        (\subref{fig:blank4}) The same curve with a cable system $\Pi_b$ that does not fulfill the shortest path assumption. The Blank word is $[\gamma]_B(\Pi_b) = [\str{43254\overline{2}\overline{3}21}]$.
        }
        \label{fig:blank-word-bfs}
\end{figure}

Therefore, given any plane curve $\gamma$, the Blank word is
well-defined (if exists), independent of the cable system after specifying a cyclic permutation
of all the bounded faces of $\gamma$.

\subsection{The Nie Word Construction}
\label{SS:Nie}

In an unpublished manuscript~\cite{nie2014,nie-emails}, Nie described how to
compute the minimum homotopy area between any two planar closed curves using the
language of geometric group theory.
Nie constructed a combinatorial word representing the planar closed curve,
followed by performing dynamic programming on the word based on a structure
called ``foldings'' (see Section~\ref{SS:norm}).
But first, let us describe the word construction.

Choose a point $p_i$ for each bounded face $f_i$ of $\curve$; denote the collection of points
%$\bigcup_i\set{p_i}$ 
as~\EMPH{$P$}.
Consider the punctured plane $X \coloneqq \RR^2 \setminus P$ and its fundamental group $\pi_1(X)$.
% Any closed curve $\gamma$ can be viewed as an element in $\pi_1(X)$, where all the curves homotopic to $\gamma$ is identified with $\gamma$ \note{do we want this?}.
Choose a set of generators $\Sigma$ for $\pi_1(X)$, where each $x_i$ in $\Sigma$ represents the
generator of $\pi_1(\RR^2\setminus \set{p_i}) \cong \ZZ$.

Now the fundamental group $\pi_1(X)$ is a free group over such generators, and
the curve $\gamma$ can be represented as a word over generators of $\pi_1(X)$.
However, there is more than one way to map each generator of
$\pi_1(\RR^2\setminus \set{p_i})$ into $\pi_1(X)$, due to the fact that in order
for $\pi_1(X)$ to be a group, one has to choose an endpoint $x_0$ and turn each
closed curve in $\pi_1(X)$ into a closed path connecting to $x_0$.
Nie never specified the choice of the connecting path because his algebraic formulation always gives the same answer
under any mapping of the generators.
%\hsien{Is this cable independence?  Check Nie's again.}

Nie's construction can also be interpreted combinatorially~\cite{nie-emails}.
Again consider the curve~$\curve$ as a four-regular plane graph.
Pick a tree-cotree pair $(T,T^*)$ of $\curve$ such that $T^*$ is a BFS-tree; naturally
 the tree $T$ is contractible.
For the sake of illustration, contract~$T$ into a single point~$t$; now each cotree edges is a
single closed path at $t$, enclosing at least one point in~$P$.
For our purpose of proving word equivalence, there are two natural sets of
generators for~$X \coloneqq \RR^2 \setminus P$:
\begin{itemize}\itemsep=0pt
    \item set of all cotree edges, and
    \item set of all face boundaries; i.e.\ sequences of cotree edges around each face containing $p_i$.
\end{itemize}
We now describe the change-of-basis between the two sets of generators in graph-theoretic terms.
Traverse $\curve$ from some arbitrary starting point and orient each edge of~$\curve$ accordingly.
Now, for each face $f_i$, define the \EMPH{boundary operator $\bdry$} by mapping face $f_i$ to the
signed cyclic sequence of edges around face $f_i$, where each edge is signed positively if it is
oriented counter-clockwise and negatively otherwise.

Now, write the curve $\gamma$ as a cyclic word over the cotree edges $T^*$ by traversing $\gamma$,
ignoring all tree edges in $T$.
We perform the following procedure inductively on the cotree $T^*$ to construct another cyclic word,
 this time as an element in the free group over the faces of~$\gamma$.
Starting from the leaves $f$ of $T^*$, rewrite each edge~$e$ bounding the face $f$ (that is, the
 dual of the unique edge connecting $f$ to its parent in~$T^*$) as a singleton word based on the
 index of $f$, with
 positive sign if edge $e$ is oriented counter-clockwise, or with negative sign otherwise.
Next, for any internal node $f$ of $T^*$, the boundary $\bdry f$ consists of a sequence of
(1) tree edges, (2) cotree edges to children of $f$ in $T^*$ denoted as $e_{1}, e_{2}, \ldots, e_{r}$,
and (3) (a unique) cotree edge to parent of $f$ denoted as $e_f$:
\[
    \bdry f = [e_f e_{1} e_{2} \ldots e_{r}].  %= [e_{j+1} \ldots e_{r} e_f e_{1} \ldots e_{j}].
\]
We can now inductively rewrite each child cotree edge $e_i$ as a free word $w_i$ over the faces
(and ignore all tree edges).
We emphasize that each word for the child cotree edge constructed inductively is a free word,
not a cyclic word.
Choose a particular but arbitrary way to break the cyclic sequence of faces and rewrite the equation:
\[
    e_f =  \bar{w}_{r} \cdot \cdots \cdot \bar{w}_{j+1} \cdot (\bar{w}_j)' \cdot \bdry f \cdot (\bar{w}_j)'' \cdot \bar{w}_{j-1}
     \cdot \cdots \cdot \bar{w}_1,
\]
where $\bar{w}_j = (\bar{w}_j)' (\bar{w}_j)''$ is a particular way of breaking the face word $\bar{w}_j$ into two.
%\note{Well, in fact $\bdry f$ might show up in the middle of a bundle of some children $e_i$. Rewrite.}
%
This gives us a free word over the faces for edge $e_f$, and thus by induction we have rewritten
$\gamma$ as a free word over the faces.
%\note{Blank word is cyclic?}
Finally, we can turn the free word back into a cyclic word, by observing that the cyclic permutation
of the constructed free word over the faces does not affect the element we are getting in
$\pi_1(X)$ (but as a side
 effect of choosing the basepoint $p_0$ of $\gamma$).

We call the resulting signed sequence of faces the \EMPH{Nie word} and denoted
as~$[\gamma]_N(\Sigma)$, where~$\Sigma$ is the choices we made when breaking up the
cyclic word at each cotree edge, referred to as a \EMPH{cycle flattening}.
Notice that the definition of $[\gamma]_N$ depends on how we choose to break the cyclic
edge sequences, and thus is not well-defined without specifying the choices.

\subsection{Word Equivalence}

Now we are ready to prove that the two words, one defined geometrically and the
other algebraically, are in fact equivalent.

\begin{theorem}[Word Equivalence]
    \label{lem:nie-cables}
    Let $\gamma$ be any plane curve.
    For a Nie word $[\gamma]_N(\Sigma)$ with a fixed cycle flattening~$\Sigma$, there is a
    managed cable system $\Pi$ such that the Blank word $[\gamma]_B(\Pi)$ is equal to $[\gamma]_N(\Sigma)$.
    Conversely, any managed cable system~$\Pi$ induces a cycle flattening $\Sigma$ such
    that $[\gamma]_B(\Pi)$ and $[\gamma]_N(\Sigma)$ are equal.
\end{theorem}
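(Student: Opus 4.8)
\noindent\emph{Proof plan.}
The plan is to set up an explicit correspondence $\Sigma \leftrightarrow \Pi$ between cycle flattenings and managed cable systems, for the fixed tree-cotree pair $(T,T^*)$ with $T^*$ a BFS-tree used in Nie's construction (so that managed cables automatically satisfy the shortest-path assumption and the resulting Blank word is reduced), and then to prove by one induction down $T^*$ that corresponding words agree as cyclic words. The conceptual reason the constructions coincide is that each simply writes the homotopy class of $\gamma$ in $\pi_1(\RR^2\setminus P,\,p_0)$ in a free basis. On the Blank side, $[\gamma]_B(\Pi)$ is the expansion of this class in the basis whose $i$-th element runs along $\pi_i$, once counterclockwise around $p_i$, and back, with the sign of each letter recording whether $\gamma$ crosses the cable from right to left; this is the standard fact that the crossing sequence against a disjoint system of arcs cutting a surface into a disk is the expansion in the dual basis, i.e.\ the ``choosing a basis'' viewpoint of~\cite{poe-eic1-1968}. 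On the Nie side, $[\gamma]_N(\Sigma)$ starts from the expansion of the same class in the cotree-edge basis (a basis by the tree-cotree fact recalled in Section~\ref{sec:background}), and the inductive rewriting along $T^*$ is exactly a change of basis into a basis of face-loops; the cycle flattening $\Sigma$ is precisely the data of how the path threading each face-loop back to $p_0$ is chosen.

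To make the correspondence precise I would, given a cycle flattening $\Sigma$, build a cable system $\Pi_\Sigma$ inductively down $T^*$ mirroring the rewriting: contract $T$ to a point and view $\gamma$ as nested loops as in the proof of Lemma~\ref{lem:blank-unique}; at an internal face $f$ with $\bdry f=[e_f\,e_1\cdots e_r]$, the choice $\Sigma$ makes when breaking this cyclic boundary (which child word to split, and where) dictates where the cable to $p_f$ separates from the bundle of cables bound for the subtrees below $f$. Since $f$ is a disk and this is a contiguous split of the cyclic order $e_1,\dots,e_r$, the routing exists and is unique up to isotopy, so $\Pi_\Sigma$ is well-defined, and by construction each of its cables runs along $T^*$, hence $\Pi_\Sigma$ is managed. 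Conversely, given a managed $\Pi$, contract $T$ and at each internal face read off where the cable to that face cuts the bundle of its descendants' cables; Lemma~\ref{lem:isotopy} shows the resulting flattening $\Sigma_\Pi$ is well-defined.

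For the word equality I would induct on $T^*$ with the invariant: writing $S_e$ for the set of faces in the subtree of $T^*$ below a cotree edge $e$, both words are obtained from the cotree-edge expansion of $\gamma$ by replacing each cotree edge $e$ with a word $u_e$ over $S_e$ in which every face of $S_e$ occurs exactly once, with one uniform sign, and in which the order is obtained from the orders in the children's words by inserting the letter of the face bounded by $e$ at the position prescribed by the local datum of $\Sigma$ (respectively by the stacking order of cables along $e$). The base case is ``$\gamma$ as a word over cotree edges $=$ crossing sequence of $\gamma$ with transversals of the cotree edges'', valid after contracting $T$; the inductive step matches one application of the rewriting rule for $e_f$ with the geometric move of pushing the cable to $p_f$ from the slot $e_f$ inward to $p_f$ while fanning the remaining strands out to $e_1,\dots,e_r$, with Lemma~\ref{lem:isotopy} absorbing the remaining isotopy freedom. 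Both directions of the theorem then follow, since under $\Sigma\leftrightarrow\Pi$ the words $u_e$ coincide for every~$e$.

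The main obstacle is making the dictionary between the algebraic rewriting and the geometric cable extension airtight. Two points need care. First, the sign bookkeeping: one must check that the uniform sign of $u_e$ on the Nie side (the orientation with which the cotree-loop $e$ winds around the punctures it encloses) matches the uniform sign of the corresponding Blank block (the direction in which $\gamma$ traverses the primal edge $e$ relative to the outward-oriented cables), consistently across all recursive levels. Second, one must verify that the per-face cycle-flattening choices are genuinely in bijection with the per-face routing choices, including the degenerate cases where the cut falls at $e_f$ itself, and that --- unlike an arbitrary cable ordering around $p_0$, which as noted in the proof of Lemma~\ref{lem:blank-unique} need not be realizable --- every cycle flattening yields a realizable managed cable system, because its choices are local and glue automatically along $T^*$. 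I expect the orientation conventions and this gluing check to be the delicate parts; the rest is a routine, if tedious, induction.
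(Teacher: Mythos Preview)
Your proposal is correct and follows essentially the same approach as the paper: fix the BFS tree-cotree pair, induct along $T^*$ showing that the Nie subword and the Blank subword associated to each cotree edge coincide, building the managed cable system face by face so that at each internal face the position of the new cable among its descendants' cables realizes the cycle-flattening choice, and observe that the construction reverses to give the converse. The paper phrases the induction as running from leaves to the root (so the base case is a leaf face rather than your ``$\gamma$ written over cotree edges equals the crossing sequence with the cotree transversals''), but this is only a difference in packaging; your added $\pi_1$-basis commentary and your explicit flagging of the sign and realizability checks are consistent with what the paper does implicitly.
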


\begin{proof}
    First, fix a tree-cotree pair $(T,T^*)$ for $\gamma$ such that $T^*$ is a BFS-tree.
    Orient the edges of the cotree $T^*$ so that it is rooted at some fixed basepoint $p_0$.
    We prove the following statement by induction on the nodes of $T^*$ from leaves to the root,
    which implies the theorem:
    \begin{quote}
        The Blank subword corresponding to any cotree edge $e$ is the same as the Nie subword
        corresponding to $e$.
    \end{quote}

    To prove the statement, we will construct the cables in $\Pi$ gradually from
    each face to~$p_0$,
    at each step stopping at the cotree edge $e$ in $T^*$.
    Let $f$ be an arbitrary non-root node in~$T^*$, and edge $e$ be the unique edge from $f$ to its
    parent in~$T^*$.
    If $f$ is a leaf, $e$ is the only edge in $\bdry f$ that is not in tree $T$.
    This means, when we write $\bdry f$ using edges not in~$T$, we have~$\bdry f = \pm e$, with
    positive sign if $e$ is oriented counter-clockwise and negative sign otherwise.
    We draw the cable from the representative point in face $f$ to $e$; there is only one possible
    way to draw the cable up to isotopy.

    If $f$ is not a leaf, let $e_1, \dots, e_r$ be other non-tree edges on $\bdry f$ besides $e$ in
    counter-clockwise order around $\bdry f$, \emph{flipping their orientation defined by traversing $\gamma$ if necessary}.
    By induction hypothesis, the Blank subword of $e_i$ is the same as its Nie subword;
    denote the Blank (or Nie) subword of $e_i$ as $w_i$.
    This suggests that as we traverse $e_i$, the cables in $\Pi$ seem is exactly equal to $w_i$.
    Now we need to draw the cable $\pi_f$ from the representative point of $f$ to edge $e$.
    By construction of the Nie subword and the given cycle flattening $\Sigma$, the Blank subword
    on~$e$ must be of the form
    \[
        \bar{w}_{r} \cdot \cdots \cdot \bar{w}_{j+1} \cdot (\bar{w}_j)' \cdot f \cdot
        (\bar{w}_j)'' \cdot \bar{w}_{j-1} \cdot \cdots \cdot \bar{w}_1,
    \]
    where $\bar{w}_j = (\bar{w}_j)' (\bar{w}_j)''$ is a particular way of breaking the face word
    $\bar{w}_j$ into two. (See Figure~\ref{fig:induction}.)
    Because of the shortest path assumption, the collection of symbols inside
    each~$w_i$
    corresponds to exactly the faces contained within the region formed by
    cotree edge~$e_i$
    and the primal tree $T$.
    Thus, we extend all the cables intersecting the edges $e_1, \dots, e_r$
    to $e$, and create the representative point of face $f$ within the subregion of $f$ bounded
    by the last cable in $\bar{w}_{r} \cdot \cdots \cdot \bar{w}_{j+1} \cdot (\bar{w}_j)'$ and the first
    cable in $(\bar{w}_j)'' \cdot \bar{w}_{j-1} \cdot \cdots \cdot \bar{w}_1$.
    This way we can draw the cable $\pi_f$ so that the corresponding Blank subword is equal to
    the Nie subword.
    By induction, we have $[\gamma]_B(\Pi) = [\gamma]_N(\Sigma)$ for the constructed cable
    system~$\Pi$, which is managed and satisfies the shortest path assumption by the choice
    of $T^*$ being a BFS-tree.

    From the above construction we can recover a cycle flattening from a given managed
    cable system satisfying the shortest path property, thus the converse holds as well.
    %\note{The converse.  Need to show that any position of the representative point in $f$ gives a cycle flattening.}
\end{proof}

\begin{figure}[h!]
    \centering
    \includegraphics[width=.35\textwidth]{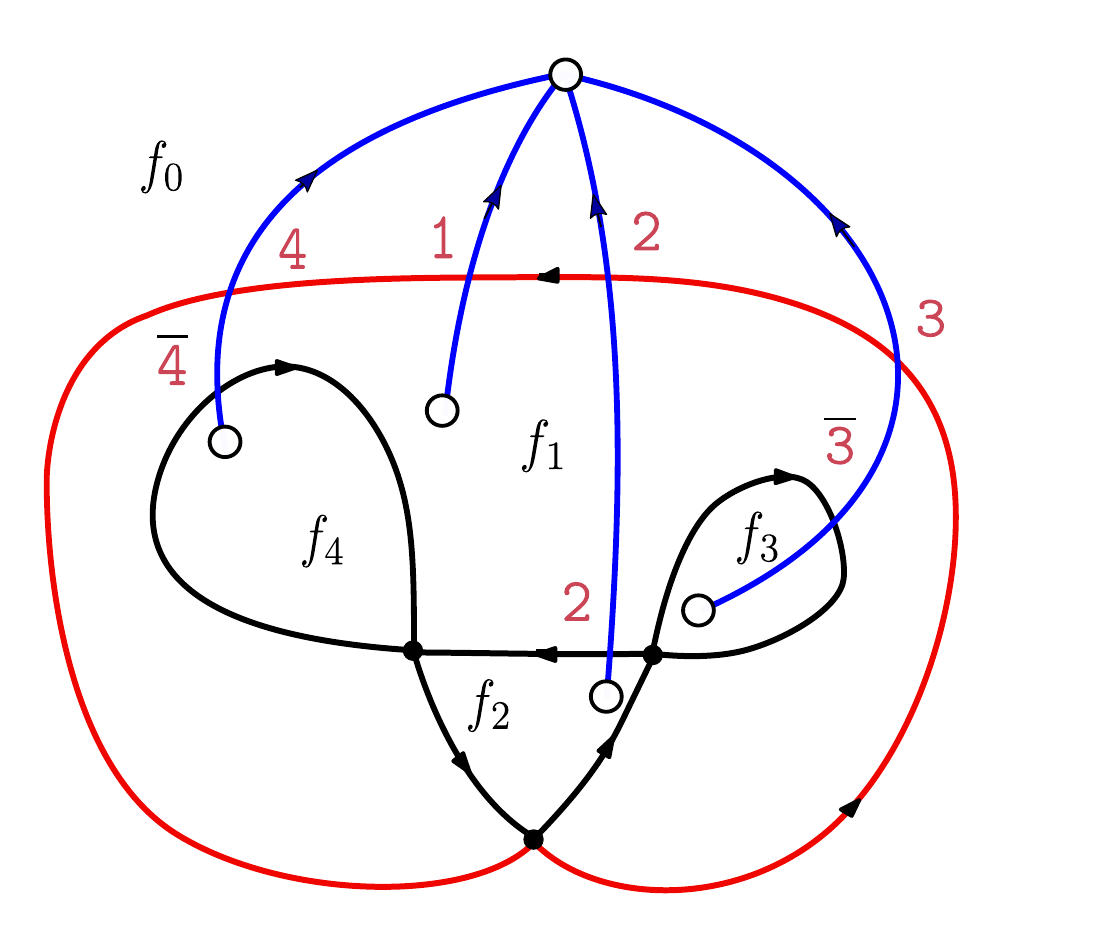}
    \caption{
        As we traverse the red edge $\gamma_r$ intersects $\str{3},\str{2}$, then $\str{1}$---the cable
        corresponding to $f_1$---then $\str{4}$.
        As we traverse the boundary of $f_1$ we traverse the red edge, followed by $\str{\overline{4}}$ and
        $\str{\overline{23}}$.
        This choice of cable system $\Pi$ corresponds to cycle flattening at $f_1$ as
        $\str{\overline{23}3214\overline{4}}$ by writing $f_1 = \overline{e_2} e_3 e_1 e_4$,
        or equivalently,  $\str{\overline{23}3214\overline{4}}=\overline{e_2} e_3 e_1 e_4$.
        The Blank subword on $\gamma_r$, with respect to the cycle flattening, is
        $e_1=\str{\overline{\overline{3}}}\str{2}\cdot\str{\overline{23}3214\overline{4}}\cdot\str{\overline{\overline{4}}}
        = \str{3214}$, as expected.
        }
        \label{fig:induction}
\end{figure}

% \begin{corollary}[Word Correspondence]\label{C:equivalence}
%     For any planar curve $\gamma$, there exists a managed cable system $\Pi$ and cycle flattening
%     $\Sigma$ such that there is a one-to-one correspondence between
%     the Blank word $[\gamma]_B(\Pi)$ and the Nie word $[\gamma]_N(\Sigma)$.
%     \todo{BTF/CW: isn't this the same as the theorem above?}
% \end{corollary}

\figref{order} gives an example demonstrating the one-to-one correspondence, for four different cable systems and cycle flattenings, on the same curve and
tree-cotree pair.
One consequence coming from the equivalence between two words and Lemma~\ref{lem:blank-unique}
%and the uniqueness of Blank word after fixing the cable ordering,
is that Nie word is uniquely determined after knowing
the subwords corresponding to cotree edges incident to the unbounded face.
This is not obviously from the definition of Nie word itself.

With this equivalence in hand, for the remainder of
the paper we refer to a Nie word or a Blank word of a curve $\gamma$ as the \EMPH{word}, denoted as $\word$ by dropping the subscripts.
Keep in mind, however, that the formal equivalence holds only when
the cable system is~managed.

\begin{figure}[h!]
    \captionsetup[subfigure]{justification=centering}
    \centering

    \begin{subfigure}[b]{0.24\textwidth}
        \includegraphics[width=\textwidth]{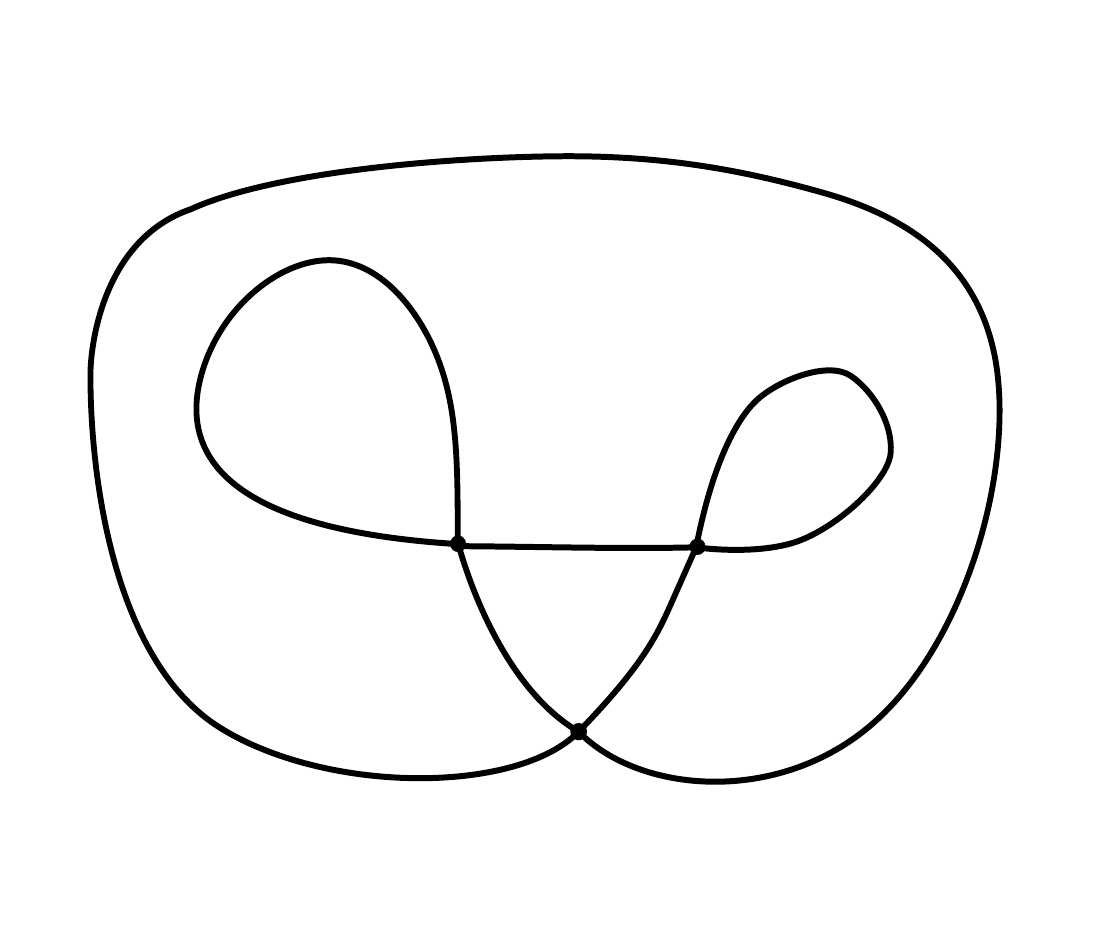}
       \subcaption{}\label{fig:order-base}
    \end{subfigure}
    \hspace{.5cm}
    \begin{subfigure}[b]{0.24\textwidth}
        \includegraphics[width=\textwidth]{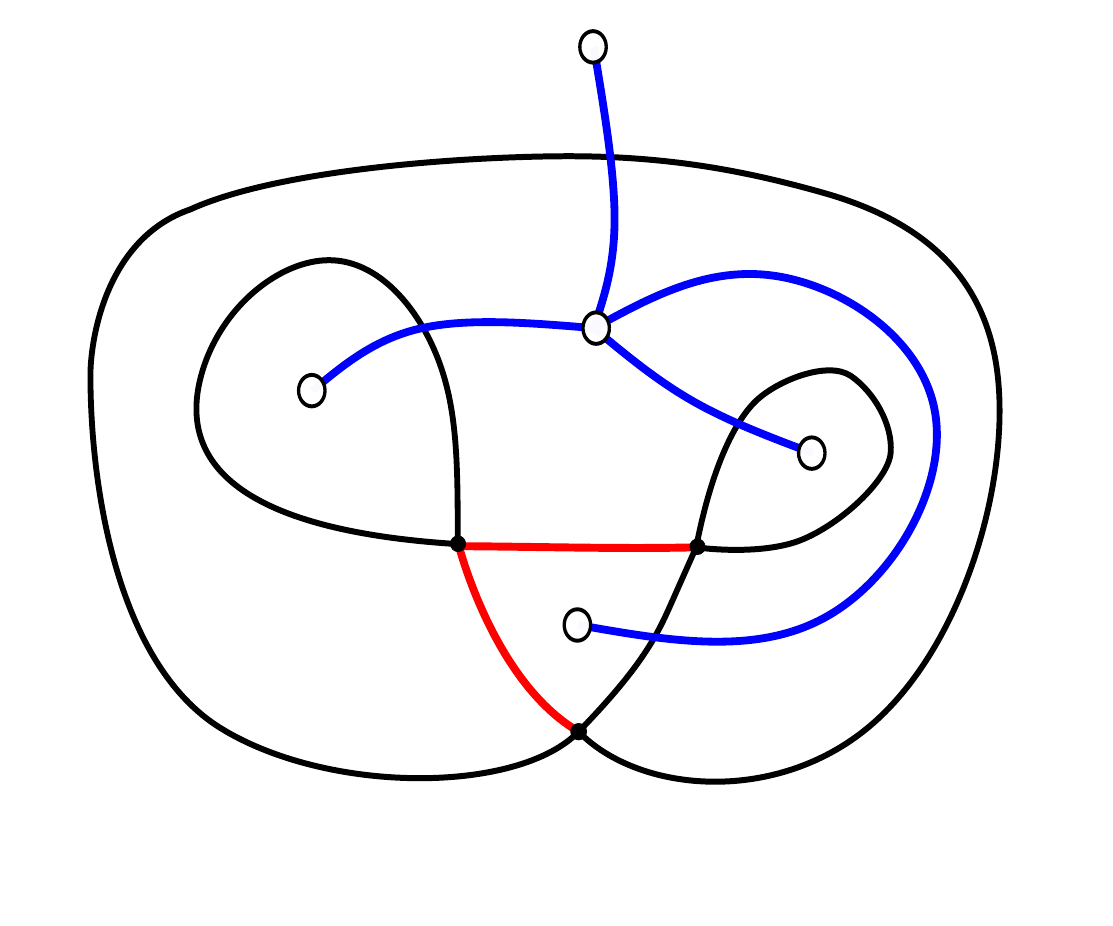}
        \subcaption{}\label{fig:order-trees}
    \end{subfigure}
    \hspace{.5cm}
    \begin{subfigure}[b]{0.24\textwidth}
        \includegraphics[width=\textwidth]{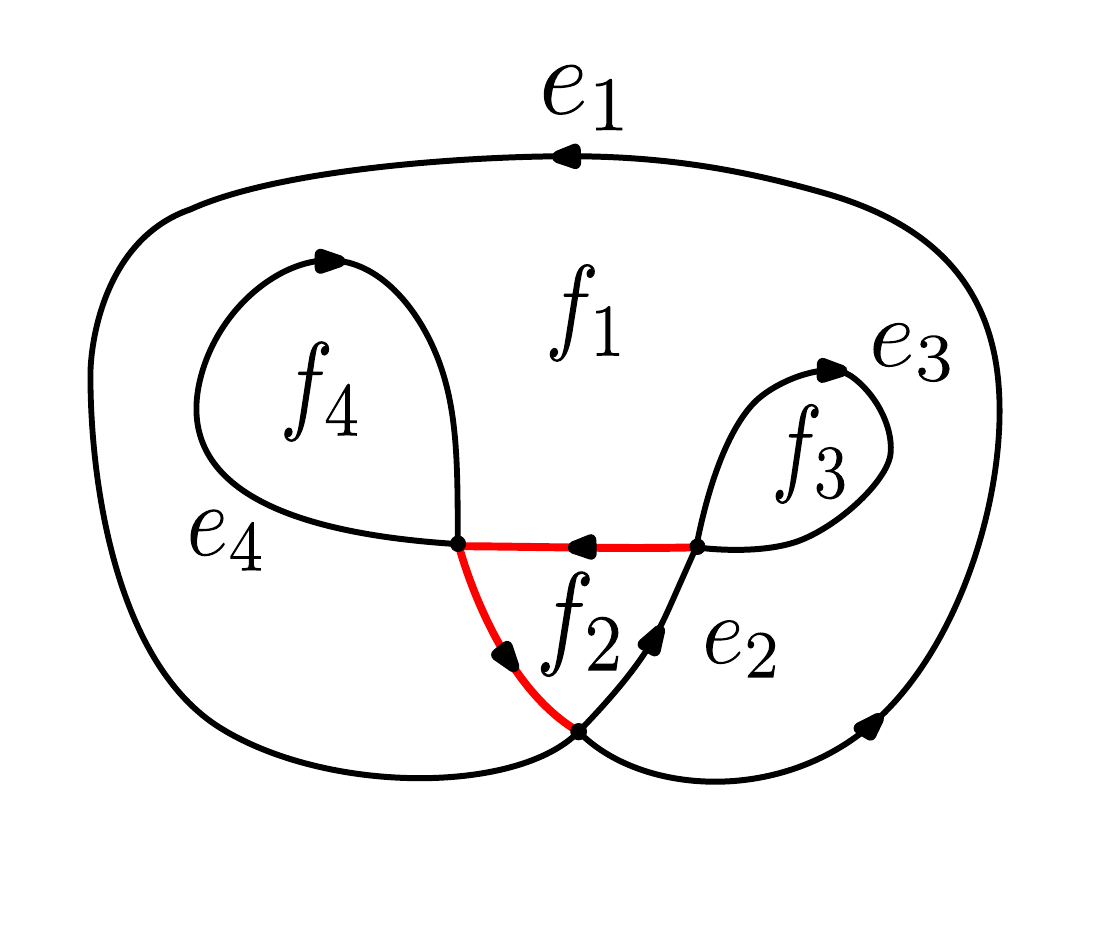}
        \subcaption{}\label{fig:order-labels}
    \end{subfigure}\\

    \begin{subfigure}[b]{0.24\textwidth}
        \includegraphics[width=\textwidth]{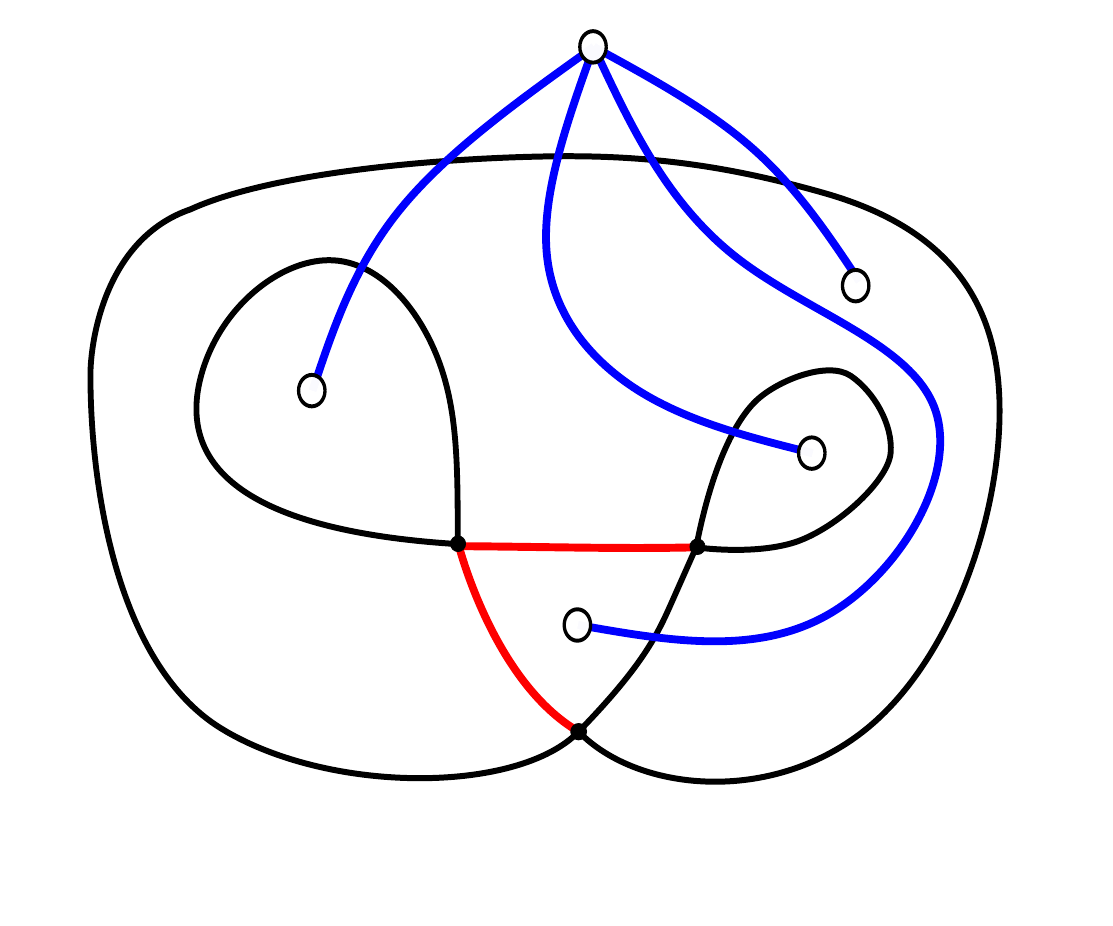}
        \subcaption{}\label{fig:order-blank-1}
    \end{subfigure}
    \begin{subfigure}[b]{0.24\textwidth}
        \includegraphics[width=\textwidth]{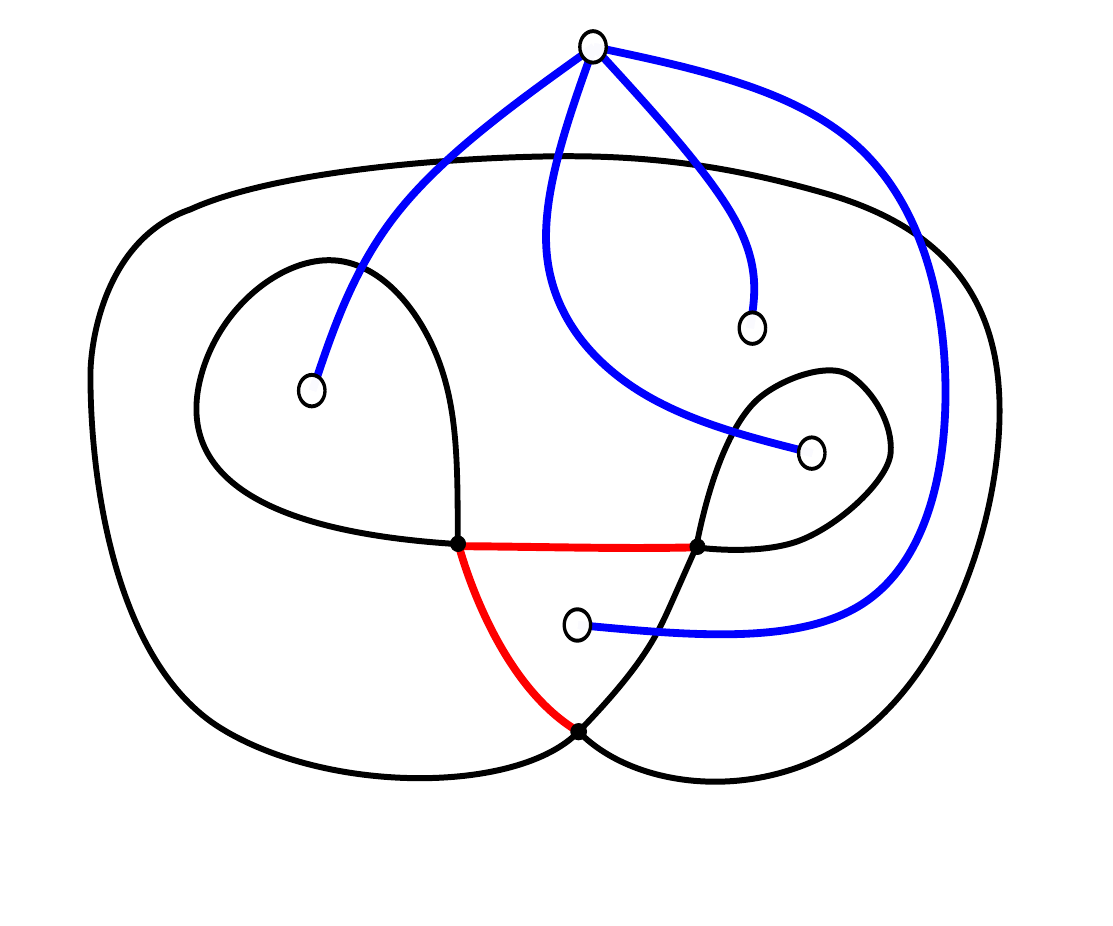}
        \subcaption{}\label{fig:order-blank-2}
    \end{subfigure}
    \begin{subfigure}[b]{0.24\textwidth}
        \includegraphics[width=\textwidth]{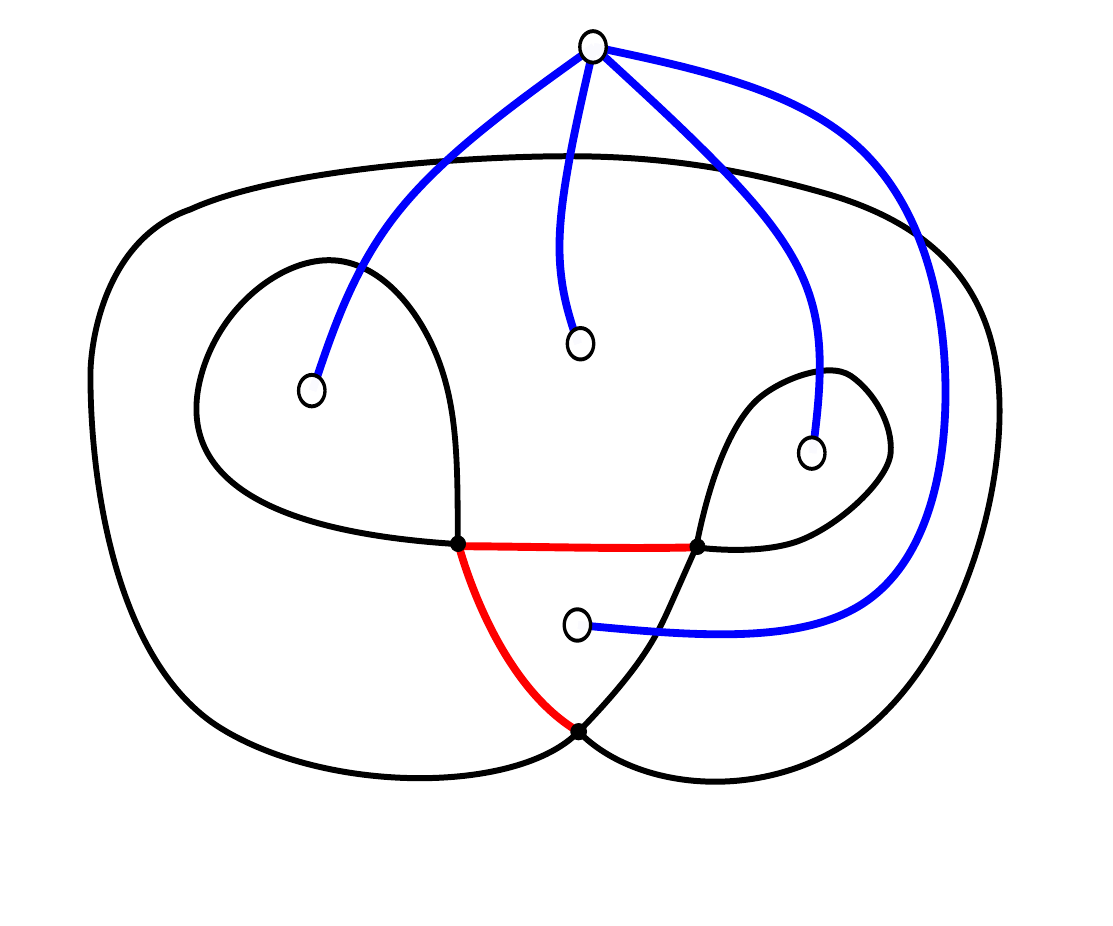}
        \subcaption{}\label{fig:order-blank-3}
    \end{subfigure}
    \begin{subfigure}[b]{0.24\textwidth}
        \includegraphics[width=\textwidth]{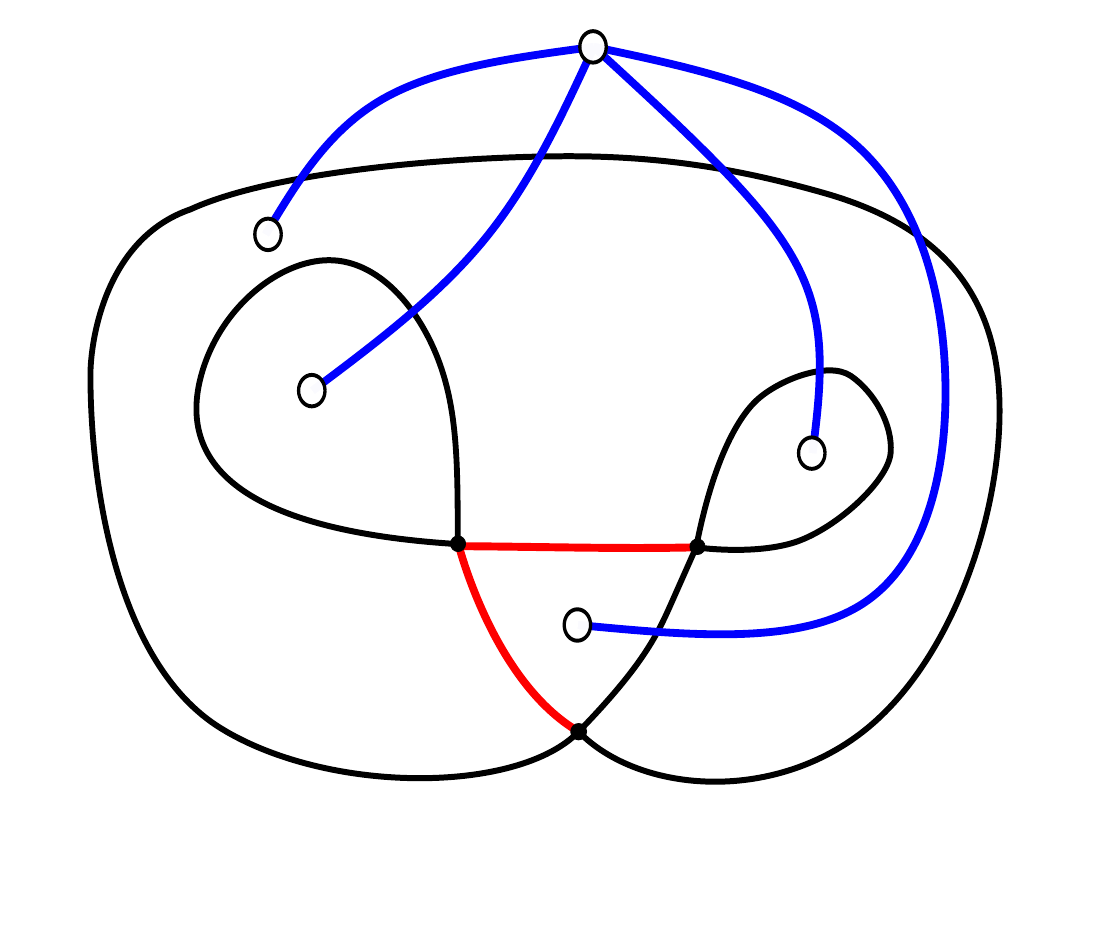}
        \subcaption{}\label{fig:order-blank-4}
    \end{subfigure}

    \caption{
        (\subref{fig:order-base}) A curve $\gamma$.
        (\subref{fig:order-trees}) A spanning tree in red and cotree in blue.
        (\subref{fig:order-labels}) A labeling of the coedges and faces.
        We have four ways to break the cyclic face sequence for $e_1$, represented using the cable system $\Pi$:
        (\subref{fig:order-blank-1})~$e_1=\partial{f_1}e_2\overline{e_3}\overline{e_4}$;
        (\subref{fig:order-blank-2})~$e_1=e_2\partial{f_1}\overline{e_3}\overline{e_4}$;
        (\subref{fig:order-blank-3})~$e_1=e_2\overline{e_3}\partial{f_1}\overline{e_4}$; and
        (\subref{fig:order-blank-4})~$e_1=e_2\overline{e_3}\overline{e_4}\partial{f_1}$.
        }
        \label{fig:order}
\end{figure}

%!TEX TS-program = pdflatex
%!TEX encoding = UTF-8 Unicode
%!TeX spellcheck = en-US
%!TeX root = ..\socg23.tex

%----------------------- Geometric Interpretation-------------------------------------
\section{Foldings and Self-Overlapping Decompositions}
\label{sec:interpretation}

In this section,
we give a geometric proof of the correctness to Nie's dynamic program.
To do so, we show that the minimum homotopy area of a curve can be computed from its Blank word using an algebraic quantity of the word called the \emph{cancellation norm}, which is independent of the drawing of the cables.
We then~show a minimum-area self-overlapping decomposition can be found in polynomial~time.

\subsection{The Cancellation Norm and Blank Cuts}
\label{SS:norm}

Given a (cyclic) word $w$, a \EMPH{pairing} is a letter and its inverse $(\str{f,\overline{f}})$ in $w$.
Two letter pairings, $(\str{f_1,\overline{f}_1})$ and $(\str{f_2,\overline{f}_2})$,
are \EMPH{linked} in a word if
the letter pairs occur in alternating order in the word,
$[\str{\cdots f_1 \cdots f_2 \cdots \overline{f}_1 \cdots \overline{f}_2 \cdots}]$.
A \EMPH{folding}
%\footnote{The word `folding' is due to applications to RNA sequencing \cite{folding-1980}.}
of a word is a set of letter pairings
such that no two pairings in the set are linked.
For example, in the word $[\str{\overline{2}31546\overline{5}4\overline{6}2\overline{3}}]$
the set $\{(\str{5,\overline{5}}),(\str{\overline{3},3})\}$ is a folding
while~$\{(\str{5,\overline{5}}),(\str{6,\overline{6}})\}$~is~not.

The cancellation norm is defined in terms of pairings.
The norm also applies in the more general setting where every
letter has an associated nonnegative weight.
% $wt: \set{\text{letters}}\to \R_{\geq 0}$.
A letter is \EMPH{unpaired} in a folding if it does not participate in any pairing of the folding.
For a word of length $m$, computing the cancellation norm 
takes~$\mathcal{O}(m^3)$ time and $\mathcal{O}(m^2)$
space~\cite{cancellation-norm-def,folding-1980}.
Recently, a more efficient algorithm for computing the cancellation norm
appears in Bringmann \etal~\cite{bringmann_truly_2019};
this algorithm uses fast matrix multiplications and runs in $\mathcal{O}(m^{2.8603})$ time.
% \hsien{Really...? State in terms of matrix multiplication constant.}
% \brad{It is quite nasty to write in terms of matrix mult. (bottom of
% sec. 3.4 in \cite{bringmann_truly_2019}}

The \EMPH{weighted cancellation norm} of a word $w$ is defined to be the minimum sum of weights of all the unpaired letters in $w$ across all foldings of $w$~\cite{cancellation-norm-def,folding-1980}.
% The \EMPH{weighted cancellation norm} of a word $w = [\ell_1\ell_2,...,\ell_m]$ is defined to be
% \[
% \norm{w} \coloneqq \min_\mathcal{F} \sum_i \textrm{wt}(\ell_i)
% \]
% where $\mathcal{F}$ is the set of all foldings of $w$ and $i$ ranges over all
% unpaired positions in $w$ \cite{cancellation-norm-def,folding-1980}.
If $w$ is a word where each letter~$\str{f_i}$ corresponds to a face $f_i$ of a curve,
we define the weight of $\str{f_i}$ to be $\textrm{Area}(f_i)$.
The \EMPH{area of a folding} is the sum of weights of all the unpaired symbols in a folding.
The weighted cancellation norm becomes~
\(
\EMPH{$\norm{w}$} \coloneqq \min_\mathcal{F} \sum_i \textrm{Area}(f_i)
\)
where $\mathcal{F}$ is the set of all foldings of $w$ and $i$ ranges over all unpaired letter in $w$.
%
%The algorithm computes the minimum homotopy area between any two curves, for our purposes, one of curves is a point.

A dynamic program, similar to the one for matrix chain multiplication, is applied
on the word.
Let $w=f_1f_2\dotsm f_\ell$ where $\ell\geq 2.$ Assume we have
computed the cancellation norm of all subwords with length less than
$\ell$. Let $w'=f_1f_2\dotsm f_{\ell-1}.$  If $f_\ell$ is not the inverse
of $f_i$ for $1\leq i\leq \ell-1,$ then $f_\ell$ is unpaired and $||w||=||w'||+Area(f_\ell).$
Otherwise, $f_\ell$ participates in a folding and there exits at least one $k$
 where $1\leq k\leq \ell-1$ and $f_k=f_\ell^{-1}.$ Let~$w_1=f_1\dotsm f_{k-1}$
  and~$w_2=f_{k+1}\dotsm f_{\ell-1}.$ Then, we
find the $k$ that minimizes $||w_1||+||w_2||.$ We~have
$$||w||=\min\{||w'||+Area(f_\ell),\underset{k}{\min}\{||w_1||+||w_2||\}\}$$
Nie shows that the weighted cancellation norm whose weights
correspond to face areas is equal to the minimum homotopy area using the
triangle inequality and geometric group theory. 
Our proof that follows is more geometric and leads to a natural homotopy that achieves
the minimum area.

\medskip
%\paragraph*{Blank cuts}

We now show how to interpret the cancellation norm geometrically.
Let $(\str{f},\str{\bar{f}})$ be a face pairing in a folding of the word $[\gamma]_B(\Pi)$ for some cable system $\Pi$.
Denote the cable in~$\Pi$ ending at face $f$ as $\pi_f$.
Cable $\pi_f$ intersects $\gamma$ at two points corresponding to the pairing $(\str{f},\str{\bar{f}})$, which we denote as $p$ and $q$ respectively.
Let~$\pi'_f$ be the simple subpath of $\pi_f$ so that
$\pi'_f(0)=q$ and $\pi'_f(1)=p$.
We call~$\pi'_f$ a \EMPH{Blank cut}~\cite{blank,evansFasyWenk,marx74} (see \figref{blank-cut}).
Any face pairing defines a Blank cut, and the result of a Blank cut produces two curves each with fewer faces than the original curve: namely,
$\gamma_1$ which is the restriction of $\gamma$ from $q$
to $p$ following by the reverse of path $\pi'_f$,
and $\gamma_2$ which is the restriction of $\gamma$ from $p$ to $q$ followed~by~path $\pi'_f$.

\begin{figure}[t]
    \captionsetup[subfigure]{justification=centering}
    \centering
    \begin{subfigure}[t]{0.2\textwidth}
    	\includegraphics[width=\textwidth]{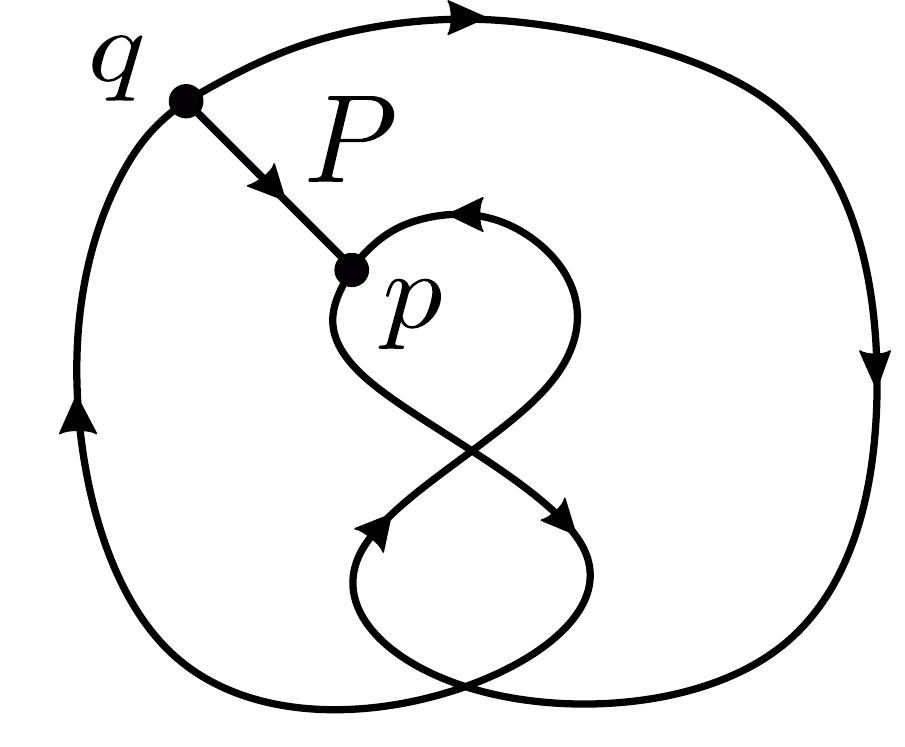}
	    \subcaption{}
    \label{fig:pre-cut}
    \end{subfigure}
    \hspace{1.5cm}
    \begin{subfigure}[t]{0.2\textwidth}
  	    \includegraphics[width=\textwidth]{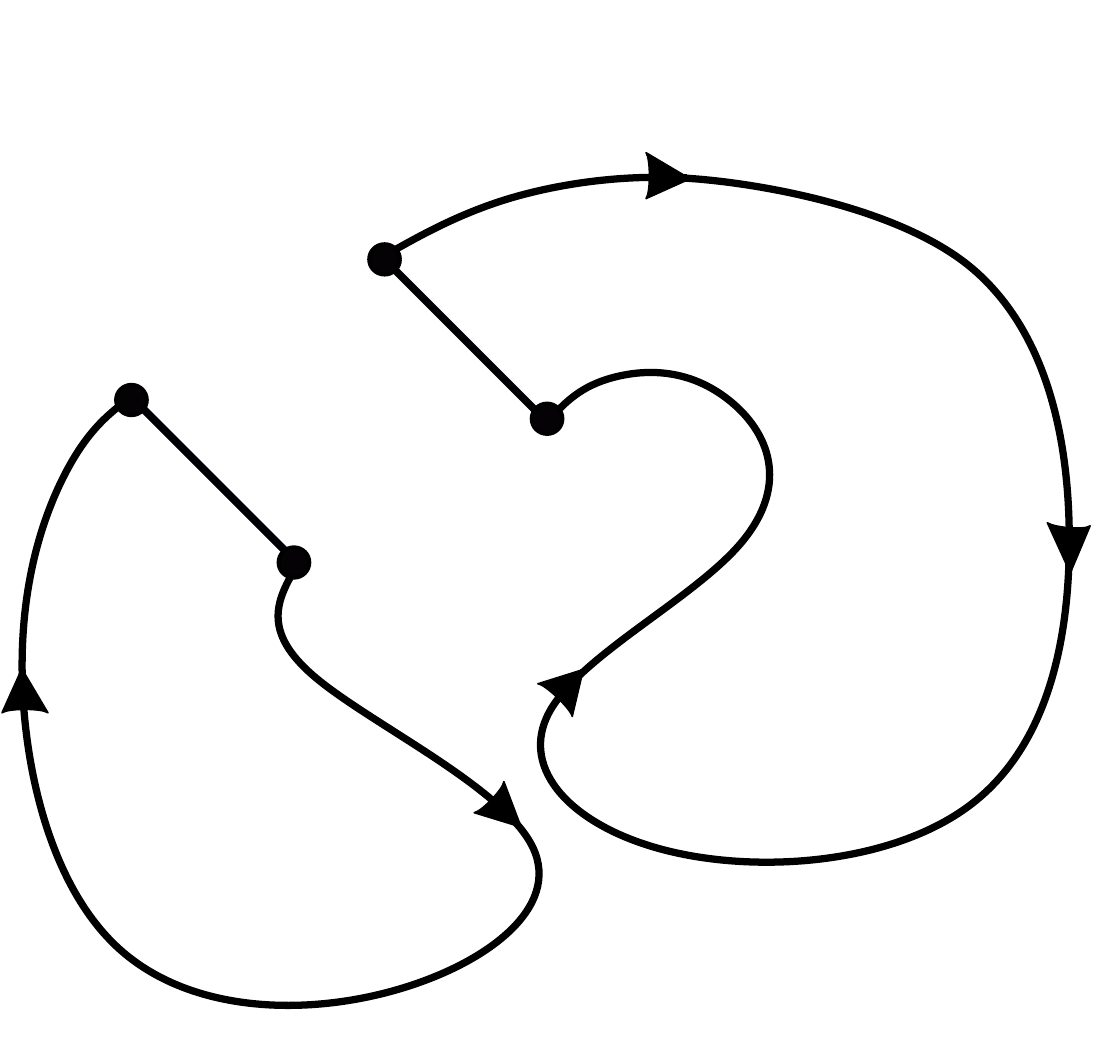}
	    \subcaption{}
    \label{fig:post-cut}
    \end{subfigure}

	\caption{(\subref{fig:pre-cut}) A curve with labeled path $P$.
	(\subref{fig:post-cut}) The two induced subcurves from cutting along $P.$}
\label{fig:blank-cut}
\end{figure}

In order to not partially cut any face, we require all Blank cuts to occur along
the boundary of the face being cut.
When cutting face $f_i$ along path~$\pi_j$,
we reroute all cables crossing the interior of $f_i$, including $\pi_j$ but excluding $\pi_i$, along the boundary of $f_i$ through an \emph{isotopy},
so that no cables %(besides $\pi_i$)
intersect $\pi_i$.
Lemma~\ref{lem:isotopy} ensures that the reduced Blank word remains unchanged.
See \figref{re-route} for an example.
Notice that different cables crossing $f_i$ might be routed around different sides of $f_i$ in order to avoid intersecting cable $\pi_i$ and puncture $p_i$.
% If necessary, we also reroute any cables that lie between
% the initial and final position of $\pi_j$.
% This can always be done because $\pi_j$ is rerouted
% to avoid the puncture in $f_i$, any other affected cables
% will have $\pi_j$ between them and the puncture in $f_i$,
% see \figref{re-route}\subref{fig:reroute2}.
This way, we ensure the face areas of the subcurves are in one-to-one correspondence with the symbols in the subwords induced by a folding.

\begin{figure}[t]
    \captionsetup[subfigure]{justification=centering}
    \centering
    \begin{subfigure}[b]{0.2\textwidth}
        \includegraphics[width=\textwidth]{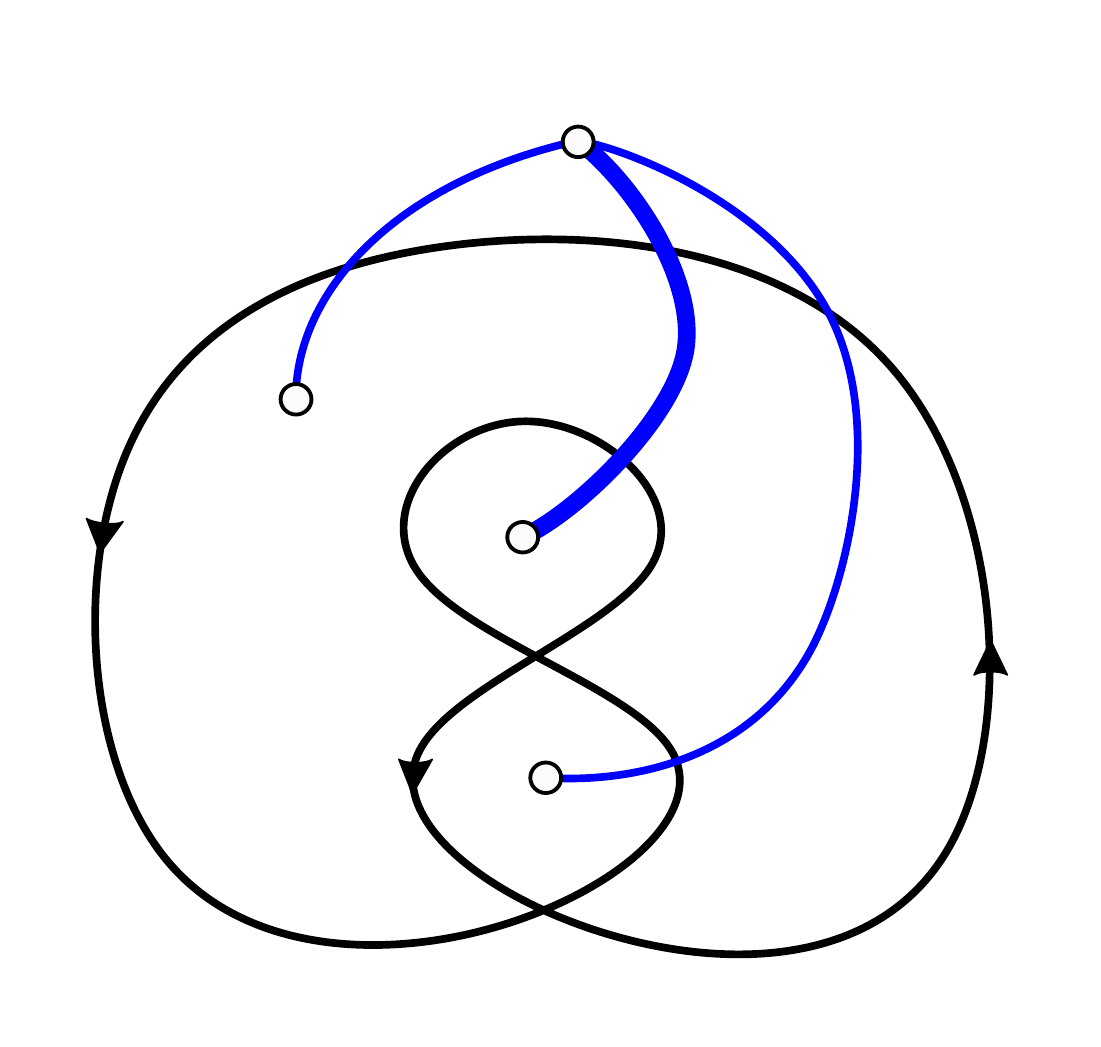}
        \subcaption{}
        \label{fig:reroute1}
    \end{subfigure}
    \begin{subfigure}[b]{0.2\textwidth}
        \includegraphics[width=\textwidth]{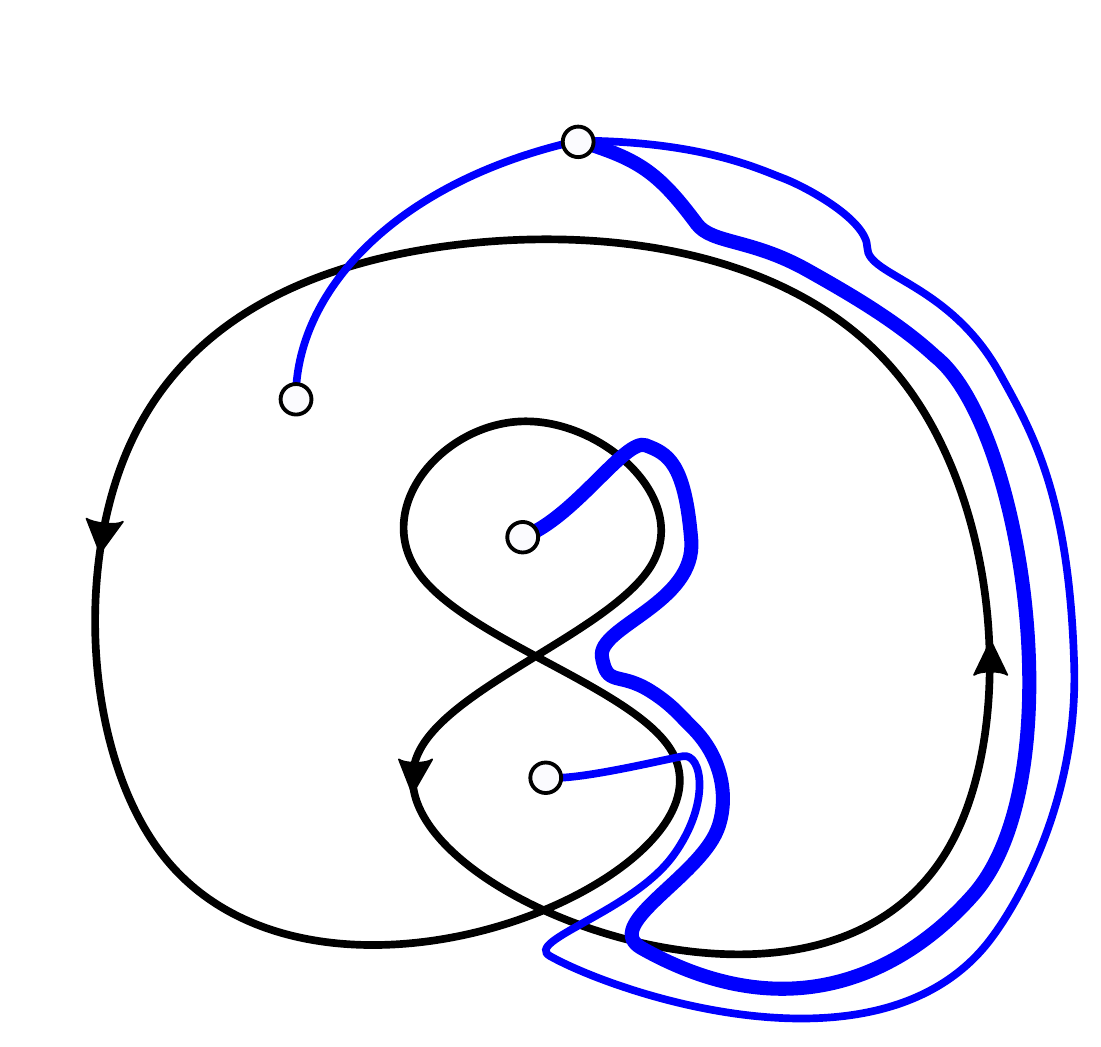}
        \subcaption{}
        \label{fig:reroute2}
    \end{subfigure}
    \begin{subfigure}[b]{0.2\textwidth}
        \includegraphics[width=\textwidth]{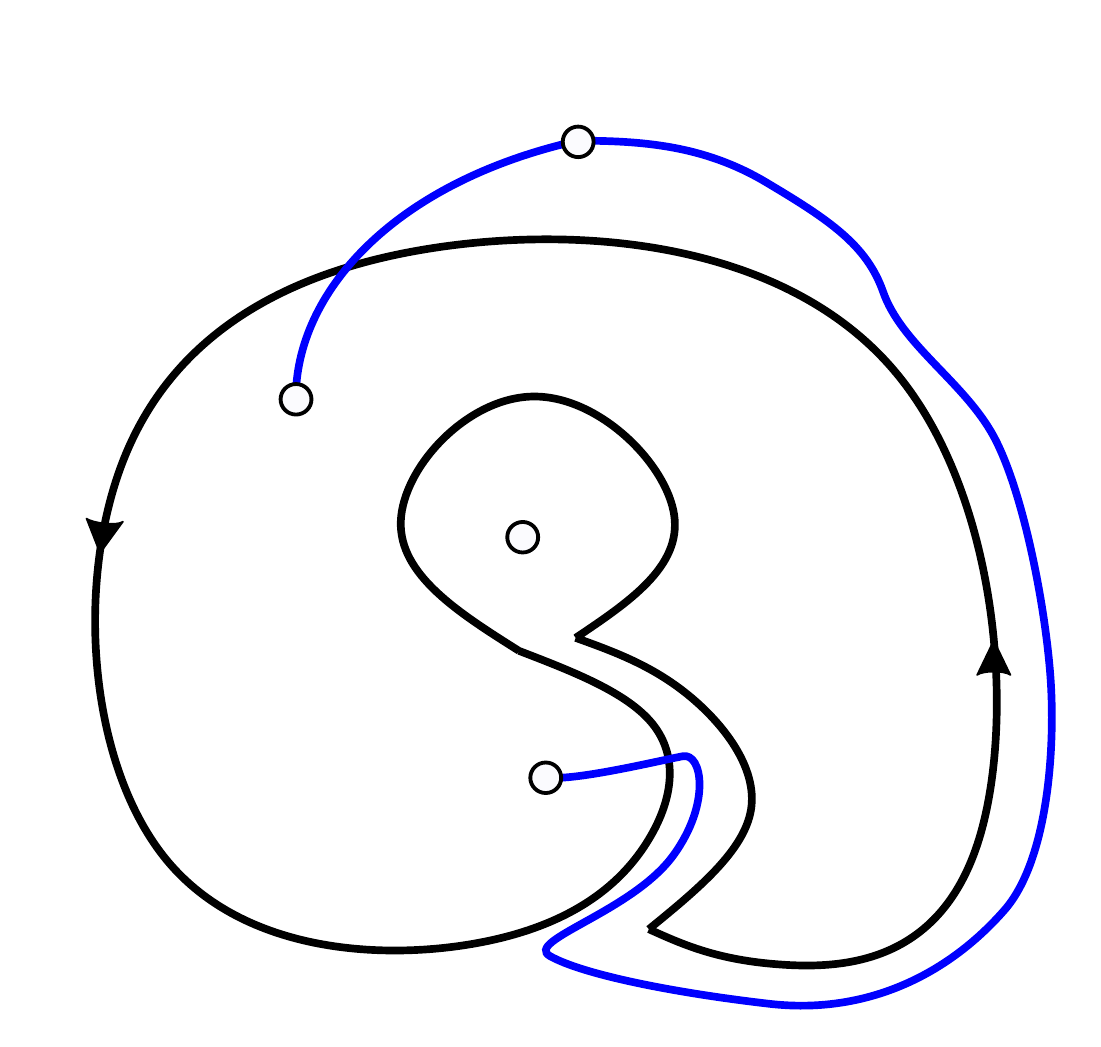}
        \subcaption{}
        \label{fig:reroute3}
    \end{subfigure}
    \begin{subfigure}[b]{0.2\textwidth}
        \includegraphics[width=\textwidth]{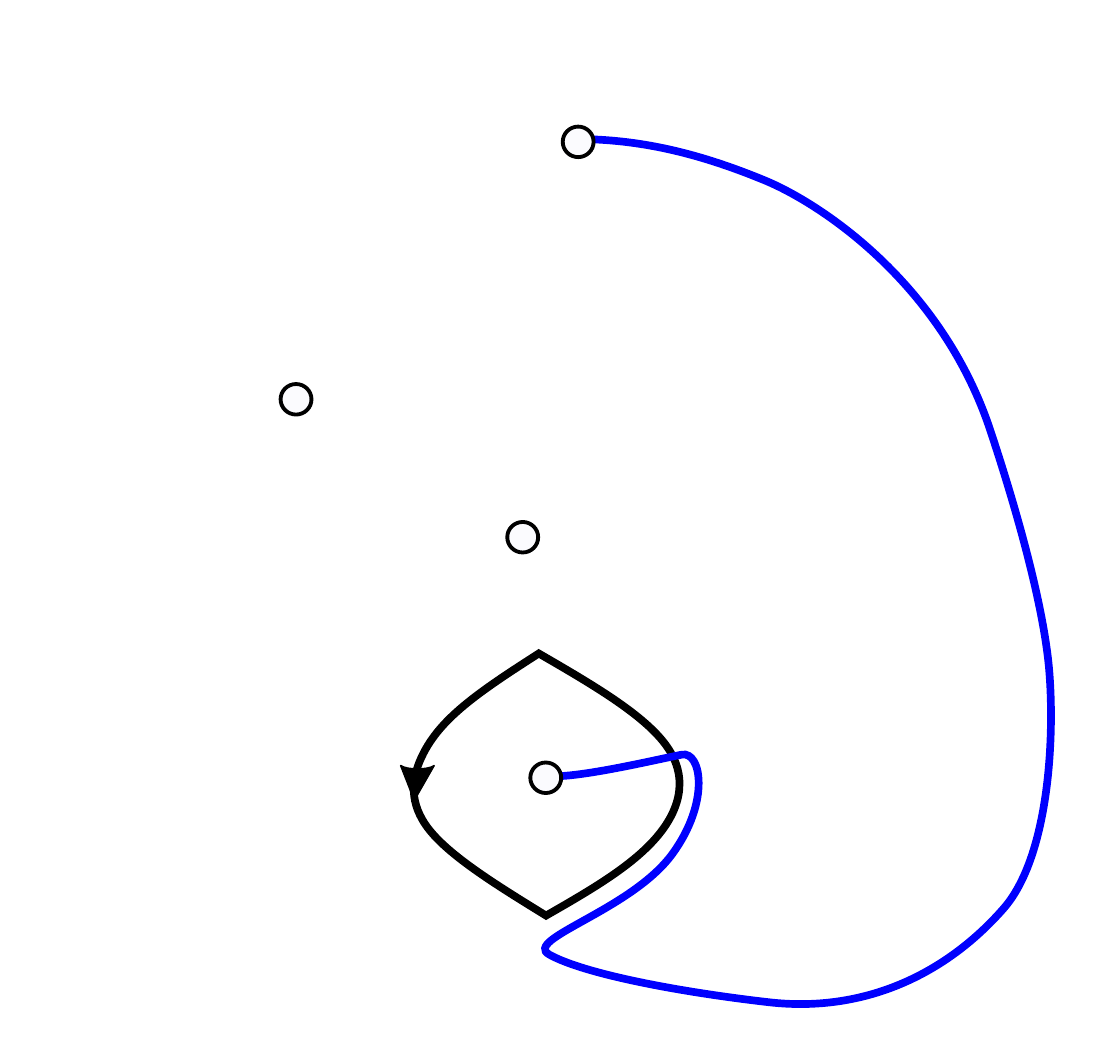}
        \subcaption{}
        \label{fig:reroute4}
    \end{subfigure}
    \caption{(\subref{fig:reroute1}) A curve with cables.
    (\subref{fig:reroute2}) Isotopy the cables to not partially cut any faces.
    (\subref{fig:reroute3}) One subcurve resulting from cutting along the middle cable.
    The curve is weakly simple and there are two cables in this face.
    %The area of this curve is the sum of the areas of the faces that combined
    %to make the curve.
    (\subref{fig:reroute4}) The other subcurve.
\label{fig:re-route}}
\end{figure}

Using the concept of Blank cut we can determine if a curve is self-overlapping.
A subword~$\sigma$ of $w$ is
\EMPH{positive} if $\sigma = \str{f_1f_2\ldots f_k}$, where each letter
$\str{f_i}$ is positive.
A pairing $(\str{f},\str{\bar{f}})$ is \EMPH{positive} if one of the two
subwords of the (cyclic) word $w$ in between the two symbols
$\str{f},\str{\bar{f}}$ is positive; in other words, $w =
[\str{f}p\str{\bar{f}}w']$ for some positive word $p$ and some word $w'$.
%\end{itemize}
%
A folding of $w$ is called a \EMPH{positive folding}\footnote{Blank called these
pairings \emph{groupings}} if all pairings in $w$ are positive, and the word
constructed by replacing each positive pairing (including the positive word in-between) $\str{f} p \str{\bar{f}}$ in the
folding with the empty string is still positive.
%Replacing a positive pairing $\str{f} p \str{\bar{f}}$ with the identity is called a \EMPH{positive folding}\footnote{Blank called these pairings \emph{groupings}}.
Words that have positive foldings are called \emph{positively foldable}.
Blank established the characterization of self-overlapping curves through Blank cuts.

\begin{theorem}[Self-Overlapping Detection~{\cite{blank}}]\label{thm:blank}
    Curve $\curve$ is self-overlapping if and only if $\curve$ has rotation
    number 1 and $[\curve]_B(\Pi)$ is positively foldable for any shortest $\Pi$.
\end{theorem}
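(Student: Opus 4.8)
The plan is to prove both directions by induction on the number of faces (equivalently, the length of the word), using the Blank cut as the inductive step. The key structural fact, established in the preceding discussion, is that a face pairing $(\str{f},\str{\bar f})$ in a folding defines a Blank cut $\pi'_f$, and cutting $\gamma$ along $\pi'_f$ produces two curves $\gamma_1,\gamma_2$ each with strictly fewer faces, whose words are exactly the two subwords of $w$ delimited by $\str{f}$ and $\str{\bar f}$ (after the rerouting isotopy that keeps the reduced Blank word unchanged, by Lemma~\ref{lem:isotopy}). So a positive folding of $w$ restricts to positive foldings of the two subwords, and conversely positive foldings of the pieces glue back along the cut.

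For the forward direction, suppose $\curve$ is self-overlapping with extending immersion $F\colon \mathbb D^2\to\R^2$. Then $\curve$ has rotation number $1$ by Whitney's formula (cited in Section~\ref{SS:so-curves}). For positive foldability: since the interior $F(\mathbb D^2)$ covers each face $f_i$ a nonnegative number $\wind(f_i,\gamma)$ of times, one can read off, from a shortest cable $\pi_f$ to an innermost (maximum-depth) face $f$, that the crossing subword between the two crossings of $\pi_f$ is traced out by an arc of $\partial\mathbb D^2$ bounding an embedded disk in $\mathbb D^2$ — hence that subword is positive and the corresponding Blank cut splits $F$ into two immersed sub-disks. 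Each sub-disk extends $\gamma_1$ (resp.\ $\gamma_2$), so by induction each piece is positively foldable, and reassembling gives a positive folding of $w$. The base case is a curve with one bounded face, i.e.\ a simple counterclockwise loop, whose word is a single positive letter — trivially positively foldable.

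For the converse, suppose $\curve$ has rotation number $1$ and $w=[\curve]_B(\Pi)$ is positively foldable for some shortest $\Pi$. Take a positive folding $\mathcal F$; pick an innermost positive pairing $(\str{f},\str{\bar f})$ (one whose enclosed positive subword $p$ contains no other pairing of $\mathcal F$). The Blank cut along $\pi'_f$ yields $\gamma_1,\gamma_2$ with strictly shorter words, each positively foldable (restrict $\mathcal F$), and — this is the point requiring care — with the rotation-number bookkeeping working out so that each piece again has rotation number $1$; here one uses that the cut is along a positive subword, so the turning contributed by the cable-arc is controlled, together with the additivity of rotation numbers under smoothing/cutting à la Whitney. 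By induction each $\gamma_i$ bounds an immersed disk $F_i$; glue $F_1$ and $F_2$ along the arc $\pi'_f$ (the two copies of the cut get identified, consistently oriented because the pairing is positive) to obtain an immersed disk $F$ with $F|_{\partial\mathbb D^2}=\gamma$.

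The main obstacle is the rotation-number accounting in the converse direction: a Blank cut along an \emph{arbitrary} positive pairing need not leave both pieces with rotation number $1$, so the argument must exploit positivity of the folding — specifically that the \emph{global} positive folding survives — to choose the cut (the innermost positive pairing) and to verify that the cable-arc $\pi'_f$ re-glues with the correct orientation and turning. Making precise the claim "positive folding $\Rightarrow$ each piece still has rotation number $1$ and is positively foldable" is where Blank's original argument does the real work, and where the smoothing-into-simple-depth-cycles picture from Section~\ref{SS:so-curves} and the combinatorics of positive subwords must be combined carefully.
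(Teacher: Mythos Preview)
The paper does not prove Theorem~\ref{thm:blank}; it is quoted from Blank's thesis \cite{blank} (see also Po\'enaru's exposition \cite{poe-eic1-1968}) and used as a black box in Section~\ref{sec:interpretation}. There is therefore no in-paper proof to compare your proposal against.

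That said, your outline is close to Blank's original argument, with one substantive gap in each direction. For the forward direction, the clean way to produce the positive folding is not an induction on faces but a single lift: pull the entire cable system $\Pi$ back through the immersion $F\colon\mathbb D^2\to\R^2$. Each cable $\pi_i$ lifts to a disjoint collection of properly embedded arcs in $\mathbb D^2$ (closed components are ruled out because $\pi_i$ ends at a point covered once), and the endpoints of each arc on $\partial\mathbb D^2$ are a $(\str{f},\str{\bar f})$ pair in the Blank word. Disjointness of arcs in a disk forces the pairs to be unlinked, and the innermost arc cuts off a subdisk whose boundary word is positive---this is what gives the \emph{positive} folding, and it avoids your somewhat vague appeal to an ``innermost face''.

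For the converse, the obstacle you name is exactly the one Blank has to resolve, and your sketch does not yet resolve it. The missing observation is that for an innermost positive pairing $(\str f,\str{\bar f})$ with enclosed positive word $p$, the subcurve $\gamma_1$ on the $p$-side is not merely positively foldable---it is \emph{simple}. Indeed, the cut arc $\pi'_f$ meets $\gamma$ only at its two endpoints (by innermost-ness and shortest-path cables), and the $\gamma$-arc on the $p$-side crosses each relevant cable once positively, forcing it to be embedded. A simple closed curve has rotation number $\pm1$, and positivity of $p$ pins it to $+1$; then $\mathrm{rot}(\gamma_2)=\mathrm{rot}(\gamma)+1-\mathrm{rot}(\gamma_1)=1$ from the additivity of turning under a cut along an embedded arc. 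With this in hand your induction goes through. Without it, ``rotation-number bookkeeping working out'' is an assertion, not an argument.
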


However, we face a difficulty when interpreting Nie's dynamic program geometrically.
In our proof we have to work with \emph{subcurves} (and their extensions) of the original curve and the induced cable system.
For example, after a Blank cut or a vertex decomposition, there might be multiple cables connecting to the same face creating multiple punctures per face,
and cables might not be managed or follow shortest paths to the unbounded face
(see \figref{reroute3} and \figref{not-short}).
In other words, the subword corresponding to a subcurve with respect to the induced cable system might not be a regular Blank word (remember that Blank word is only well-defined when the cable system is managed, all cables are shortest paths, and the cable ordering is fixed; see Section~\ref{lem:blank-unique}).
To remedy this, we tame the cable system first by rerouting them into another cable system that is managed and satisfies the shortest path assumption, then merging all the cables ending at each face.
We show that while such operations change the Blank word of the curve, the cancellation norm of the curve
%the rotation number of the curve
and the positive foldability does not change.
We summarize the property needed below.
%for the rest of the proof;
%see Section \ref{SS:independence} for specific details.

\begin{lemma}[Cable Independence]
\label{lem:independence}
Let $\curve$ be any curve with two cable systems~$\Pi$ and $\Pi'$ such that the weights of the cables in $\Pi$ ending at any fixed face sum up to the ones of $\Pi'$.
Then any folding $F$ of $\word(\Pi)$ can be turned into another folding $F'$ of $\word(\Pi')$, such that the area of the two foldings are identical.
As a corollary, the minimum area of foldings (the cancellation norm) of $[\curve](\Pi)$ and the existence of a positive folding of $[\curve](\Pi)$ are independent of the choice~of~$\Pi$.
\end{lemma}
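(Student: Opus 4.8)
\textbf{Proof proposal for Lemma~\ref{lem:independence} (Cable Independence).}

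The plan is to prove the lemma in two stages, first reducing the general statement to a purely local comparison of cable systems, and then analyzing that local situation via the isotopy invariance of \lemref{isotopy} together with a bookkeeping argument about how foldings transform. First I would observe that, because any two cable systems on the same curve are connected by a finite sequence of elementary moves—isotopies of cables (which by \lemref{isotopy} change nothing), moves that slide a cable's endpoint around a vertex or through a crossing, and moves that merge or split the bundle of cables ending at a single face—it suffices to prove the claim for each such elementary move separately and then compose the resulting bijections on foldings. The hypothesis that the total weight of cables ending at each face is preserved is exactly what makes the ``merge/split'' moves legal: when we merge two cables $\pi_f^{(1)}, \pi_f^{(2)}$ into one cable $\pi_f$ of combined weight, we need the area accounting on both sides to agree, and the per-face weight-sum condition guarantees this.

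The key steps, in order, would be: (1) set up the correspondence between letters of $\word(\Pi)$ and letters of $\word(\Pi')$ for each elementary move—for an isotopy or an $\arc33$-type move this is a direct identification; for a merge it is a grouping of several letters $\str{f^{(1)}}, \str{f^{(2)}}, \ldots$ into copies of a single letter $\str{f}$, with signs inherited from crossing directions and the weight of $\str{f}$ equal to the sum $\sum_j \Area$-contributions (here $\Area(f)$ itself, by the hypothesis); (2) given a folding $F$ of $\word(\Pi)$, produce $F'$ on $\word(\Pi')$ by transporting each pairing through the correspondence, checking that the no-two-pairings-linked condition is preserved—this follows because the cyclic order in which $\gamma$ meets the cables is preserved up to the local move, and merging parallel cables only coarsens the linking relation, never creating a new crossing pattern; (3) verify that unpaired letters on one side map to unpaired letters (or, after a merge, to a set of letters whose total weight equals the weight of the single unpaired merged letter) so that $\Area(F) = \Area(F')$; and (4) conclude the corollary by noting that the map $F \mapsto F'$ is area-preserving and invertible, hence $\norm{\word(\Pi)} = \norm{\word(\Pi')}$, and that positivity of a folding is preserved because the signs of the crossings, and therefore which subwords are positive, are unaffected by rerouting cables (an isotopy does not change signs, and merging parallel same-face cables preserves the sign pattern since they are crossed in the same direction).

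I expect the main obstacle to be step (2) in the case of the merge move combined with the positive-foldability half of the corollary: when several cables ending at $f$ are consolidated, a pairing in $F$ might pair $\str{f^{(1)}}$ with $\str{\overline{f^{(2)}}}$ (two crossings on \emph{different} original cables that nonetheless carry the same face label after merging), and one must argue that such ``cross-cable'' pairings can be realized as legitimate pairings in $F'$ without breaking either the unlinkedness condition or the positivity of the replacement word. Handling this cleanly likely requires choosing the merged cable $\pi_f$ to run ``between'' the bundle—i.e.\ isotoping the $\pi_f^{(j)}$ together along the tree $T^*$ so that their crossings with $\gamma$ appear consecutively in the relevant cyclic blocks—after which a cross-cable pairing becomes an ordinary adjacent-block pairing and the positivity argument of \thmref{blank} applies verbatim. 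The rerouting-to-managed-and-shortest step should then be a special case already covered by \lemref{isotopy} and \lemref{blank-unique}, since it only moves cables without changing per-face weights.
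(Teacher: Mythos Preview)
Your decomposition into elementary moves is missing the crucial one, and this is a genuine gap. You list isotopies, local ``slides'' near vertices of $\gamma$, and merge/split, but these do \emph{not} connect an arbitrary cable system to a managed shortest one. Two cable systems with the \emph{same} cyclic ordering around $p_0$ need not be isotopic: the space of such cable systems, up to isotopy, is acted on by the pure mapping class group $\textnormal{PMCG}(D_n)$ of the punctured disk, and this group is nontrivial. Concretely, a cable $\pi_f$ can wind around several punctures $p_j$ on its way to $p_0$; no isotopy (which cannot cross punctures) and no slide across a vertex of $\gamma$ will unwind it. The paper handles this by invoking the standard fact that $\textnormal{PMCG}(D_n)$ is generated by Dehn twists about simple closed curves enclosing exactly two punctures, and then proving a separate \emph{Twist Invariance} lemma: it writes down explicitly how a single Dehn twist conjugates every occurrence of $\str{f_i}$ and $\str{f_j}$ in the word by a fixed block, and shows by a chasing argument on pairings that a folding of the pre-twist word can be converted to a folding of the post-twist word of equal area (and that positive foldability is preserved). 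Your final sentence---that rerouting to managed-and-shortest ``should then be a special case already covered by \lemref{isotopy} and \lemref{blank-unique}''---is precisely where the argument breaks: \lemref{blank-unique} tells you the word is \emph{determined} once the system is managed, shortest, and the ordering is fixed, but it does not tell you how to \emph{get} there from an arbitrary~$\Pi$.

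A secondary issue is your treatment of the cable-switching move. You assert that ``merging parallel cables only coarsens the linking relation, never creating a new crossing pattern,'' but when two adjacent cables $\pi_f,\pi_g$ swap order, a pairing $(\str{f},\str{\bar f})$ in $F$ can become \emph{linked} with an existing $(\str{g},\str{\bar g})$ pairing in the new word; the paper handles this with a short but necessary case analysis (if the adjacent $\str{g}$'s are paired elsewhere, one must re-pair them with the $\str{g}$'s now flanking the $(\str{f},\str{\bar f})$ pair). Without that repair step your $F'$ need not be a folding at all.
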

%In \appendref{foldings}, 
Next, we prove that for each folding there is a homotopy with~equal~area.

\begin{restatable}[Folding to Homotopy]{lemma}{foldingarea}
\label{lem:folding-area}
	Let $\curve$ be a curve and $\Pi$ be a managed cable system satisfying the shortest path assumption,
	and let $F$ be a folding of~$[\curve](\Pi)$.
	There exists a null-homotopy of $\curve$ with area
	equal to the area of $F$.
\end{restatable}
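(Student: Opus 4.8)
\textbf{Proof proposal for Lemma~\ref{lem:folding-area}.}

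The plan is to induct on the length of the word $[\curve](\Pi)$, using the Blank cut as the recursive step and building the homotopy compositionally. First I would handle the base case: if the folding $F$ is empty (all letters unpaired, or the word is empty), then $\curve$ can be contracted to the basepoint $p_0$ by a homotopy that sweeps each bounded face $f_i$ a number of times equal to $|\wind(f_i,\curve)|$; more carefully, since $F$ is empty the area of $F$ is $\sum_i \Area(f_i)$ over all letters appearing in the word, and one checks (using the depth homotopy of Equation~\eqref{eq:sandwich} restricted to the relevant faces, or a direct spike-and-contract argument along the cables) that there is a null-homotopy with exactly this area. Since we are free to choose $\Pi$ managed and shortest-path, the cables give a concrete set of ``spikes'' along which to push the curve to $p_0$ one face at a time.

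For the inductive step, pick any pairing $(\str{f},\str{\bar{f}})$ in $F$ that is \emph{innermost} — that is, one of the two arcs of the cyclic word between $\str{f}$ and $\str{\bar{f}}$ contains no symbol that is paired with a symbol outside that arc (such a pairing exists because the pairings of $F$ are pairwise unlinked, so they form a laminar/nested family). Perform the Blank cut $\pi'_f$ associated to this pairing, as defined just above the statement: this splits $\curve$ into $\curve_1$ and $\curve_2$, where $\curve_1$ is the short side (the restriction of $\curve$ between the two crossings, closed up by $\pi'_f$) and $\curve_2$ the other side. The restriction of $F$ to each side is a folding of the corresponding subword, after rerouting the cables that cross $f$ around $\partial f$ as described in the ``In order to not partially cut any face'' paragraph, so that the face areas of $\curve_1$ and $\curve_2$ are in one-to-one correspondence with the letters of the two subwords (and the shared face $f$ disappears from both, since its occurrence became the gluing cut). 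The key bookkeeping identity is then
\[
    \Area(F) = \Area(F_1) + \Area(F_2),
\]
because the unpaired letters of $F$ are partitioned among $F_1$ and $F_2$ and the letter $f$ of the chosen pairing contributes $0$ to $\Area(F)$. By Lemma~\ref{lem:independence} the cancellation-norm bookkeeping is insensitive to the rerouting/merging of cables, so the subwords we feed into the induction are legitimate. Apply the inductive hypothesis to get null-homotopies $H_1$ of $\curve_1$ and $H_2$ of $\curve_2$ with $\Area(H_i) = \Area(F_i)$, and then reconstruct a null-homotopy $H$ of $\curve$ by first homotoping $\curve$ across the Blank cut $\pi'_f$ (which sweeps \emph{zero} area, since $\pi'_f$ is a path traversed once in each direction as part of $\curve_1$ and $\curve_2$ — this is exactly the ``cut and reglue along a curve of measure zero'' move), producing the disjoint-ish union of $\curve_1$ and $\curve_2$ joined at a point, and then running $H_1$ followed by $H_2$. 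Additivity of $\Area$ under concatenation of homotopies supported on the respective pieces gives $\Area(H) = \Area(H_1) + \Area(H_2) = \Area(F)$.

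The main obstacle I expect is making the ``cut along $\pi'_f$ sweeps zero area, then the pieces homotope independently'' step fully rigorous when $\curve_1$ and $\curve_2$ are not disjoint in the plane — they share the cut path and generally overlap — so one must check that the two homotopies $H_1$ and $H_2$ can be run as genuine null-homotopies of sub\emph{loops} of $\curve$ based at a common point and that their areas add (rather than, say, the overlap region getting swept by both and needing to be counted with multiplicity, which is in fact what we \emph{want}: $\#H^{-1}$ is additive, so overlapping sweeps add up, matching the fact that $\Area(F)$ already double-counts a face that appears as an unpaired letter in both subwords). A secondary subtlety is ensuring the chosen pairing is genuinely innermost and that the rerouting isotopy of cables around $\partial f$ does not create new crossings that would change which letters are unpaired — but Lemma~\ref{lem:isotopy} and the discussion preceding the statement handle exactly this, so I would cite them rather than reprove them. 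One should also confirm the base case's homotopy respects the basepoint $p_0$; since $p_0$ lies in the unbounded face and all cables terminate there, contracting along cables keeps the basepoint fixed.
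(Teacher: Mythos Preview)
Your approach matches the paper's: Blank-cut at an innermost pairing, recurse on the pieces, and use a depth-cycle contraction for the empty-folding case. There is one wrinkle in your induction setup worth flagging. Your hypothesis is stated only for managed, shortest-path cable systems, but after a Blank cut the inherited cables on $\curve_1$ and $\curve_2$ are generally neither; you invoke Lemma~\ref{lem:independence} to ``legitimize'' the subwords, but passing to a fresh managed system on $\curve_i$ may yield a Blank word that is \emph{not} shorter than $[\curve](\Pi)$, so induction on word length does not obviously close. The paper sidesteps this by inducting on the number of pairings in $F$ and explicitly strengthening the inductive hypothesis to cover subcurves equipped with the inherited (unmanaged, multi-cable) system; Lemma~\ref{lem:independence} is then invoked only once, in the base case, to pass to a managed shortest system before running the depth-cycle homotopy. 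With that adjustment your argument goes through and is essentially the paper's proof --- the only cosmetic difference being that the paper contracts the inner piece $\curve_1$ first (it has zero pairings, so the base case applies) and then recurses once on the remainder, rather than recursing symmetrically on both pieces as you do.
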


\begin{proof}
	We induct on the number of pairings in $F$.
  	  Consider a folding $F$ with~$k$ pairings, and let $(\str{f},\str{\bar{f}})$ 
    	be a pairing in $F$ that does not contain any other pairing in between.
	A Blank cut along this pairing decomposes
	the curve into two subcurves, at least one of which does not
	contain any pairings. 
  	Call this subcurve~$\curve_1$.
	We will contract $\curve_1$ to the cut using a homotopy
	with area equal to the sum of area of all the unpaired letters in the subword
	of $\curve_1$; such homotopy exist by the base case of induction.
	However, after we contract $\curve_1$,
	the cable system on the remaining curve might no longer be managed or shortest, and there might be multiple cables in some faces.
	Therefore we strengthen the inductive hypothesis by assuming that for any subword of $[\gamma](\Pi)$ and its corresponding subcurve $\curve'$ of~$\gamma$ with the induced (unmanaged multi-)cable system, and a folding $F'$ with less $k$ pairings on the subword,
	there exists a null-homotopy of $\curve'$ with area equal to the area of the folding $F'$, and the destination of the null-homotopy can be anywhere on $\curve'$. 
	
    Let $w'$ be the subword of $[\gamma](\Pi)$ by removing the subword between the pairing~$(\str{f},\str{\bar{f}})$, folding $F'$ on subword $w'$ be constructed from $F$ by removing the pairing $(\str{f},\str{\bar{f}})$.
    Let $\curve'$ denote the curve that results from contracting
	$\curve_1$ to the cut and let $\Pi'$ denote the cable system
	induced on $\curve'$ by $\Pi$.
	By the (strengthened) induction hypothesis $\curve'$ has a null-homotopy with area equal to the one of $F'$.
    Combining with the null-homotopy of $\gamma_1$ and we are done.

    For the base case when $F$ is the empty pairing, 
    if the cables are not managed, construct a managed cable system satisfying the shortest path assumption $\Pi^*$ for $\curve$ and 
    apply Lemma~\ref{lem:independence};
    the sum of weights of all letters in the new Blank word remains unchanged.
    % to ensure that the null-homotopy to be constructed can be done without loss of generality to be with respect to the managed cable system $\Pi^*$.
    %
    To construct the null-homotopy,
    decompose the curve into depth cycles by performing a smoothing at each vertex~\cite{changErickson17},
	to obtain a null-homotopy with area equal to the depth area.
	Since each cable in $\Pi^*$ follows a shortest path to the unbounded face, the number of times an unsigned letter
	appears in the word is equal to the depth of the face,
	and thus the homotopy area is equal to the area of the folding.
    In fact, we can choose any point on the curve to be the destination of the null-homotopy.

\end{proof}

What we are left with is to prove Lemma~\ref{lem:independence}:
Given a curve $\gamma$ and a cable system $\Pi$, if we have a folding on the face word $[\gamma](\Pi)$, there is an equivalent folding of the new word $[\gamma](\Pi')$ that has the same area if we choose a different cable system.
As a result, the cancellation norm and the positive foldability of a word
are independent to the cable system chosen.
Notice that it is sufficient to assume $\Pi'$ to be a managed cable system with shortest path assumptions and single cable to each face.

Any two cable systems can always be connected by a sequence of isotopy and  order switching between two adjacent cables, followed by a redrawing of the cables induced by a homeomorphism of the plane fixing the punctures $\set{p_i}$, which fixes the cable ordering and the disjointness between cables but the isotopy classes of the cables change.
We emphasize that either operation will change the Blank word, but we can always find a folding of the new word that preserves the area.

%\paragraph*{Switching cable orders.}

Next, we show that switching two adjacent cables preserves the area and positivity of the folding

\begin{restatable}[Switch Invariance]{lemma}{normswitch}
\label{lem:norm-switch}
    Given curve $\curve$, a cable system $\Pi$, and a folding $F$ on $\word(\Pi)$.
    Switching the order of two adjacent cables produces another folding on the new word with equivalent area.
    Furthermore, the new folding is positive if and only if $F$ is.
\end{restatable}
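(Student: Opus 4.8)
The plan is to prove Lemma~\ref{lem:norm-switch} by a careful local analysis of what happens to the word and to a folding when two adjacent cables $\pi_a, \pi_b$ swap their cyclic positions around $p_0$. First I would set up notation: the swap is realized by a homeomorphism of the plane supported in a disk $D$ containing the portion of $\pi_a$ and $\pi_b$ near $p_0$ but no punctures other than possibly nothing; inside $D$ the two cables cross once (a single transverse intersection is introduced, then resolved on the other side), and what changes in $\word$ is precisely a rewriting of each maximal ``$ab$-block''---a contiguous run of letters all of which are $\str{a},\str{\bar a},\str{b},\str{\bar b}$ that the curve reads while passing through the swapped region---by a Dehn-twist-like substitution. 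Concretely, when $\pi_a$ and $\pi_b$ exchange order, a letter $\str{a}$ that was read between two letters involving $b$ gets conjugated: the new Blank word replaces each relevant $\str{a}$ by $\str{b}^{\pm 1}\str{a}\,\str{b}^{\mp 1}$ (or symmetrically, the $\str{b}$'s absorb $\str{a}$'s), depending on the signs. I would make this substitution rule precise in one displayed equation and verify it directly from Blank's definition of crossing signs by tracing $\gamma$ through the disk $D$.

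Next, given a folding $F$ of $\word(\Pi)$, I would define the induced folding $F'$ on the new word $\word(\Pi')$ by tracking each pairing through the substitution. A pairing $(\str{f},\str{\bar f})$ with $f \notin \{a,b\}$ is untouched as an abstract pairing, though its linking relations must be re-checked; a pairing $(\str{a},\str{\bar a})$ maps to the pairing between the corresponding two copies of $\str{a}$ in the rewritten word, and similarly for $b$; and the newly created $\str{b}^{\pm1}\dots\str{b}^{\mp1}$ ``collars'' around conjugated $\str{a}$ letters supply new pairings of $b$-letters with their inverses that I add to $F'$. The key claims are then: (i) no two pairings of $F'$ are linked---this follows because the substitution is ``nested'' (each conjugating collar is inserted as an innermost bracket around a single letter), so it cannot create a crossing between two previously unlinked pairs, and the new collar pairs are by construction innermost and hence unlinked with everything; (ii) the set of unpaired letters of $F'$ is in weight-preserving bijection with the unpaired letters of $F$---the $b$-collars are always fully paired and contribute zero, the conjugated $\str{a}$ keeps the same weight $\Area(f_a)$, and every other letter keeps its identity and weight. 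Hence $\Area(F') = \Area(F)$. Running the same argument with the roles of $\Pi$ and $\Pi'$ reversed gives the converse direction, so foldings correspond bijectively and the cancellation norm is unchanged.

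For the ``positive'' half of the statement, I would check that the substitution preserves positivity of subwords and of pairings. The swap of adjacent cables can be chosen so that the sign with which $\gamma$ crosses $\pi_a$ relative to $\pi_b$ is compatible---i.e., the collar letters $\str{b},\str{\bar b}$ are inserted in the order $\str{b}\cdots\str{\bar b}$ (positive pairing) exactly when the ambient subword stays positive---so a positive subword $p$ in $w = [\str{f}\,p\,\str{\bar f}\,w']$ becomes a positive subword after rewriting, because we only ever insert matched $\str{b}\cdots\str{\bar b}$ collars (themselves forming a positive pairing with positive interior) around letters that were already positive. Consequently a positive folding of $\word(\Pi)$ induces a positive folding of $\word(\Pi')$: each positive pairing stays positive, each new $b$-collar pairing is positive, and the reduced word obtained by deleting all positive pairings is carried to the corresponding reduced word on the other side, which is still positive by the same substitution argument. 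Since the swap is its own inverse (up to relabelling), this equivalence is symmetric, giving the ``if and only if.''

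The main obstacle I expect is claim~(i)---verifying that no linked pair is created. The substitution is local in the word but a pairing of $F$ can have its two endpoints far apart, straddling many rewritten blocks, and one must argue that conjugating interior letters by $\str{b}^{\pm1}$-collars never turns a previously nested pair of brackets into a crossing pair. I would handle this by observing that the rewriting is induced by an ambient isotopy/homeomorphism of the punctured plane (the Dehn twist supported near the swapped cables) that carries the Blank cut of each pairing to a Blank cut of the image pairing; since disjoint/nested Blank cuts stay disjoint/nested under a homeomorphism, the unlinked condition is preserved---this is the place where the paper's earlier remark about ``further tools from geometric topology like Dehn twists'' does the real work, and I would make that geometric-topology argument the backbone of the proof rather than attempting a purely combinatorial bracket-chasing verification.
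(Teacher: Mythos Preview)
There is a genuine setup error. You realize the switch by a homeomorphism supported in a disk $D$ near $p_0$ that contains ``no punctures other than possibly nothing''; but a homeomorphism of the punctured plane supported away from all punctures is isotopic to the identity rel punctures, so by Lemma~\ref{lem:isotopy} it cannot change the reduced Blank word at all. For two disjoint cables to actually exchange their cyclic order at $p_0$, one of them must detour around the other's puncture, so the region where the change happens \emph{must} contain a puncture. That is why the effect on the word is conjugation-like---and why your argument drifts toward the Dehn-twist machinery of Lemma~\ref{lem:norm-twist} rather than staying local. Your substitution rule ``$\str{a}\mapsto \str{b}^{\pm1}\str{a}\,\str{b}^{\mp1}$'' is also not quite the braid action (one of the two generators is \emph{replaced}, not merely conjugated), so the displayed rewriting would need correction before the collar-pairing bookkeeping can be carried out.

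The paper sidesteps all of this with a normalization trick that makes the combinatorics elementary. Before switching, it isotopes $\pi_g$ (Lemma~\ref{lem:isotopy} guarantees the reduced word is unchanged) so that $\pi_g$ follows $\pi_f$ in an $\varepsilon$-neighborhood from $p_0$ to $p_f$ and back before proceeding to $p_g$. After this, every $\str{f}$ in $w$ is immediately followed by $\str{g}$ and every $\str{\bar f}$ is immediately preceded by $\str{\bar g}$. The switch then amounts only to swapping these adjacent letter pairs, $\str{fg}\leftrightarrow\str{gf}$ and $\str{\bar g\bar f}\leftrightarrow\str{\bar f\bar g}$, and a short explicit case analysis---on whether the $\str{g},\str{\bar g}$ flanking a given $(\str{f},\str{\bar f})$ pair are themselves paired, and with whom---shows how to re-pair so that no link is created and the multiset of unpaired letters (hence the area) is preserved. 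Positivity falls out of the same cases. So the paper's argument is purely combinatorial and never invokes mapping-class-group tools; those are reserved for the genuinely harder Lemma~\ref{lem:norm-twist}.
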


\begin{proof}
Consider two cables $\pi_f$ and $\pi_g$ in $\Pi$ that are adjacent in the rotation system.
Assume $\pi_g$ is in the clockwise direction of $\pi_f$
in the rotation system and we are trying to move $\pi_f$ across $\pi_g$.
Let $w \coloneqq [\curve]_B(\Pi)$ denote the Blank word before the order of the cables are switched and $w'$ the word after the cables are switched; by Lemma~\ref{lem:isotopy} the words are well-defined up to isotopy classes of $\pi_f$ and~$\pi_g$.
First, we argue that, without loss of generality, we can assume the drawing of $\pi_g$ follows $\pi_f$ in an $\e$-neighborhood before continuing towards puncture $p_g$, by drawing $\pi_g$ still counter-clockwise to and $\e$-close to $\pi_f$ from $p_0$ to $p_f$ and back to $p_0$, followed by the original drawing of $\pi_g$.  
The new drawing is disjoint from the rest of the cables and is isotopic to the original drawing.
This way we can ensure any instance of symbol $\str{f}$ in $w$ is followed immediately by a $\str{g}$, and
any instance of $\str{\bar{f}}$ has a $\str{\bar{g}}$ proceeding in the word.
(Notice we cannot necessarily say the same about $\str{g}$ because cable $\pi_g$ continues after reaching $p_f$.)

% Perform an isotopy so that $\pi_g$ follows beside $\pi_f$ from the puncture $p_f$ to the exterior face.
% Any instance of $\str{f}$ in $W$ has $\str{g}$ following in the word, and
% any instance of $\str{\bar{f}}$ has $\str{\bar{g}}$ proceeding in the word.

Let $F$ be a folding of $w$, we construct a folding $F'$ of $w'$
with equal area.  We separate into two cases.
If no instance of $(\str{f},\str{\bar{f}})$ is in $F$, then all pairings in $F$ can be paired in $W'$ and we can set $F' \coloneqq F$.
Now we assume there is an instance of~$(\str{f},\str{\bar{f}})$ in $F$.
We further split into subcases.
Look at the two $\str{g}$/$\str{\bar{g}}$ symbols adjacent to the $(\str{f},\str{\bar{f}})$ pair in $F$.
If the two $\str{g}$ symbols is a pairing in $F$, then we can again set $F' \coloneqq F$.
If exactly one of the $\str{g}$/$\str{\bar{g}}$ adjacent to the $(\str{f},\str{\bar{f}})$ pair is the paired with another $\str{\bar{g}}$ or $\str{g}$ that is not adjacent to the $(\str{f},\str{\bar{f}})$ pair in~$w$,
then switching the cables na\"ively results in a linked pair,
    \[
    w = [\ldots \underbracket{\str{f} \str{g} \ldots \overbracket{\str{g} \ldots \str{\bar{g}}}\str{\bar{f}}}\ldots] 
    \longrightarrow  
    w' = [\ldots\str{g} \rlap{$\underbracket{\phantom{\str{f}\ldots \str{g} \ldots \str{\bar{f}}}}$}\str{f}\ldots
    \overbracket{\str{g} \ldots \str{\bar{f}}\str{\bar{g}}}\ldots].
    \]
Instead, after the cables are switched, we pair the $\str{g}$ and $\str{\bar{g}}$ that appear next to the $(\str{f},\str{\bar{f}})$ pair together in $F'$:
\[
    w' = [\ldots  \underbracket{\str{g} \overbracket{\str{f} \ldots \str{g} \ldots \str{\bar{f}}} \str{\bar{g}}}\ldots].
\]
The new $F'$ is a folding without linked pairs and does not change the area because the same number and types of symbols are paired.
Finally, if both $\str{g}$ and~$\str{\bar{g}}$ adjacent to the $(\str{f},\str{\bar{f}})$ pair are paired with letters not adjacent to $\str{\bar{f}}$ 
and~$\str{f}$ in $F$,
we pair $\str{g}$ next to the $\str{f}$ with the $\str{\bar{g}}$ next to $\str{\bar{f}}$ and the $\str{g}$ not adjacent to the~$\str{f}$ with the $\str{\bar{g}}$
not adjacent to $\str{\bar{f}}$ in $F'$:
   \[
   w = [\ldots \underbracket{\str{f} \overbracket{\str{g} \ldots \str{\bar{g}}}\ldots \overbracket{\str{g} \ldots \str{\bar{g}}}\str{\bar{f}}}\ldots] 
   \longrightarrow
   w' = [\ldots \underbracket{\str{g} \overbracket{\str{f} \ldots \underbracket{\str{\bar{g}}\ldots \str{g}} \ldots \str{\bar{f}}}\str{\bar{g}}}\ldots].
   \]
Again, $F'$ remains a folding without linked pairs and has the same area as $F$. 
\end{proof}

% As a corollary we have the following.

%\paragraph*{Converting between cable systems with the same ordering.}

Next, we show that the cancellation norm does not depend on the
cables having the shortest path property.
While the face word is unique up to isotopy of the cables based on \lemref{isotopy}, two cable systems with identical ordering around~$p_0$ may not be isotopic to each other.
The theory of mapping class groups~\cite{primer,martelli2016}
provides tools to convert between all possible isotopy classes of cable systems with same cable ordering.

% \brittany{it seems we've transitioned into background on mapping class groups.
% Should this be in the background?}
% \brad{The mapping class group ideas are not used until here, I
% worry the reader will have forgotten about them by the time they get here}
Let \EMPH{$S_{g,n}$} denote a surface with genus $g$
and $n$ punctures.
Let \EMPH{$\text{Diffeo}^+(S_{g,n})$} denote the group of all orientation-preserving diffeomorphisms from $S_{g,n}\to S_{g,n}$.
Define an equivalence relation~$\sim$ on $\text{Diffeo}^+(S_{g,n})$, where
$\phi\sim \psi$ if there exists an isotopy between $\phi$ and $\psi$.
The \EMPH{mapping class group} of $S_{g,n}$ is the group
\(
\textnormal{MCG}(S_{g,n}) = \textnormal{Diffeo}^+(S_{g,n})\slash \sim.
\)
A simple closed curve or a puncture-to-puncture arc $\lambda$ on a surface is \EMPH{nonseparating} if the surface remains connected after cutting along $\lambda$.
A simple closed curve (or arc) in a surface is \EMPH{essential}
if it is not homotopic to a point and not homotopic to a puncture.

Let $\lambda$ be a simple closed curve in a surface.
We now describe a particular mapping class in $\textnormal{MCG}(S_{g,n})$, a \EMPH{Dehn twist} --- about $\lambda$.
Let$A_\lambda$ be an arbitrarily thin annulus homeomorphic to the product of $\Sp^1$ and the unit interval $I$; i.e. $A_\lambda \cong \Sp^1 \times I$
with $\lambda\cong \Sp^1\times \{0\}.$
The twist is the map $T_\lambda: A_\lambda \to A_\lambda$ where~$T_\lambda(\theta,t)=(\theta+ 2\pi t,t)$~\cite{primer}.
Intuitively, we can think of $T_\lambda$ as acting on any path that crosses $A_\lambda$ by fixing one boundary component of the annulus
and rotating the other boundary component one full revolution, dragging all masses within $A_\lambda$ along.
See \figref{dehn-twist} for an example.
Dehn twists are well-defined as element in $\textnormal{MCG}(S_{g,n})$ and depend only on the isotopy class of $\lambda$ \cite{martelli2016}.

\begin{figure}[h!]
    \centering
    \begin{subfigure}[t]{0.2\textwidth}
        \includegraphics[width=\textwidth]{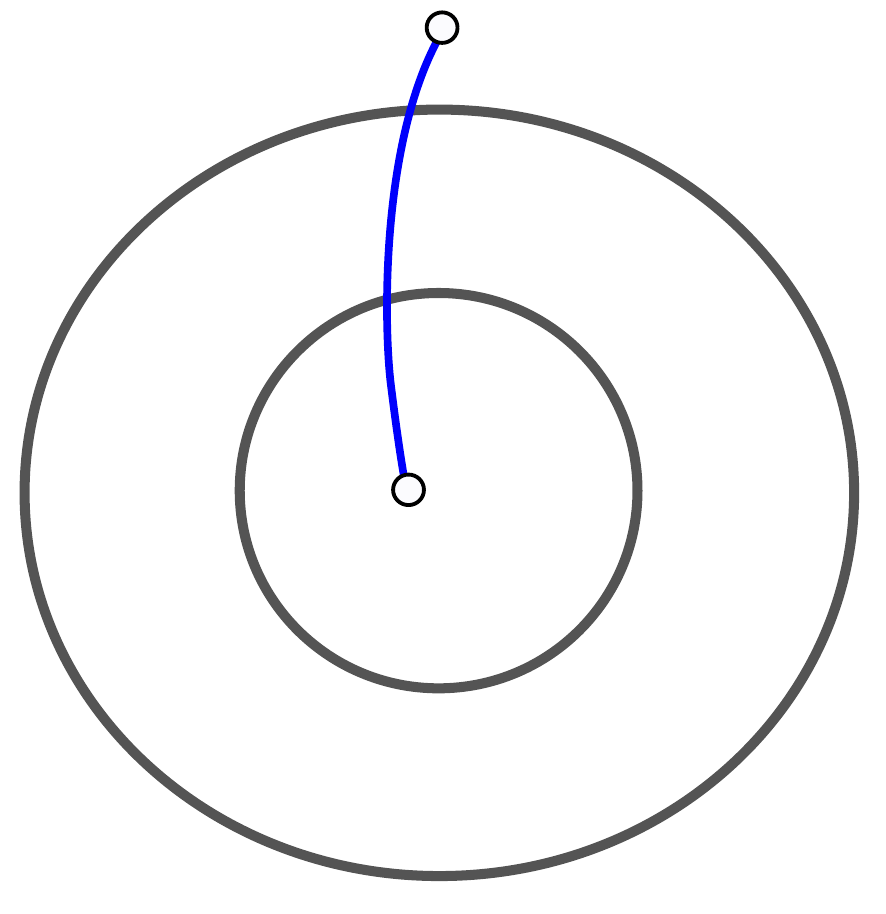}
        % \caption{Before the twist.}
        \label{fig:pre-twist}
    \end{subfigure}
    \hspace{1cm}
    \begin{subfigure}[t]{0.2\textwidth}
        \includegraphics[width=\textwidth]{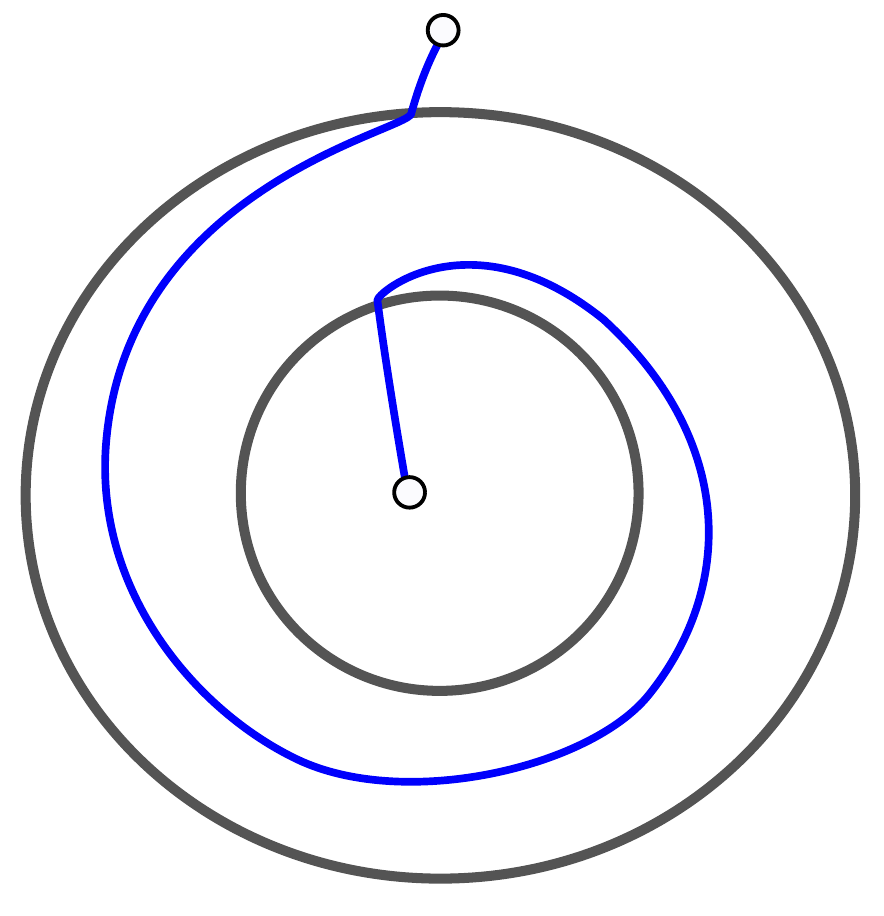}
        % \caption{After the twist.}
        \label{fig:post-twist}
    \end{subfigure}

    \caption{
        An illustration of a Dehn twist.
}
    \label{fig:dehn-twist}
\end{figure}

Let $D_n$ be the closed disk $\mathbb{D}^2$ with $n$ punctures.
The \EMPH{pure mapping class group} of $D_n$, $\textnormal{PMCG}(D_n)$,
is the subgroup of $\textnormal{MCG}(D_n)$ that fixes all punctures.
The group $\textnormal{PMCG}(D_n)$ is generated by a finite number
of closed curves.

\begin{theorem}[Generators~{\cite[\S9.3]{primer}}] \label{thm:generators}
$\textnormal{PMCG}(D_n)$ is generated by Dehn twists
about the set of simple closed curves that surround exactly two punctures.
%\hsien{Careful. Does actual drawing of the closed curves matter?}
\end{theorem}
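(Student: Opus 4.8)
\textbf{Proof plan for Theorem~\ref{thm:generators} (Generators of $\textnormal{PMCG}(D_n)$).}

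The plan is to prove this by induction on the number of punctures $n$, using the Birman exact sequence to reduce the number of punctures one at a time. First I would set up the base cases: for $n \le 1$ the group $\textnormal{PMCG}(D_n)$ is trivial (the disk with at most one marked point has no essential simple closed curves, and every orientation-preserving diffeomorphism fixing the boundary is isotopic to the identity by the Alexander trick), and for $n = 2$ the group $\textnormal{PMCG}(D_2) \cong \ZZ$ is generated by the single Dehn twist about a curve encircling both punctures, which one checks directly. These cases also anchor the claimed generating set.

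For the inductive step, suppose the statement holds for $D_{n-1}$. The key tool is the Birman exact sequence obtained by forgetting the $n$-th puncture: there is an exact sequence
\[
    1 \longrightarrow \pi_1(D_{n-1}, p_n) \longrightarrow \textnormal{PMCG}(D_n) \xrightarrow{\ \mathcal{F}\ } \textnormal{PMCG}(D_{n-1}) \longrightarrow 1,
\]
where $\mathcal{F}$ is the forgetful map and the kernel is the ``point-pushing'' subgroup, generated by push maps $\textnormal{Push}(\alpha)$ along loops $\alpha$ in $D_{n-1}$ based at $p_n$. By the inductive hypothesis, the image $\textnormal{PMCG}(D_{n-1})$ is generated by Dehn twists about curves surrounding exactly two of the first $n-1$ punctures; lift each such generator to the corresponding Dehn twist in $\textnormal{PMCG}(D_n)$ about a curve that also surrounds exactly two punctures and avoids $p_n$. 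These lifts, together with generators of the kernel, generate $\textnormal{PMCG}(D_n)$. It then remains to express each point-pushing generator $\textnormal{Push}(\alpha)$ in terms of two-puncture Dehn twists. The standard identity is that pushing $p_n$ around a simple loop enclosing exactly one other puncture $p_i$ equals $T_a T_b^{-1}$, where $a$ and $b$ are the two boundary curves of a regular neighborhood of (the loop together with $p_i$ and $p_n$) --- and each of $a,b$ surrounds exactly two punctures. Since $\pi_1(D_{n-1}, p_n)$ is a free group generated by such simple loops (one around each of $p_1,\dots,p_{n-1}$), every element of the kernel is a product of these, and hence a product of two-puncture Dehn twists.

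Assembling the pieces: every generator of $\textnormal{PMCG}(D_n)$ --- both the lifted generators from $\textnormal{PMCG}(D_{n-1})$ and the point-pushing generators of the kernel --- has now been written as a product of Dehn twists about simple closed curves each surrounding exactly two punctures, completing the induction. The main obstacle I anticipate is the careful verification of the point-pushing identity $\textnormal{Push}(\alpha) = T_a T_b^{-1}$ and, relatedly, checking that one may choose the loops generating $\pi_1(D_{n-1},p_n)$ to be \emph{simple} and disjoint enough that their regular neighborhoods give curves surrounding exactly two punctures; this is a picture-level argument but needs to be done with care to ensure the curves produced are essential and enclose precisely two marked points. The exactness of the Birman sequence for the disk (which relies on the contractibility of the relevant configuration-space fibers, using that $D_n$ with a puncture filled in is aspherical) is standard and may be cited from~\cite{primer}.
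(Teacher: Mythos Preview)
The paper does not prove this theorem; it is quoted as a known result from Farb--Margalit~\cite{primer} and used as a black box in the proof of Lemma~\ref{lem:norm-twist}. Your proposal via the Birman exact sequence and induction on $n$ is essentially the standard argument found in that reference (equivalently, the classical proof that the pure braid group $P_n \cong \textnormal{PMCG}(D_n)$ is generated by the standard generators $A_{ij}$), so there is nothing to compare against in the present paper.

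One small correction to your point-pushing computation: if $\alpha$ is a simple loop in $D_{n-1}$ based at $p_n$ enclosing exactly the puncture $p_i$, and $a,b$ are the two boundary curves of an \emph{annular} neighborhood of $\alpha$, then in $D_n$ the inner curve $a$ encloses only $\{p_i\}$ while the outer curve $b$ encloses $\{p_i,p_n\}$ (the puncture $p_n$ lies in the annulus between them, since $p_n\in\alpha$). Hence $T_a$ is trivial---a curve bounding a once-punctured disk---and $\textnormal{Push}(\alpha)=T_aT_b^{-1}=T_b^{-1}$ is already a single twist about a two-puncture curve. Your assertion that ``each of $a,b$ surrounds exactly two punctures'' is not correct as stated, but the conclusion you need (that every point-push generator lies in the subgroup generated by two-puncture twists) survives unchanged. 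With this adjustment the outline is sound.
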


%Each cable is an essential arc between punctures.
Thus, given any cable system we can perform Dehn twists about these generators to obtain another cable system with the same cable ordering such that all cables are shortest paths.
We choose a disk in the plane that contains the curve $\gamma$, the punctures, and the cables.
(One subtlety is that even when the two cable systems have the same \emph{cyclic} ordering, the choice of the disk may produce different \emph{linear} cable ordering when we decide the location to take the containing disk.  This can be easily resolved by redrawing some of the cables using isotopy across the infinity.)
We now show that performing Dehn twists about suitable simple closed curves does not change the cancellation norm of the word.
%\hsien{Subtlety: even with same cyclic cable ordering, the choice of the disk may give different linear cable ordering.  Urr.  Secondary.}

\begin{lemma}[Twist Invariance]\label{lem:norm-twist}
    Let $\gamma$ be a curve and $\Pi$ be a cable system, and let $F$ be a folding on the corresponding face word $[\gamma](\Pi)$.
    Dehn twists about the cables in $\Pi$ produce another folding on the new word with equal area.
\end{lemma}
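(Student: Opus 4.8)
The plan is to reduce the statement to a single \emph{elementary} Dehn twist and then track the resulting surgery on the word directly. By Theorem~\ref{thm:generators}, the pure mapping class carrying one cable system to another with the same cable ordering is a product of Dehn twists about simple closed curves enclosing exactly two punctures, so it suffices to handle one such twist $T_\lambda$; the inverse twist is symmetric. Write $p_a,p_b$ for the two punctures enclosed by $\lambda$ and $f_a,f_b$ for the corresponding faces. Since a cable system cuts $\RR^2\setminus P$ into a disk, $\lambda$ necessarily meets the cables; after isotoping $\lambda$ in $\RR^2\setminus P$ (which leaves the mapping class $T_\lambda$ unchanged) and, if needed, normalising $\Pi$ using Lemma~\ref{lem:isotopy} and Lemma~\ref{lem:norm-switch}, I would arrange that $\lambda$ bounds a thin disk $\Delta$ following the arc $\pi_a\cup\pi_b$ (which runs from $p_a$ toward $p_0$ and back to $p_b$), so that $\Delta$ contains exactly $p_a,p_b$ among the punctures and $\lambda$ crosses only $\pi_a$ and $\pi_b$, exactly once each.

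Next I would pin down the effect on the word. By Lemma~\ref{lem:isotopy}, the new word $[\gamma]_B(T_\lambda\Pi)$ is combinatorially the crossing word of $T_\lambda^{-1}\gamma$ against the \emph{unchanged} cables $\Pi$, each letter still weighted by the area of the corresponding face of $\gamma$. The curve $T_\lambda^{-1}\gamma$ coincides with $\gamma$ outside a neighbourhood of $\lambda$ and inserts one loop parallel to $\lambda$ at each of the finitely many points where $\gamma$ meets $\lambda$. A loop parallel to $\lambda$ crosses only $\pi_a$ and $\pi_b$, once each, so it contributes a fixed two-letter subword $s=\str{ab}$, or $s^{-1}=\str{\bar b}\str{\bar a}$, depending on the handedness of the twist and on which way $\gamma$ runs through the annulus. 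Because two closed curves in the plane have algebraic intersection number zero, the intersections of $\gamma$ with $\lambda$ alternate in sign along $\gamma$ --- equivalently, consecutive $\Delta$-crossings alternate ``inward'' and ``outward'' --- so the inserted subwords occur in the cyclic pattern $s,s^{-1},s,s^{-1},\dots\,$. Hence $w'\coloneqq[\gamma]_B(T_\lambda\Pi)$ is obtained from $w\coloneqq[\gamma]_B(\Pi)$ by inserting these subwords, and erasing all inserted letters returns $w$ with its cyclic order intact.

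Given a folding $F$ of $w$, I would build $F'$ on $w'$ by keeping every pairing of $F$ and adding, for each inward $\Delta$-crossing, the pairing of its inserted $\str a$ with the inserted $\str{\bar a}$ at the following outward crossing, and the analogous pairing of the inserted $\str b$ with the inserted $\str{\bar b}$. Between such a matched pair the word contains only the inserted $\str b$, the old letters of that arc of $\gamma\cap\Delta$ (which lie over $\{\str a,\str b,\str{\bar a},\str{\bar b}\}$, since only $\pi_a,\pi_b$ are present inside $\Delta$), and the inserted $\str{\bar b}$; so the new pairings are laminar among themselves. As every inserted letter is now paired and every old letter keeps its $F$-status, the set of unpaired letters of $F'$ equals that of $F$, hence $\mathrm{area}(F')=\mathrm{area}(F)$ --- provided $F'$ has no linked pair.

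Verifying that last proviso is the crux. The ``bracket'' of a new pairing encloses the old $f_a$- and $f_b$-letters of the corresponding $\Delta$-arc, so a pairing of $F$ matching such a letter to a letter outside that arc would be linked with the new pairing. I expect to handle this with two devices: first, a careful choice of the representative curve $\lambda$ together with a preliminary isotopy of $\Pi$ across $\Delta$ putting $\gamma\cap\Delta$ into a standard position; and second, whenever a straddling $f_a$- or $f_b$-pairing of $F$ survives, breaking it and re-routing the matching through the inserted $\str a$/$\str b$ letters --- a finite, laminar surgery that again leaves the set of unpaired letters equal to that of $F$ and therefore preserves the area. A secondary point needing care, of the same flavour, is the normalisation in the first step: that $T_\lambda$ can always be represented by a curve following $\pi_a\cup\pi_b$ (possibly after interleaving with the cable switches of Lemma~\ref{lem:norm-switch}), which rests on $\Pi$ cutting the punctured plane into a disk and on the freedom in choosing the generating twists.
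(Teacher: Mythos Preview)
Your strategy parallels the paper's: reduce via Theorem~\ref{thm:generators} to a single Dehn twist about a curve enclosing two punctures, describe the word surgery explicitly, and pair off all inserted letters so the unpaired set is unchanged. The paper does not first reduce to adjacent cables; it keeps the bundle $B$ of cables lying between $\pi_i$ and $\pi_j$ in the rotation order and writes each occurrence of $\str{f_i}$ in $w'$ as $(\str{\bar f_i}\bar B\str{\bar f_j}B)\,\str{f_i}\,(\bar B\str{f_j}B\str{f_i})$, and similarly for $\str{\bar f_i},\str{f_j},\str{\bar f_j}$. Your normalisation to $B=\emptyset$ via Lemma~\ref{lem:norm-switch} is plausible but you have not checked that switches, twist, and inverse switches compose back to the intended mapping class.

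The genuine gap is the one you yourself flag as ``the crux.'' Your scheme pairs each inserted block locally across the single old letter $x\in\{\str a,\str{\bar a},\str b,\str{\bar b}\}$ it straddles; this is linked with any $F$-pairing joining $x$ to a partner $\bar x$ elsewhere, and your ``break and re-route'' remains a hope rather than a construction. The paper's fix is an \emph{adaptive} pairing: when $x$ is unpaired in $F$, pair the inserted conjugate blocks locally across $x$ (your scheme); but when $(x,\bar x)\in F$, instead pair the inserted block to the right of $x$ with the inserted block to the left of $\bar x$, and the block to the left of $x$ with the block to the right of $\bar x$. Because the left and right inserted blocks are mutual inverses, these new pairings nest respectively inside and outside the old $(x,\bar x)$ bracket, so nothing is linked and the unpaired set equals that of $F$. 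This single adaptive step is what your sketch is missing; once it is written down, no ``standard position'' isotopy or after-the-fact surgery is needed. Two side remarks: the alternation of $\gamma\cap\lambda$ follows from $\lambda$ being a Jordan curve (each crossing toggles inside/outside), not from algebraic intersection number zero, which permits patterns like $+{+}{-}{-}$; and your appeal to the symmetry of $T_\lambda$ and $T_\lambda^{-1}$ is a legitimate shortcut for the lemma as stated, though the paper establishes the reverse inequality directly by a chain-following argument on pairings.
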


\begin{proof}
Let $\Pi$ be a cable system on a closed curve $\gamma$.
Given two punctures $p_i$ and~$p_j$, let $c_{i,j}$
denote the closed curve that contains only
$p_i$ and $p_j$ and follows the cables $\pi_i$ and $\pi_j$
to the exterior face, then connect.
See \figref{pre-twist} for an illustration.
There may be cables
between $\pi_i$ and $\pi_j$ in the rotation order about $p_0$.
We bundle all such cables together and denote the corresponding face word \EMPH{$B$}.
By \thmref{generators}, the closed curves $c_{i,j}$s generate the pure mapping
class group.
To show that twisting about $c_{i,j}$ does not change the norm,
let $w$ be the word generated by the cables before twisting about
$c_{i,j}$ and let $w'$ be the word generated by the cables after the twist.
We will show that $||w'||\leq ||w||$ and~$||w||\leq ||w'||$.

Instances of $\str{f_i}$ and $\str{\bar{f}_i}$ in $w$ become
$(\str{\bar{f}_i}\bar{B}\str{\bar{f}_j}{B}) \str{f_i} (\bar{B}\str{f_j}{B}\str{f_i})$ and~$(\str{\bar{f}_i}\bar{B}\str{\bar{f}_j}{B}) \str{\bar{f}_i} (\bar{B}\str{f_j}{B}\str{f_i})$ in $w'$.
Instances of $\str{f_j}$ and $\str{\bar{f}_j}$ in $w$ become
$(B\str{\bar{f}_i}\bar{B}\str{\bar{f}_j}) \str{f_j} (\str{f_j}B\str{f_i}\bar{B})$ 
and $(B\str{\bar{f}_i}\bar{B}\str{\bar{f}_j}) \str{\bar{f}_j} (\str{f_j}B\str{f_i}\bar{B})$ in $w'$.
See \figref{do-the-twist} for an example.

\begin{figure}[h!]
    \centering
    \begin{subfigure}[t]{0.45\textwidth}
        \includegraphics[width=\textwidth]{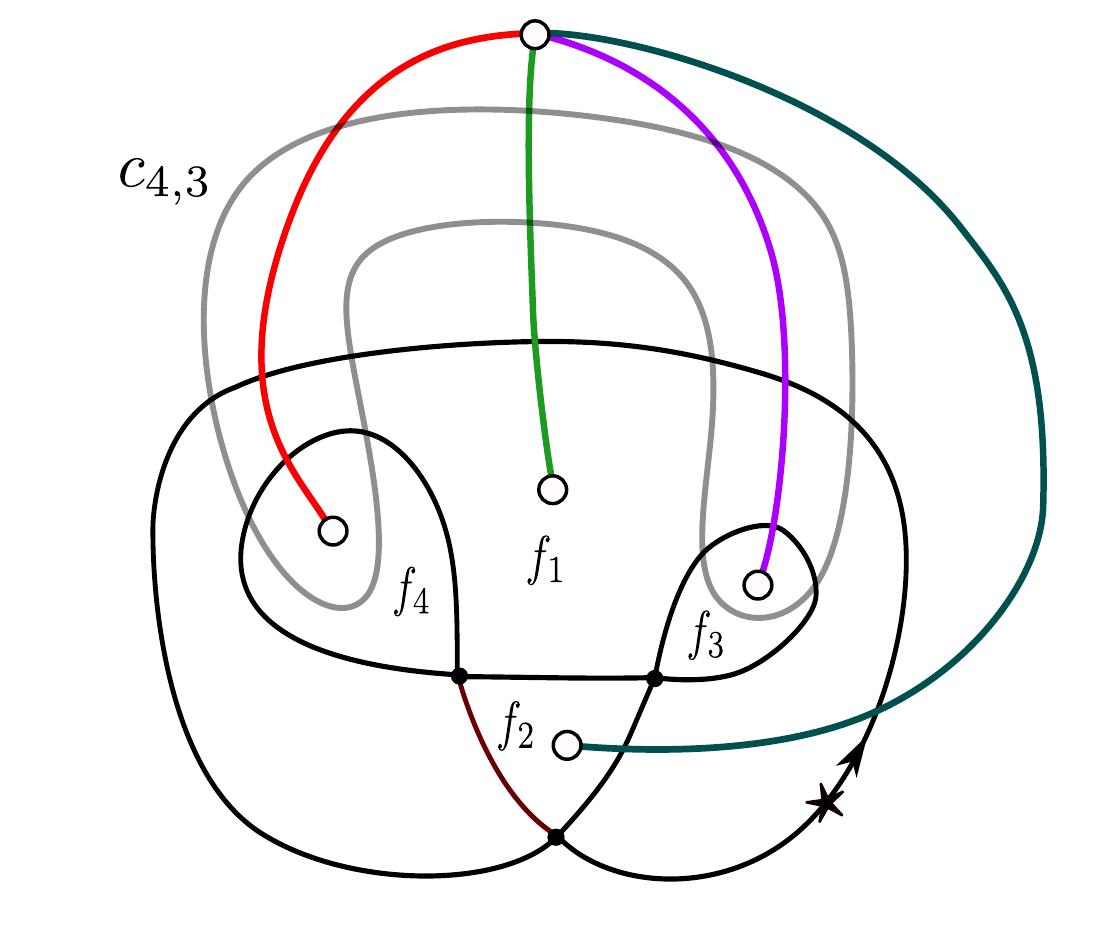}
            \subcaption{Before the twist.}\label{fig:pre-twist}
    \end{subfigure}
    \hspace{1cm}
    \begin{subfigure}[t]{0.45\textwidth}
        \includegraphics[width=\textwidth]{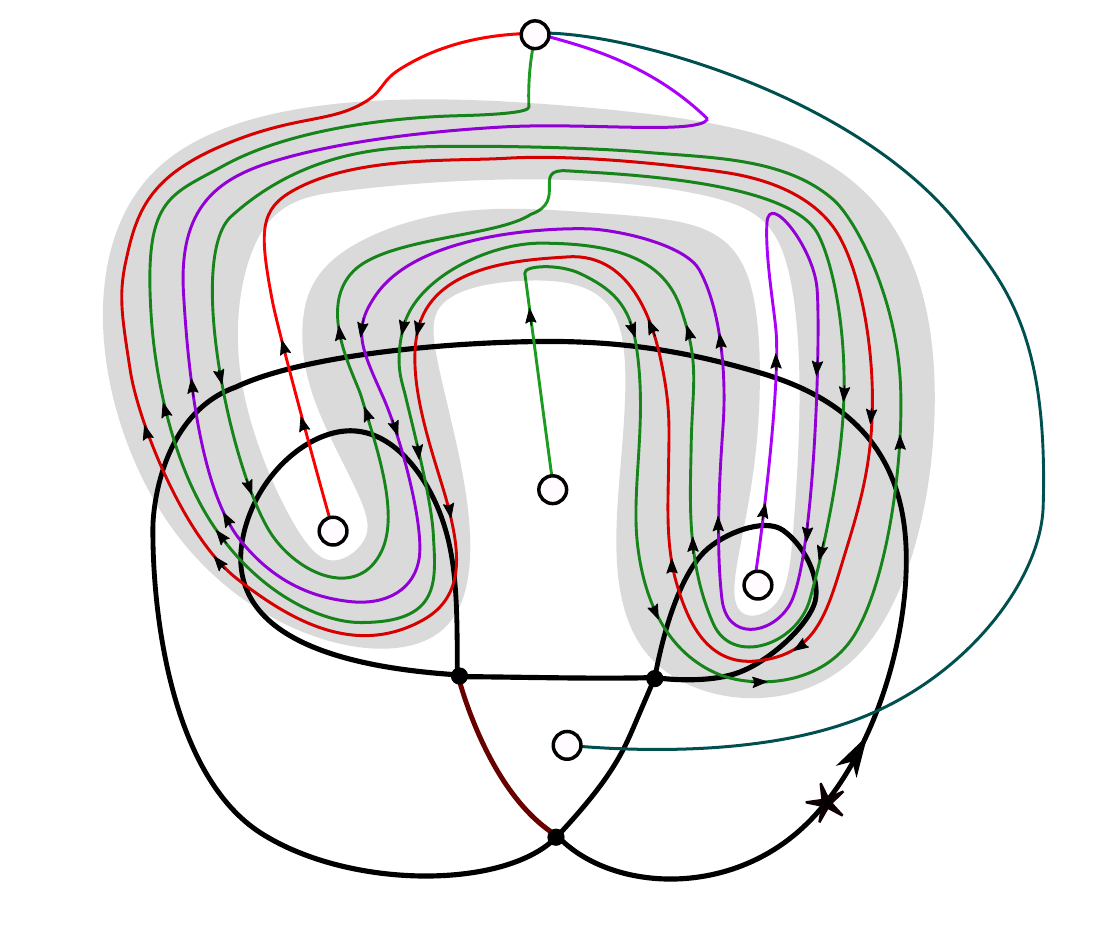}
        \subcaption{After the twist.}\label{fig:post-twisted}
    \end{subfigure}

    \caption{
    (\subref{fig:pre-twist})
    An example of the Dehn twist.
    The curve is shown in black.
    Starting from the star and following the indicated orientation, the Blank word is $w = [\str{23142\overline{34}}]$.
    In the instance we have $i=4$ and $j=3$;
    the curve $c_{4,3}$ is shown in gray.
    The word $B$ is $\str{1}$.
    (\subref{fig:post-twisted}) After twisting about an annulus formed by fattening $c_{4,3}$ the new word is
    $w' = [\str{2}\strp{1\overline{413}}\str{3}\strp{314\overline{1}}\str{1}\strb{\overline{413}1}\str{4}\strb{\overline{1}314}\str{2}
    \strp{1\overline{413}}\str{\overline{3}}\strp{314\overline{1}}\strb{\overline{413}1}\str{\overline{4}}\strb{\overline{1}314}]$.
    }
    \label{fig:do-the-twist}
\end{figure}

We now construct a folding $F'$ of $w'$ with equal area.
All pairs in $F$ are in~$F'$. All letters added by the twist are then paired.
If $\str{f_i}$ is in $w$ and unpaired in $F$ then we pair the conjugate letters
on each side of $\str{f_i}$ in $w'$.
If $\str{f_i}$ is in $w$ and paired in $F$ then
we pair the added letters to the right of $\str{f_i}$ with the added letters to the left of the paired $\str{\bar{f}_i}$ in $w'$;
similarly, we pair the added letters to the left of $\str{f_i}$ with the added letters to the right of $\str{\bar{f}_i}$.
This can be done because the added words are the same mirror pair around symbol $\str{f_i}$ and $\str{\bar{f}_i}$.
\[
w' = [\ldots (\str{\bar{f}_i}\bar{B}\str{\bar{f}_j}{B}) \overbracket{\str{f_i} (\bar{B}\str{f_j}{B}\str{f_i})
\ldots (\str{\bar{f}_i}\bar{B}\str{\bar{f}_j}{B}) \str{\bar{f}_i}} (\bar{B}\str{f_j}{B}\str{f_i}) \ldots]
\]
where the overbracket indicates the pairing $(\str{f_i},\str{\bar{f}_i})$.
Thus, a folding of $w'$ of equal area exist.  Because the cancellation norm is computed by taking minimum over all foldings, we have $||w'||\leq ||w||$.

On the other hand, given a folding $F'$ of $w'$ we construct a folding $F$ of $w$ with area at most the area of $F'$.
If an element of $w$ is unpaired in $F'$ it is also unpaired in $F$.
There are three types of pairings in $F'$.
If both letters in the pair are in $w$: in which case we add this pair to $F$.
If neither letter in the pair is in~$w$, in which case we do nothing to $F$.
The third type of pairing contains a letter of $w$ paired with a letter
in that only presents in $w'$ but not in $w$;
call this pair~$(\str{f}^0,\str{\bar{f}}^1)$ where $\str{f}^0$ is in $w$ and
$\str{\bar{f}}^1$ is in $w'\backslash w$.
We will show that there is a unique unpaired letter $\str{f}^k \in w'\backslash w$
that is left unpaired in $F$,
%contributes the area of face corresponding to $\str{f}^0$,
or there is another~$\str{\bar{f}}^k\in w$
that we can pair $\str{f}^0$ with in $F$.

We construct a sequence of letters in $w'$ that are all $\str{f}$ and $\str{\bar{f}}$s and terminates with either an unpaired letter in $w'\backslash w$
or a paired letter in $w$.
Let $P$
%$P:\set{\text{letters}}\to \set{\text{letters}}$
denote the function
that maps a letter to its paired letter in $F'$, or the identity otherwise.
%Letters in $w'\backslash w$ come in inverse pairs as conjugates.
The map $C$
%$C:\set{\text{letters}}\to \set{\text{letters}}$
maps a letter of $w'\backslash w$ to its conjugate inverse in the mirror pair added by the Dehn twist if the conjugate inverse has not yet been visited, or the identity otherwise.
Note that both $P$ and $C$ are injective, and they only map a symbol $\str{f}$ to its inverse $\str{\bar{f}}$ and vice versa.
Beginning with the pairing $(\str{f}^0,\str{\bar{f}}^1)$, we alternate in applying the functions $P$ and $C$: for any integer $i$, let
$\str{f}^{2i} \coloneqq C(\str{f}^{2i-1})$ and
$\str{f}^{2i+1} \coloneqq P(\str{f}^{2i})$.
% \[
% \begin{aligned}
% \str{f}^{2i} \coloneqq C(\str{f}^{2i-1}), \\
% \str{f}^{2i+1} \coloneqq P(\str{f}^{2i}).
% \end{aligned}
% \]
Since the word is finite, this unique sequence terminates with
either an unpaired letter $\str{f}^k$ of $w'\backslash w$
%with weight equal to $\str{f}^0$
or an element $\str{\bar{f}}^k$ in $w$.

\begin{itemize}\itemsep=0pt
\item
If the sequence of letters ends with an unpaired letter in $w'\backslash w$,
this letter uniquely corresponds to $\str{f}^0$ by following pairings
and corresponding inverses in $F'$.
For example,
let $F'$ be the folding indicated by the overbrackets,
with the underbrackets denoting corresponding conjugate inverse~letters,
\[
[\str{2}\strp{1\overline{413}}
\rlap{$\overbracket{\phantom{\str{3}\strp{314\overline{1}}\str{1}\strb{\overline{41}}\strb{\overline{3}}}}$}
\str{3}\strp{314\overline{1}}\str{1}\strb{\overline{41}}
\rlap{$\underbracket{\phantom{\strb{\overline{3}} \strb{1}\str{4}\strb{\overline{1}} \strb{3}}}$}
\strb{\overline{3}} \strb{1}\str{4}\strb{\overline{1}}
\rlap{$\overbracket{\phantom{\strb{314}\str{2}\strp{1\overline{41}} \strp{\overline{3}}}}$}
\strb{314}\str{2}\strp{1\overline{41}}
\rlap{$\underbracket{\phantom{\strp{\overline{3}}\str{\overline{3}}\strp{3}}}$}
\strp{\overline{3}}\str{\overline{3}}\strp{3}
\strp{14\overline{1}}\strb{\overline{413}1}\str{\overline{4}}\strb{\overline{1}314}].
\]
The unique unpaired letter corresponding to $\str{3}$ is $\strp{3}$
found by considering the second corresponding inverse letter.

\item
If the sequence of letters ends with a letter $\str{f}^k$ in $w$,
we pair $\str{f}^0$ and $\str{\bar{f}}^k$ in $F$.
This does not create any linked pairs in $F$ since
$\str{f}^0$ and $\str{\bar{f}}^k$ participate in pairings in $F'$,
they are not linked by any other pairing in $F'$.
For example,
let $F'$ be the folding indicated by the overbrackets,
with the underbrackets denoting corresponding conjugate inverse letters,
\[
[\str{2}\strp{1\overline{413}}
\rlap{$\overbracket{\phantom{\str{3}\strp{314\overline{1}}\str{1}\strb{\overline{41}}\strb{\overline{3}}}}$}
\str{3}\strp{314\overline{1}}\str{1}\strb{\overline{41}}
\rlap{$\underbracket{\phantom{\strb{\overline{3}} \strb{1}\str{4}\strb{\overline{1}} \strb{3}}}$}
\strb{\overline{3}} \strb{1}\str{4}\strb{\overline{1}}
\rlap{$\overbracket{\phantom{\strb{314}\str{2}\strp{1\overline{413}} \str{\overline{3}}}}$}
\strb{314}\str{2}\strp{1\overline{413}}\str{\overline{3}}\strp{3}
\strp{14\overline{1}}\strb{\overline{413}1}\str{\overline{4}}\strb{\overline{1}314}].
\]
The pairing $(\str{3,\overline{3}})$ is then added to $F$.
\end{itemize}

\noindent The sum of the areas of the unpaired letters of $F$ is at most the sum
of the areas of the unpaired letters of $F'$ and thus $||w||\leq ||w'||$.
This proves the lemma.
\end{proof}

We show the invariance for positive foldability

\begin{restatable}
%[Positive fodability is Cable Invariant]
{lemma}{groupabletwist}
%\begin{lemma}[Positive fodability is Cable Invariant]
\label{lem:groupable-twist}
Being positively foldable does not depend on Dehn twists.
\end{restatable}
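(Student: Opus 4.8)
The plan is to mirror the proof of Lemma~\ref{lem:norm-twist} (Twist Invariance), but now track the \emph{positivity} of each pairing and of the leftover word rather than only the total area. Recall that a Dehn twist about a generating curve $c_{i,j}$ replaces each occurrence of $\str{f_i}$ (respectively $\str{\bar f_i}$) by $(\str{\bar f_i}\bar B\str{\bar f_j}B)\str{f_i}(\bar B\str{f_j}B\str{f_i})$ (respectively with $\str{\bar f_i}$ in the middle), and symmetrically for $\str{f_j},\str{\bar f_j}$, where $B$ is the bundled face word of the cables lying between $\pi_i$ and $\pi_j$ in the rotation order. The first step is to set up the correspondence between foldings of $w$ and $w'$ exactly as in Lemma~\ref{lem:norm-twist}: given a positive folding $F$ of $w$, produce the folding $F'$ of $w'$ by keeping all pairings of $F$ and then pairing up the mirror-image letters introduced by the twist (conjugate inverse pairs on each side of each $\str{f_i}$ and $\str{\bar f_i}$), and conversely given a positive folding $F'$ of $w'$ run the same $P$/$C$ alternation argument to extract a folding $F$ of $w$.

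The second step, which is the real content, is to check that these transformations preserve positivity in both directions. For the forward direction I would argue: each newly inserted block $\str{\bar B}\str{f_j}B\str{f_i}$ or its mirror is paired \emph{internally} in $F'$ in a nested, mirror-symmetric fashion, so the subword ``in between'' each such new pairing is positive iff the corresponding subword of $B$ is positive, and since in the original word $\str{f_i}$ was positive we can choose the side so that the collapsed word stays positive; crucially, collapsing a positive pairing removes a positive infix, and the mirror structure guarantees the signs of the surviving letters are unchanged. I also need to observe that the twist does not alter the rotation number of $\gamma$ (it is a diffeomorphism of the plane), though positivity is what matters here. For the reverse direction, I would follow the $P$/$C$ chain as in Lemma~\ref{lem:norm-twist}: when the chain terminates at an unpaired letter of $w'\setminus w$ the corresponding $\str{f_i}\in w$ becomes unpaired in $F$ (unpaired letters don't affect positivity of pairings, only the leftover word, and an unpaired \emph{positive} letter is fine); when it terminates at a letter of $w$ we create the pairing $(\str{f},\str{\bar f})$ in $F$, and I must check the infix between them in $w$ is positive — this follows because the images of these letters in $w'$ bound a region whose $F'$-collapse is positive, and pulling back through the (sign-preserving) substitution keeps it positive.

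The main obstacle I anticipate is bookkeeping the sign interaction between the bundled word $B$ and the inserted $\str{f_i},\str{f_j}$ letters: the twist inserts \emph{both} $\str{f_j}$ and $\str{\bar f_j}$ (and extra $\str{f_i}$'s) around each original letter, so naively the inserted block is not positive. The resolution is that in $F'$ these extra inserted letters are always paired off among themselves in mirror pairs, so after collapsing those pairings the block contributes nothing, and the only thing crossing an uncollapsed boundary is (a piece of) $B$; hence positivity of the whole reduces to positivity of $B$ together with positivity of the original pairing structure. I would isolate this as a short sublemma: ``collapsing the twist-induced pairings returns the collapsed form of $w$ up to insertion of copies of $B\bar B$ and $\bar B B$, which are positive-neutral.'' With that sublemma, the argument becomes a direct translation of Lemma~\ref{lem:norm-twist}. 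Finally, since \thmref{generators} says the $c_{i,j}$ generate $\textnormal{PMCG}(D_n)$ and \lemref{norm-switch} already handles reordering of cables, composing these operations shows positive foldability is invariant under passing between any two cable systems with the same total weight per face, completing the proof.

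\begin{proof}[Proof sketch]
Fix two punctures $p_i,p_j$ and the generator $c_{i,j}$, and let $B$ be the bundled face word of the cables between $\pi_i$ and $\pi_j$ as in Lemma~\ref{lem:norm-twist}. A Dehn twist about $c_{i,j}$ substitutes, for each occurrence,
\[
\str{f_i}\mapsto (\str{\bar f_i}\bar B\str{\bar f_j}B)\,\str{f_i}\,(\bar B\str{f_j}B\str{f_i}),\qquad
\str{f_j}\mapsto (B\str{\bar f_i}\bar B\str{\bar f_j})\,\str{f_j}\,(\str{f_j}B\str{f_i}\bar B),
\]
and similarly on the barred letters, the substitution being sign-preserving on every surviving letter.

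\emph{Forward direction.} Let $F$ be a positive folding of $w$. Build $F'$ on $w'$ as in Lemma~\ref{lem:norm-twist}: retain all pairings of $F$, and pair each twist-inserted letter with its mirror partner on the opposite side of the relevant $\str{f_i}$ or $\str{\bar f_i}$. The inserted pairings are nested and mirror-symmetric, so collapsing them turns each substituted block back into $\str{f_i}$ (respectively $\str{\bar f_i}$) up to inserted infixes of the form $B\bar B$ or $\bar B B$, which are positive-neutral (they collapse under the same nested pairing). Hence the collapsed word obtained from $F'$ equals the collapsed word obtained from $F$ with some $B\bar B$/$\bar B B$ infixes inserted; since the former is positive and these infixes collapse, the collapsed word from $F'$ is positive. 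Each original positive pairing remains positive because the subword between its endpoints changed only by inserting positive-neutral infixes, and each inserted pairing is positive because its enclosed subword is a prefix/suffix of $B$ (we choose sides using the positivity of the original $\str{f_i}$). Thus $F'$ is positive.

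\emph{Reverse direction.} Let $F'$ be a positive folding of $w'$. Run the alternation of the maps $P$ (pair in $F'$) and $C$ (mirror conjugate inverse) exactly as in Lemma~\ref{lem:norm-twist}. Each chain starting from a pairing $(\str{f}^0,\str{\bar f}^1)$ with $\str{f}^0\in w$, $\str{\bar f}^1\in w'\setminus w$ terminates either at an unpaired letter of $w'\setminus w$ — in which case the associated letter of $w$ is left unpaired in $F$, harmless for positivity — or at a letter $\str{\bar f}^k\in w$, in which case we add $(\str{f}^0,\str{\bar f}^k)$ to $F$. In the latter case the subword of $w$ between these two letters maps, under the substitution, into the subword of $w'$ between $\str{f}^0$ and $\str{\bar f}^k$ together with twist-inserted infixes; the $F'$-collapse of that $w'$-subword is positive, and since the substitution is sign-preserving and the inserted infixes collapse, the corresponding $w$-subword is positive, so the new pairing is positive. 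The collapsed word from $F$ is obtained from the collapsed word from $F'$ by deleting the (positive-neutral) twist-inserted material, hence is positive. Therefore $F$ is positive.

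\emph{Conclusion.} By \thmref{generators} the curves $c_{i,j}$ generate $\textnormal{PMCG}(D_n)$, and by Lemma~\ref{lem:norm-switch} reordering adjacent cables also preserves positivity of foldings; composing these moves connects any two cable systems with the same per-face total weight, so positive foldability is independent of the cable system, i.e., it does not depend on Dehn twists.
\end{proof}
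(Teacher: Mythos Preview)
Your forward direction is essentially the same as the paper's: both build $F'$ from $F$ via the mirror-pairing construction of Lemma~\ref{lem:norm-twist} and then assert positivity (the paper in one sentence, you with more commentary about ``positive-neutral'' infixes).

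The reverse direction is where the two diverge. The paper does \emph{not} run the $P$/$C$ chain argument. Instead it immediately exploits the defining feature of a positive folding: in $F'$ every negative letter of $w'$ is paired. This yields a clean three-way split on a pairing $(\str{\bar f},\str{f})$ in $F'$: both letters lie in $w$ (keep the pair), neither lies in $w$ (discard it), or exactly one lies in $w$. In the mixed case the paper observes that twist letters are inserted in conjugate-inverse pairs, so there is a companion mixed pairing whose $w$-endpoint can be matched to the first; unlinkedness is inherited from $F'$. This is shorter than your chain argument and never needs to track what happens at an \emph{unpaired} terminus, because positivity rules out unpaired negative letters from the start.

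Your chain approach can be made to work, but as written there is a small gap. When the $P$/$C$ chain terminates at an unpaired letter of $w'\setminus w$, you leave the originating letter $\str{f}^0\in w$ unpaired in $F$ and call this ``harmless for positivity.'' That is only harmless if $\str{f}^0$ is positive. You need the observation that $P$ and $C$ both flip sign, so the unpaired terminus has the same sign as $\str{f}^0$; since $F'$ is positive, an unpaired terminus must be positive, hence so is $\str{f}^0$. Conversely, if $\str{f}^0$ is negative the chain is forced to terminate at a letter of $w$, giving the pairing you want. Adding this one sentence closes the gap and your argument goes through, at the cost of more bookkeeping than the paper's direct appeal to ``every $\str{\bar f}$ is paired.''
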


\begin{proof}
Let $w$ be a word and let $w'$ be the word after a Dehn twist about
the simple closed curve $c_{i,j}$ as in the proof of \lemref{norm-twist}.
Suppose $w$ is positively foldable with folding $F$. 
Then, the folding $F'$
described in \lemref{norm-twist} is a positive folding of $w'$.

On the other hand, suppose $w'$ has a positive folding $F'$.
Then, each element~$\str{\bar{f}}$ of $w'$ is paired with an element $\str{f}$.
If both $\str{\bar{f}}$ and $\str{f}$ are in $w$ then we pair them.
If neither $\str{\bar{f}}$ nor $\str{f}$ are in $w$ then we ignore them.
If one of $\str{\bar{f}}$ and $\str{f}$ are in $w$ and the other is added to $w'$ by the twist,
then, since letters are added to $w'$ in pairs, there is another element of $w$ paired
with an element of $w'\backslash w$. We can pair the two element in $w$
and ignore the two in $w'\backslash w$ because the pairings in $F'$ are not linked.
The resulting pairings give a positive folding of $w$.
\end{proof}

Together, \lemref{norm-switch} and \lemref{norm-twist} imply the cancellation
norm does not depend on the isotopy class of the cables. Moreover,
\lemref{norm-switch} and \lemref{groupable-twist} imply the positive foldability of the word does not depend on the isotopy class of the cables.
One last subtlety: we show that we can handle multiple cables
per face. 
\paragraph*{Handling multiple cables per face.}

When a face $f$ in a subcurve contains multiple cables and punctures, we treat the punctures as distinct when applying \lemref{norm-switch} and \lemref{norm-twist}.
Now, once all the cables follow the cotree shortest paths, consider all cables ending in an arbitrary face $f$.
If they are separated by other cables in the cyclic ordering, we use \lemref{norm-switch} and isotopy to move all the cables not ending in $f$ out of the way until they no longer intersect the face. 
This makes sure that all the cables ending in $f$ are gathered together in the cyclic ordering.
Because they follow the same unique cotree path to the exterior, we can merge them into a single cable, where the corresponding symbols in the word are merged into a single symbol with the combined weight.
Thus, the subwords induced by a face pairing correspond to the subcurves generated by the Blank cut.

\subsection{Compute Min-Area Homotopy from Self-Overlapping Decomp.}
\label{SS:correctness}

\setlength{\intextsep}{0pt}%
\begin{wrapfigure}{r}{0.25\textwidth}
        \begin{center}
	 \includegraphics[width=.25\textwidth]{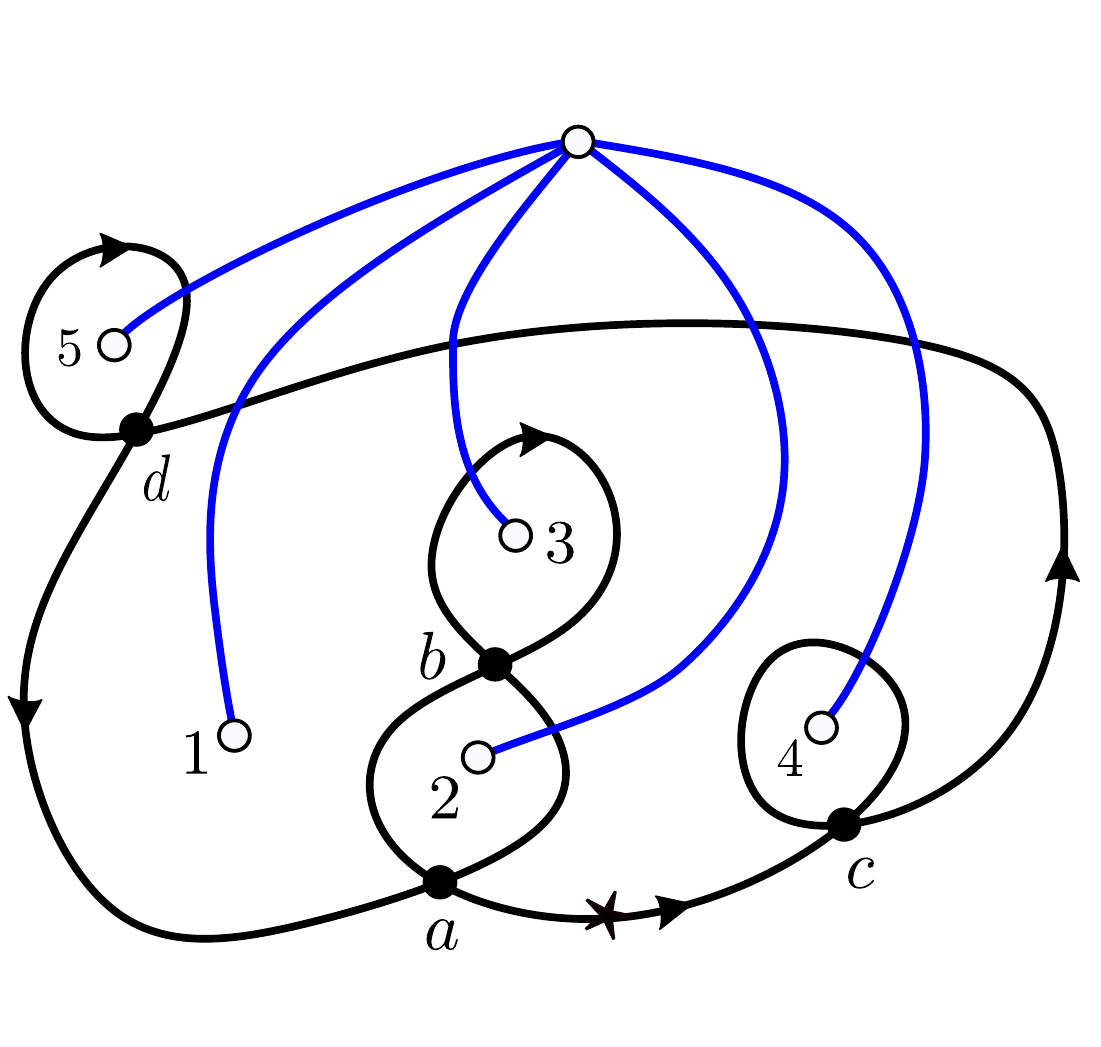}
	 \end{center}
	 \caption{A curve with combined word $[\str{c4c4231d\bar{5}da2b\bar{3}ba}]$.}
	 \label{fig:combined-word}
\end{wrapfigure}

A self-overlapping decomposition is a vertex decomposition where each subcurve is self-overlapping~\cite{fkw2017}.
By \thmref{structural}, there exists a self-overlapping decomposition and an associated homotopy whose area is equal to the minimum homotopy area of the original curve.

%
% \begin{lemma}
% \label{lem:sod-area}
%     \hsien{This is just rephrase of \thmref{structural}.}
%     Any self-overlapping decomposition $\Gamma$ of curve $\curve$ has an
%     associated homotopy $H$ of $\curve$ whose area is equal to the sum of
%     minimum homotopy area of each self-overlapping subcurves in $\Gamma$.
%     % In notation,
%     % \[
%     % \Area_\Gamma(\gamma) = \Area(H).
%     % \]
% \end{lemma}
%
In order to relate vertex decompositions and face decompositions,
we define a word that includes both the faces and vertices.
Given any curve $\gamma$ and cable system~$\Pi$,
traverse $\gamma$ and record both self-crossings and (signed) cable intersections; we call the resulting sequence of vertices and faces the \EMPH{combined word} $[\word](\Pi)$.
See Figure~\ref{fig:combined-word} for an~example.

We now show that every self-overlapping decomposition (with respect to the vertex word of $\gamma$) 
determines a folding (of the face word of $\gamma$) using the combined word.
%The proof is included in \appendref{foldings}.

 \begin{restatable}[S-O Decomp.\ to Folding]{theorem}{sodtofolding}
%\begin{theorem}[S-O Decomp.\ to Folding]
\label{thm:so-to-fold}
Given a self-overlapping decomposition $\Gamma$
%of a curve $\gamma$,
and a cable system $\Pi$ of $\gamma$,
there exists a folding $F$ of $\word(\Pi)$ whose area is
$\Area_\Gamma(\curve)$.
%the area of $\Gamma$ equals to the area of the folding $F$.
%\end{theorem}
\end{restatable}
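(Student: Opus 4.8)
The plan is to induct on the number of cuts in the self-overlapping decomposition $\Gamma$, using the combined word $[\word](\Pi)$ to translate each vertex cut into structure on the face word. For the base case, $\Gamma$ is empty, so $\gamma$ itself is self-overlapping; then by Theorem~\ref{thm:blank}, after rerouting $\Pi$ to a managed shortest-path cable system (legal by Lemma~\ref{lem:independence}, which preserves both the cancellation norm and the area bookkeeping), the word $[\gamma]_B(\Pi)$ is positively foldable, and a positive folding pairs up enough symbols so that the area of the folding equals $\Area_W(\gamma) = \Area_H(\gamma) = \Area_\Gamma(\gamma)$. I would spell out that for a self-overlapping curve the positive folding leaves unpaired exactly the symbols accounting for the winding area — each face $f$ contributes $\wind(f,\gamma)$ copies of its symbol to the word when cables are shortest paths (depth equals winding number for self-overlapping curves since all winding numbers are nonnegative), and a positive folding cancels down to the positive "outermost" layer.

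For the inductive step, pick one unlinked vertex pair $v$ in $\Gamma$ and smooth it, splitting $\gamma$ into two subcurves $\gamma'$ and $\gamma''$; the decomposition $\Gamma$ restricts to self-overlapping decompositions $\Gamma'$ and $\Gamma''$ on the two pieces, each with strictly fewer cuts, and $\Area_\Gamma(\gamma) = \Area_{\Gamma'}(\gamma') + \Area_{\Gamma''}(\gamma'')$. The induced cable system $\Pi$ restricts to (unmanaged, possibly multi-) cable systems $\Pi'$ and $\Pi''$ on the pieces; by the cable-independence machinery (Lemma~\ref{lem:independence}, built from Lemma~\ref{lem:norm-switch}, Lemma~\ref{lem:norm-twist}, and the multiple-cables-per-face reduction) we may assume these are managed shortest-path systems with one cable per face without changing areas. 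By induction we obtain foldings $F'$ of $\word(\Pi')$ and $F''$ of $\word(\Pi'')$ with areas $\Area_{\Gamma'}(\gamma')$ and $\Area_{\Gamma''}(\gamma'')$. The key point is that smoothing $v$ partitions the cyclic combined word $[\word](\Pi)$ into the two cyclic combined words of the pieces, so every face-symbol of $\word(\Pi)$ appears in exactly one of the two subwords (after the cable-merging normalization, each face of $\gamma$ lies on one side of the cut or splits into faces distributed across the two sides with total area preserved). Hence $F' \cup F''$, viewed back inside $\word(\Pi)$, is a set of pairings; I must check no two of its pairings become linked in the parent word. Since the two subwords occupy disjoint cyclic arcs of $[\word](\Pi)$ — they are separated by the two occurrences of the smoothed vertex $v$ and the smoothing respects orientation — a pairing from $F'$ and one from $F''$ cannot interleave, and pairings within each $F'$, $F''$ were already non-linked; so $F \coloneqq F' \cup F''$ is a genuine folding, with $\area(F) = \area(F') + \area(F'') = \Area_{\Gamma'}(\gamma') + \Area_{\Gamma''}(\gamma'') = \Area_\Gamma(\gamma)$.

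The main obstacle I anticipate is the bookkeeping around how faces of $\gamma$ relate to faces of the subcurves after smoothing $v$: a face of $\gamma$ incident to $v$ can merge with another face or be split, and the induced cables on the subcurves are a priori badly behaved (not managed, not shortest, multiple per face), so one genuinely needs Lemma~\ref{lem:independence} and the multiple-cables-per-face discussion to realign the symbol–area correspondence before the inductive hypothesis applies. The second delicate point is verifying non-linkedness when gluing $F'$ and $F''$ — this hinges on the fact that an unlinked vertex pair smooths into two curves whose combined words are contiguous arcs, which is exactly why $\Gamma$ was required to consist of pairwise unlinked vertex pairs; I would make this arc-decomposition explicit and note that it is the combined word (rather than the face word alone) that makes the argument transparent, since the vertex symbols mark precisely where the cyclic word is cut.
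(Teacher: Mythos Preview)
Your proposal is correct and follows essentially the same route as the paper: decompose the combined word at the vertex pairs of $\Gamma$, invoke Lemma~\ref{lem:independence} to repair the induced (unmanaged, multi-cable) systems on the pieces, apply Theorem~\ref{thm:blank} to each self-overlapping piece to obtain a positive folding of the right area, and then observe that the resulting pairings cannot link across pieces because the pieces occupy disjoint cyclic arcs of the combined word. The only cosmetic difference is that you induct one vertex at a time while the paper decomposes at all vertices of $\Gamma$ simultaneously; the paper's all-at-once version avoids the repeated back-and-forth transfers through Lemma~\ref{lem:independence} that your inductive phrasing implicitly requires, but the content is the same.
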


\begin{proof}
    Begin with the combined word $[\word](\Pi)$.
    Decompose $[\word](\Pi)$ at the vertices given by the self-overlapping decomposition.
    Let $\Gamma=\{\gamma_1,\gamma_2,\ldots, \gamma_s\}$ be the self-overlapping subcurves and $[\word](\Pi)_i$ be the corresponding subwords of~$[\word](\Pi)$.
    If we remove the vertex symbols and turn each $[\word](\Pi)_i$ into a face word $[\gamma_i]'$, such word may not correspond to Blank words of the subcurves; indeed, when decomposing $\gamma$ into subcurves by $\Gamma$,
    the subcurve along with the relevant cables may contain multiple cables per face and cables might not be managed or follow shortest paths.
    See \figref{non-shortest} for an example.
    However, we can first tame the cable system by choosing a new managed cable system $\Pi^*$ where the cables follow shortest paths and has one cable per face (as in \secref{blank}).
    % even with multiple cable and non-shortest path cables, self-overlapping curves
    % have foldable words.
    %\note{Rewrite; claiming that cable systems can be tamed.}
    %
    % When a face in a subcurve contains more than one
    % cable, we perform homotopy to move one cable to the other, and identify the two cables by
    % summing the corresponding areas of the two faces of $\gamma$ together.
    \lemref{independence} ensures that the cancellation norm and positive foldability of the subcurve remain unchanged.
    Denote the new face word of $\gamma_i$ with respect to $\Pi^*$ as $[\gamma_i] = [\gamma_i](\Pi^*)$.

    % This can be done because there exists a isotopy class where the
    % two cables follow the same path to the exterior face.
    % To see this, let $\pi_i$ and $\pi_j$ be two cables from
    % the same face in a subcurve. Fix $\pi_i$ and have $\pi_j$ follow
    % the same path to the exterior face.
    % If there are cables between the two paths, twist them around the puncture
    % $p_j$ so that the cable ordering around $p_0$ is preserved.
    % Thus, we can consider the two letters are representing a single face.
    % By \lemref{groupable-twist} twisting  does not change the groupability of the word.
    % Thus, we can consider $\pi_i$ and $\pi_j$ as a single cable.
\setlength{\intextsep}{12pt}%
    \begin{figure}[htb]
        \captionsetup[subfigure]{justification=centering}
        \centering
        \begin{subfigure}[b]{0.21\textwidth}
                \includegraphics[width=\textwidth]{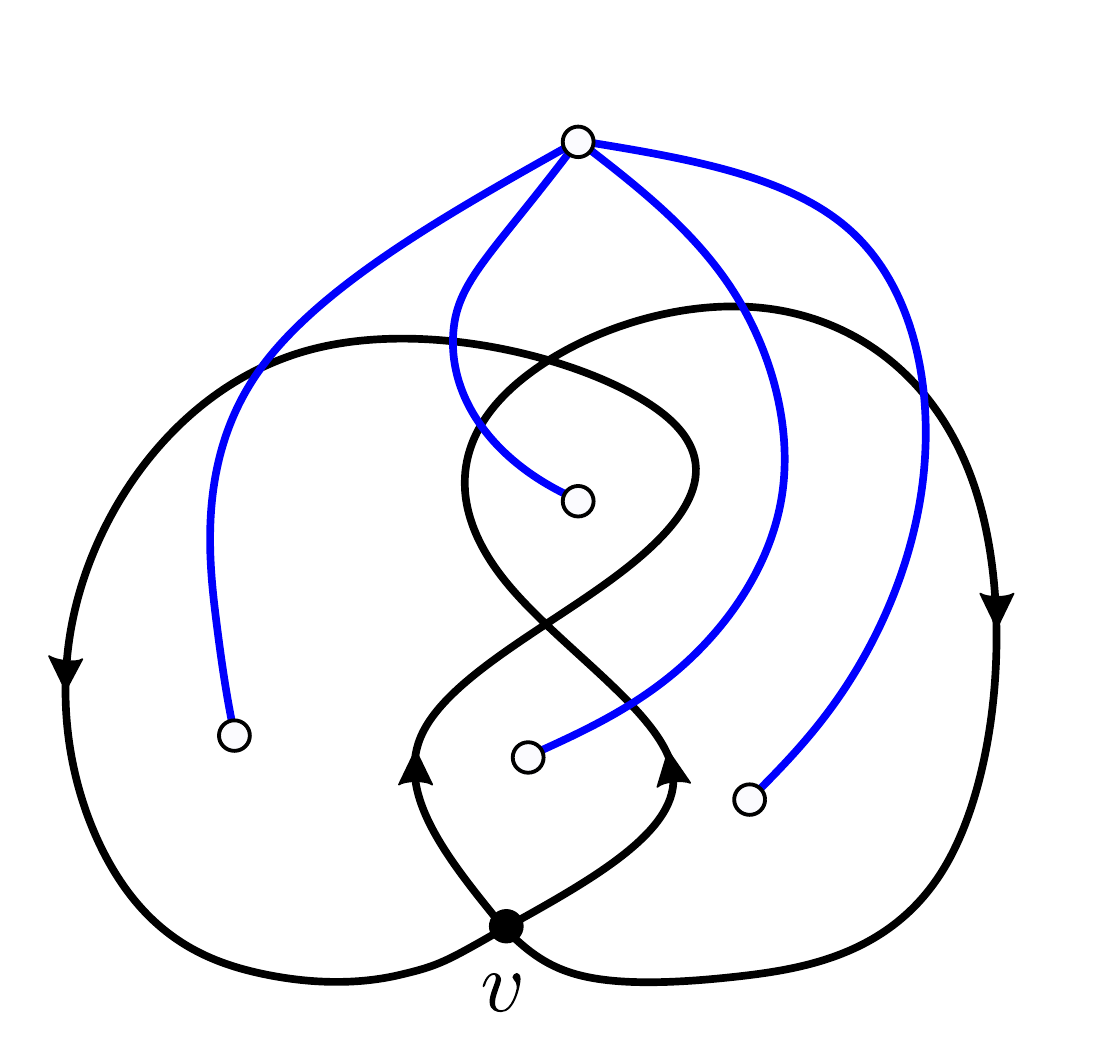}
                \subcaption{}\label{fig:pre-smoothing}
        \end{subfigure}
        \hspace{.5cm}
        \begin{subfigure}[b]{0.21\textwidth}
            \includegraphics[width=\textwidth]{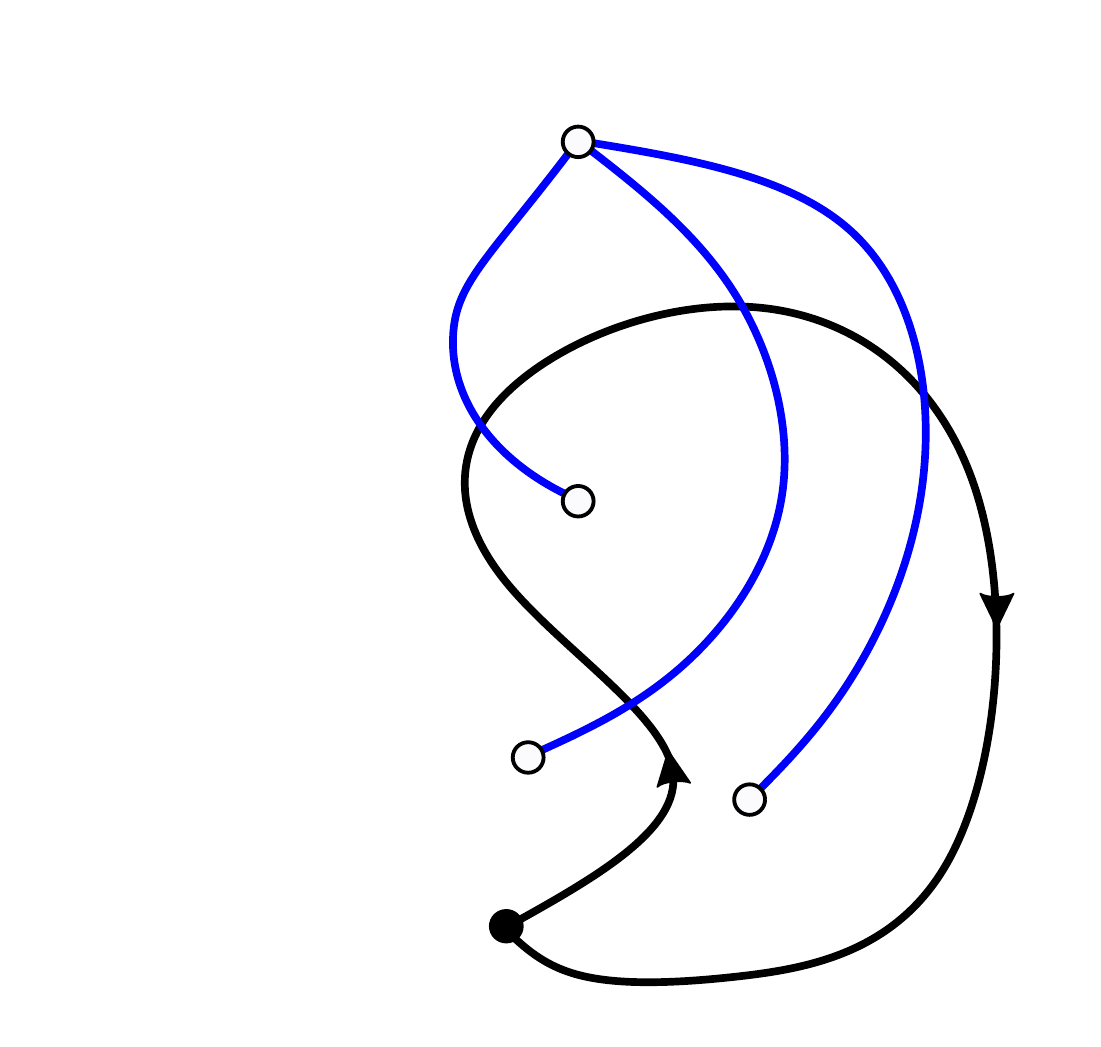}
            \subcaption{}\label{fig:post-smoothing}
        \label{fig:not-short}
        \end{subfigure}
        \caption{We decompose the curve in (\subref{fig:pre-smoothing}) at vertex $v$ into self-overlapping subcurves,
        the cable system  on the induced subcurve in (\subref{fig:post-smoothing}) has more
        than one marked point in a face and cables do not follow shortest
        paths.
        \label{fig:non-shortest}}
    \end{figure}

    Since each $\gamma_i$ is a self-overlapping subcurve in $\Gamma$,
    we can find a positive folding~$F_i$ of $[\gamma_i]$ by Theorem~\ref{thm:blank},
    and the minimum homotopy area of $\gamma_i$ is equal to the area of folding $F_i$.
    Now Lemma~\ref{lem:independence} implies that the subword $[\gamma_i]'$ from the original combined word also has a positive folding $F_i'$ whose area is equal to the minimum homotopy area of $\gamma_i$.
    By combining all foldings $F_i'$ of each face subword~$[\gamma_i]'$, we create a folding $F$ for $[\curve](\Pi)$ (no pairings between different~$F_i'$s can be linked).
    The area of folding $F$ is equal to the sum of areas of foldings~$F_i'$, which in turns is equal to $\sum_i \Area_H(\gamma_i)$, that is, the homotopy area of self-overlapping decomposition $\Area_\Gamma(\curve)$.
    This proves the theorem.
    \end{proof}

\begin{corollary}[Geometric Correctness]\label{cor:correctness}
The dynamic programming algorithm computes the minimum-area homotopy for any curve $\gamma$.
%in polynomial time.
\end{corollary}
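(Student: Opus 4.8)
The plan is a short sandwich argument built on the machinery of Section~\ref{SS:norm} together with the structural theorem of Fasy, Karako\c{c}, and Wenk. The dynamic program of Section~\ref{SS:norm} computes the weighted cancellation norm $\norm{w}$ of the word $w = \word(\Pi)$, where each face symbol carries as its weight the area of the corresponding face; by Lemma~\ref{lem:independence} this value is independent of the chosen managed, shortest cable system $\Pi$, so $\norm{w}$ is a genuine invariant of $\gamma$, and it suffices to prove $\norm{w} = \Area_H(\gamma)$. For the bound $\Area_H(\gamma) \le \norm{w}$, I would take a folding $F^{\ast}$ of $w$ attaining the minimum in the definition of $\norm{w}$ and apply Lemma~\ref{lem:folding-area}, which yields a null-homotopy of $\gamma$ of area equal to the area of $F^{\ast}$, namely $\norm{w}$. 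For the reverse bound $\norm{w} \le \Area_H(\gamma)$, I would invoke Theorem~\ref{thm:structural} to get a self-overlapping decomposition $\Gamma$ of $\gamma$ with $\Area_\Gamma(\gamma)$ equal to the minimum homotopy area, then apply Theorem~\ref{thm:so-to-fold} to $\Gamma$ and $\Pi$ to obtain a folding $F$ of $w$ of area $\Area_\Gamma(\gamma) = \Area_H(\gamma)$; since $\norm{w}$ minimizes area over all foldings, this is an upper bound for it. The two inequalities give $\norm{w} = \Area_H(\gamma)$, so the dynamic program returns the minimum homotopy area.

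For the constructive part --- computing the homotopy and not merely its area --- I would augment the dynamic program with the standard backtracking used for matrix-chain-style recurrences to recover, within the same asymptotic running time, an optimal folding $F^{\ast}$. The proof of Lemma~\ref{lem:folding-area} is itself an explicit recursion on $F^{\ast}$: Blank-cut along an innermost pairing, contract the resulting pairing-free subcurve by a depth-cycle homotopy, re-route the induced (possibly unmanaged, multi-)cable system while keeping the word fixed by Lemma~\ref{lem:isotopy}, and recurse on the remainder. Running this on $F^{\ast}$ produces an explicit minimum-area null-homotopy, and each of its ingredients (folding recovery, Blank cuts, cable taming, depth-cycle contraction) is polynomial in the combinatorial size of $\gamma$.

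The corollary itself is essentially two lines once Lemma~\ref{lem:folding-area} and Theorem~\ref{thm:so-to-fold} are available, so the real obstacle lies entirely upstream: both of those results must tolerate the fact that a Blank cut or a vertex decomposition destroys the managed and shortest-path structure of the cable system and can leave several cables in a single face. That is exactly what Lemma~\ref{lem:independence} --- and, beneath it, the switch-invariance and Dehn-twist-invariance lemmas --- is designed to absorb, which is where I would expect the bulk of the difficulty to sit.
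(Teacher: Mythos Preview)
Your proposal is correct and follows essentially the same route as the paper: the paper's three-line proof invokes Theorem~\ref{thm:structural} to obtain a minimum-area self-overlapping decomposition, Theorem~\ref{thm:so-to-fold} to produce a folding of that area, and Lemma~\ref{lem:folding-area} to turn the minimum-area folding back into a null-homotopy, which is exactly your sandwich argument $\norm{w}\le\Area_H(\gamma)\le\norm{w}$. Your write-up is in fact more explicit than the paper's, both in articulating the two inequalities separately and in spelling out the backtracking and recursive construction of the actual homotopy; you also correctly diagnose that the real work sits in Lemma~\ref{lem:independence} and its supporting switch and twist invariance lemmas.
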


\begin{proof}
By \thmref{structural}, there exists a self-overlapping decomposition
with minimum homotopy area.
By \thmref{so-to-fold}, some folding achieves a minimum area.
Using Lemma~\ref{lem:folding-area}, the minimum-area folding produces a minimum-area homotopy.
\end{proof}

%

% \begin{corollary}[Norm is Cable Invariant]
% \label{cor:independence}
% For any curve $\curve$ and a cable system $\Pi$, the minimum area of foldings (the cancellation norm) of $[\curve](\Pi)$ and the existence of a positive folding of $[\curve](\Pi)$ are independent to the choice of $\Pi$.
% \end{corollary}

\subsection{Min-Area Self-Overlapping Decomposition in Polynomial Time}
\label{SS:compute-sod}

Finally, we show how to construct a self-overlapping decomposition 
from a maximal folding with equal area.
Before we begin the proof, we include definitions that
will help describe the types of curves we encounter.
A curve $\curve$ is a \EMPH{$k$-stack} if it has rotation number $k$, all bounded faces have positive winding numbers, and 
$\Area_H(\gamma)=\Area_W(\gamma)$ (see \figref{2stack} for any example).
Any $k$-stack has a vertex decomposition into $k$ self-overlapping curves~\cite[Theorem~2.15]{evans18}.%
\footnote{$k$-stacks are called \EMPH{interior-boundaries} by Titus~\cite{titus} and the terminology was used in various previous work~\cite{marx74,fkw2017,evans18}.  The name \emph{$k$-stack} is well-justified as such curve is the boundary of a stack of $k$ disks~\cite{evans18}.}
A curve is a \emph{$-k$-stack} if its reversal is a $+k$-stack.
We called a curve $\gamma$ a \EMPH{stack} if $\gamma$ is a $\pm k$-stack.
A curve is \EMPH{good}~\cite{evans18}
if the depth of each face is equal to the absolute value of the winding number; any good curve must have $\Area_D(\gamma)=\Area_H(\gamma)=\Area_W(\gamma)$ by Equation~\ref{eq:sandwich}.  
Good curves are almost stacks, except that some faces might not have positive winding numbers.
A vertex of $\gamma$ is \EMPH{sign-changing} if the four incident faces have winding numbers $[1,0,-1,0]$ in cyclic order (see \figref{sign-change} for
an example).
If we smooth a sign-changing vertex of a good curve, the two induced subcurves remain good. 
We can always decompose a good curve into a collection of stacks at all the sign-changing vertices~\cite[Theorem~5.7]{evans18}.
%A \EMPH{block} of a curve is maximally connected subgraph that can not be separated at a sign changing vertex.

\begin{figure}[htb]
        \centering
        \begin{subfigure}[b]{0.3\textwidth}
      		  \includegraphics[width=\textwidth]{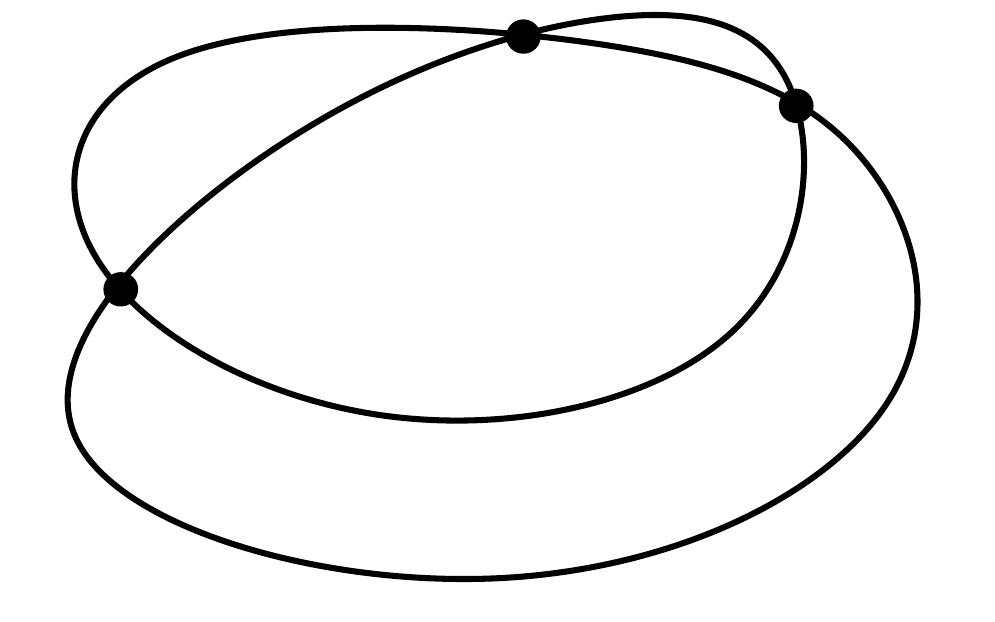}
		   \subcaption{A $2$-stack.}\label{fig:2stack}
        \end{subfigure}
          \hspace{.5cm}
         \begin{subfigure}[b]{0.3\textwidth}
     		   \includegraphics[width=\textwidth]{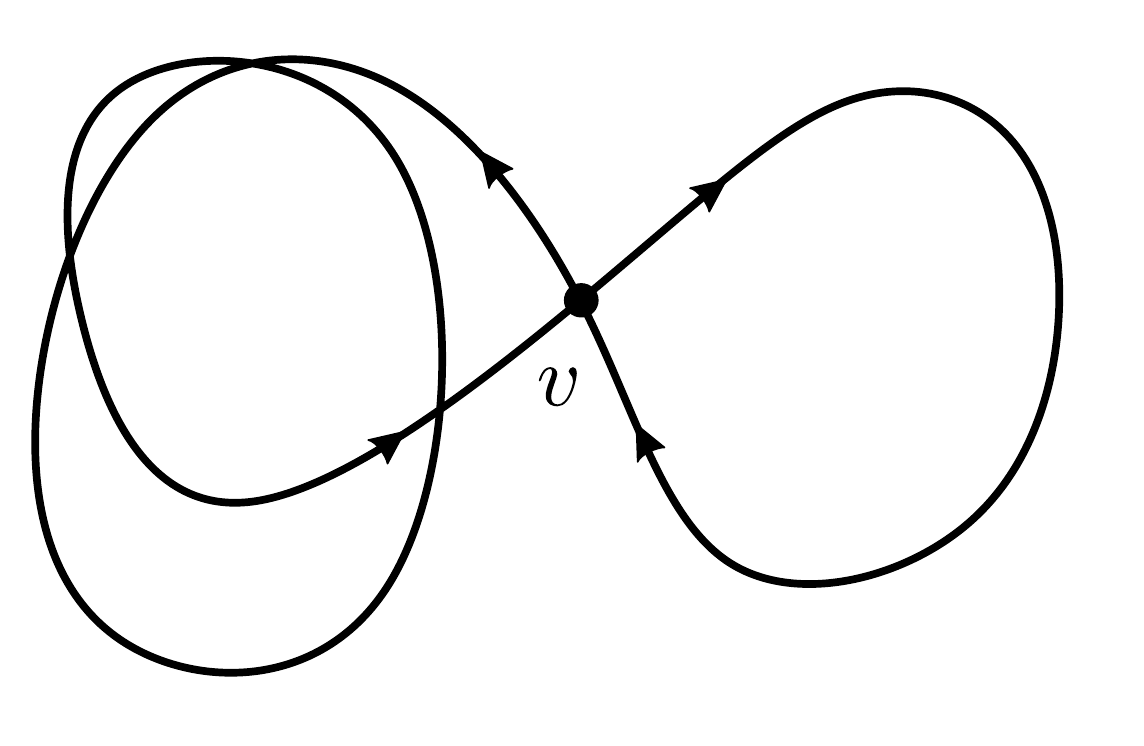}
    		    \subcaption{Sign-changing vertex.}\label{fig:sign-change}
        \end{subfigure}

		\caption{(\subref{fig:2stack}) A $k$-stack where $k=2$. Smoothing at any of the three vertices gives a self-overlapping decomposition.
		(\subref{fig:sign-change}) The vertex $v$ is a sign-changing vertex.
		\label{fig:stack}}
\end{figure}

With these definitions in hand, we now show how to construct a self-overlapping decomposition from a maximal folding with equal area.

\begin{restatable}[Folding to S.O.D.]{theorem}{foldingtosod}
\label{thm:fold-to-so}
Let $\gamma$ be a curve and $\Pi$ be a cable system.
Given a maximal folding $F$ of $\word(\Pi)$, there is a self-overlapping decomposition of $\gamma$ whose area is equal the area induced by the folding $F$.
\end{restatable}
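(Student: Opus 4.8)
The plan is to prove the converse of Theorem~\ref{thm:so-to-fold}: given a maximal folding $F$ of $\word(\Pi)$, build a self-overlapping decomposition $\Gamma$ whose area equals the area of $F$. By Lemma~\ref{lem:independence} (Cable Independence), Lemma~\ref{lem:norm-switch}, and Lemma~\ref{lem:norm-twist}, the area of the folding does not depend on the cable system, so first I would tame $\Pi$ into a managed cable system satisfying the shortest path assumption with one cable per face, and work with the combined word $[\word](\Pi)$ (faces \emph{and} vertices) as in the proof of Theorem~\ref{thm:so-to-fold}. The strategy is induction on the number of pairings in $F$, mirroring the inductive structure of Lemma~\ref{lem:folding-area} (Folding to Homotopy): each Blank cut along an innermost pairing $(\str{f},\str{\bar f})$ of $F$ splits $\gamma$ into two subcurves $\gamma_1,\gamma_2$, one of which inherits a folding with strictly fewer pairings, and a Blank cut is realized geometrically by smoothing along vertices plus a cable, which is exactly the kind of vertex decomposition allowed in a self-overlapping decomposition.

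The key steps, in order: (1) If $F$ is empty (no pairings), then every letter in the word is unpaired, so the area of $F$ is $\sum_f \depth(f)\cdot\Area(f) = \Area_D(\gamma)$ since under the shortest-path assumption an unsigned letter $\str{f}$ occurs exactly $\depth(f)$ times; I then need to show $\gamma$ (or each piece after an appropriate vertex decomposition) is \emph{good}, hence decomposes into stacks at sign-changing vertices and then into self-overlapping curves, with $\Area_\Gamma = \Area_D = \Area_W$ — here I would lean on the structure theory of good curves and $k$-stacks (\cite{evans18}, Theorems 2.15 and 5.7) cited just before the statement. (2) If $F$ is nonempty, pick an innermost pairing $(\str{f},\str{\bar f})$ and perform the corresponding Blank cut; the side $\gamma_1$ containing no pairing is treated by the base case (it is good, decompose it into self-overlapping curves), and $\gamma_2$ with the restricted folding $F\setminus\{(\str f,\str{\bar f})\}$ is handled by induction after re-taming its (possibly unmanaged, multi-cable) induced cable system via Lemma~\ref{lem:independence}. (3) Check that the vertex smoothings used across all the recursive Blank cuts together constitute a legal self-overlapping decomposition — i.e.\ the chosen vertex pairs are pairwise unlinked — because Blank cuts along nested (unlinked) pairings of $F$ smooth disjoint sets of vertices, and $F$ being a folding guarantees no two pairings are linked. (4) Add up areas: each recursive step contributes $\Area_W$ of one self-overlapping piece, matching the unpaired-letter weights assigned to that piece, so the total equals $\Area(F)$.

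The main obstacle I expect is step (1) and its interaction with maximality of $F$: showing that a subcurve whose folding is \emph{empty} (in a maximal folding) is actually good — equivalently that $F$ being maximal forces the relevant subcurve to have depth equal to $|\text{winding number}|$ at every face, so that smoothing into depth cycles yields winding-area-optimal self-overlapping pieces rather than something larger. If maximality did not force goodness, the base-case subcurve could have $\Area_D > \Area_W$ and the decomposition area would overshoot $\Area(F)$. I would argue this by contradiction: a face with depth exceeding its winding number magnitude would, via the correspondence between Blank cuts and foldings (Theorem~\ref{thm:blank} and the discussion around Blank cuts in Section~\ref{SS:norm}), expose an additional positive pairing that could be added to $F$, contradicting maximality. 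A secondary subtlety is the bookkeeping of orientations and of multiple cables per face after repeated cuts, but that is exactly what the "Handling multiple cables per face" paragraph and Lemma~\ref{lem:independence} are set up to absorb, so I would invoke them rather than redo the analysis.
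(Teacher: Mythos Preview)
Your core idea --- maximality of $F$ forces each pairing-free piece to be \emph{good} (a non-good face would have its cable crossed in both directions, supplying an extra unlinked pairing), after which good curves decompose into stacks at sign-changing vertices and then into self-overlapping curves via \cite{evans18} --- is exactly what the paper does. The paper simply performs all the Blank cuts simultaneously rather than peeling off one innermost pairing at a time, and checks goodness of every resulting piece directly against the original word; the area bookkeeping (area of $F$ $=$ depth area of the good pieces $=$ their winding area $=$ area of the resulting self-overlapping decomposition) is identical to yours.

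Two issues with your execution. First, your inductive step re-tames the cable system on $\gamma_2$ via Lemma~\ref{lem:independence} before recursing, but that lemma preserves the \emph{area} of a folding, not its \emph{maximality}; the transported folding on the tamed word need not be maximal, so the inductive hypothesis does not apply as stated. The paper avoids this by never re-taming mid-argument: all cuts are made on the original cable system, and maximality is tested only against the original word. The easy fix is to induct on the untamed subwords, where maximality \emph{is} inherited (any pairing addable to the restricted folding on $w_2$ is automatically unlinked with all of $F$, contradicting maximality of $F$), and only tame at the leaves. Second, your assertion in step~(3) that ``a Blank cut is realized geometrically by smoothing along vertices plus a cable, which is exactly the kind of vertex decomposition allowed in a self-overlapping decomposition'' is false: a Blank cut inserts a cable arc into both subcurves and is not a vertex smoothing of $\gamma$, so the self-overlapping pieces you produce are a priori subcurves of the $\good_i$, not of $\gamma$. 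The paper handles this upfront by rerouting every cable to run along face boundaries before cutting, so that Blank cuts do not split any face of $\gamma$; the subsequent sign-changing and $k$-stack smoothings on the good pieces can then be taken at genuine vertices of $\gamma$. You should invoke that reduction explicitly rather than conflate Blank cuts with vertex smoothings.
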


\begin{proof}
%Recall that a self-overlapping decomposition is a collection of vertices of $\gamma$ that are smoothed so that each induced subcurve is self-overlapping.
%Given a folding we construct a self-overlapping decomposition with the desired area.
%
Let $F$ be any maximal folding of $\word(\Pi)$.
Without loss of generality we can assume that the cable system does not cut through the interior of any face by rerouting the cables similar to Section~\ref{SS:norm}.  
By Lemma~\ref{lem:isotopy} the Blank word remains unchanged.
%\hsien{Wait.  The word changes though.}

Let $\good$ be an arbitrary subcurve generated after Blank cutting along the pairings in $F$.
The curve $\good$ is must be good: otherwise, there is a face with winding number not equal to its depth, and thus $\good$ would cross the corresponding cable from left to right and from right to left.
Therefore, we can introduce an extra pair into the folding and $F$ remains unlinked; thus $F$ would not be maximal.
Decomposing at all the sign-changing vertices~\cite[Theorem~5.7]{evans18} turns $\good$ into a collection of stacks, each of which can be further decomposed into collection of self-overlapping curves~\cite[Theorem~2.15]{evans18}.

The area of the folding $F$ is equal to the number of unpaired faces in $\word(\Pi)$,
which is also equal to the sum of depth area of each good subcurve.
Since each subcurve $\good$ of folding is good, the depth area of $\good$ is equal to its winding area.  
Any additional decomposition of $\good$ into self-overlapping curves respects the additivity of winding areas.
Thus, the area of folding $F$ is equal to the area induced by our chosen self-overlapping decomposition.
\end{proof}

The above theorem implies a polynomial-time algorithm to compute a self-overlapping decomposition with minimum area.

\begin{corollary}[Polynomial Optimal Self-Overlapping Decomposition]\label{cor:poly-s-o}
Let $\gamma$ be a curve.
A self-overlapping decomposition of $\gamma$ with area equal to minimum homotopy area of $\gamma$ can be found in polynomial time.
\end{corollary}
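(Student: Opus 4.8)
The plan is to assemble the corollary from the ingredients already developed in this section, treating it as a scheduling argument over the dynamic program rather than a fresh combinatorial statement. First I would fix any cable system $\Pi$ for $\gamma$ — for concreteness a managed one with the shortest-path property, obtained from a BFS-cotree — and invoke the dynamic program of Section~\ref{SS:norm} to compute $\norm{\word(\Pi)}$ in $\mathcal{O}(m^3)$ time, where $m$ is the length of the word (which is polynomial in the complexity of $\gamma$ since the cables are geometric and follow shortest cotree paths, so $m = \mathcal{O}(n^2)$ for a curve with $n$ vertices). The key point is that the dynamic program does not merely return the value of the norm: by standard backtracking through the recurrence $||w||=\min\{||w'||+\Area(f_\ell),\min_k\{||w_1||+||w_2||\}\}$, one recovers in the same asymptotic time an actual folding $F$ that attains the minimum, and this folding can be taken to be \emph{maximal} — if any unpaired letter could be paired without creating a linked pair, pairing it only decreases the area (weights are nonnegative), so a minimum-area folding returned by the DP is already maximal after a linear-time cleanup pass that greedily adds such pairs.

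Next I would feed this maximal minimum-area folding $F$ into Theorem~\ref{thm:fold-to-so} (Folding to S.O.D.), which produces a self-overlapping decomposition $\Gamma$ of $\gamma$ with $\Area_\Gamma(\gamma)$ equal to the area of $F$. On the other hand, Theorem~\ref{thm:so-to-fold} (S-O Decomp.\ to Folding) shows that every self-overlapping decomposition yields a folding of the same area, so the minimum area over all self-overlapping decompositions equals $\norm{\word(\Pi)}$; combined with Theorem~\ref{thm:structural}, which guarantees a self-overlapping decomposition realizing the minimum null-homotopy area, we get $\norm{\word(\Pi)} = \Area_H(\gamma)$. Hence the decomposition $\Gamma$ extracted above has $\Area_\Gamma(\gamma) = \Area_H(\gamma)$, which is exactly the claim. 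The final bookkeeping is to check that every step of the extraction in Theorem~\ref{thm:fold-to-so} is polynomial: the Blank cuts along the (at most $m$) pairings of $F$ are polynomial, identifying the sign-changing vertices of each good subcurve and decomposing there is linear per subcurve \cite{evans18}, and each stack decomposes into self-overlapping pieces in polynomial time \cite{evans18}; since the total combinatorial size never grows under Blank cuts and vertex smoothings, the whole procedure stays polynomial.

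The main obstacle I anticipate is not the logical chain — that is essentially a three-line composition of the theorems above — but the \emph{uniform polynomiality of the word size and of the cable manipulations}. One must argue that $m$, the length of $\word(\Pi)$, is polynomially bounded (this uses that a managed shortest-path cable system crosses $\gamma$ at most $\mathcal{O}(\text{depth}\cdot n)$ times, and the depth is $O(n)$), and that the repeated retaming of cable systems invoked inside Theorems~\ref{thm:so-to-fold} and~\ref{thm:fold-to-so} — rerouting around faces, merging cables per face, and the Dehn-twist normalizations behind Lemma~\ref{lem:independence} — can each be realized by polynomial-size combinatorial updates rather than only existentially. I would handle this by noting that all these operations are applications of Lemma~\ref{lem:isotopy}, Lemma~\ref{lem:norm-switch}, and Lemma~\ref{lem:norm-twist}, each of which is constructive and changes the word by a controlled (polynomial) amount, so the bound propagates. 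Everything else is immediate from the results already proved.
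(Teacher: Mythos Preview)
Your proposal is correct and follows essentially the same route as the paper: compute a minimum-area folding via the dynamic program, then apply Theorem~\ref{thm:fold-to-so} to extract a self-overlapping decomposition, with Theorem~\ref{thm:so-to-fold} (together with Theorem~\ref{thm:structural}) certifying that the resulting area is the minimum homotopy area. You are in fact more careful than the paper's two-line proof on two points the paper leaves implicit: that the folding handed to Theorem~\ref{thm:fold-to-so} must be \emph{maximal} (you note a minimum-area folding can be made maximal by a greedy cleanup), and that the word length and the constructive steps inside Theorem~\ref{thm:fold-to-so} are genuinely polynomial.
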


\begin{proof}
    Apply the dynamic programming algorithm to compute the minimum-area folding $F$ for $[\curve](\Pi)$ with respect to some cable system $\Pi$.
    By \thmref{so-to-fold} the area of $F$ is equal to the minimum homotopy area of $\gamma$, and so does the corresponding self-overlapping decomposition given by \thmref{fold-to-so}.
\end{proof}

\paragraph{\bf{Acknowledgements}}
{Brittany Terese Fasy and Bradley McCoy are supported by
NSF grant DMS 1664858 and CCF 2046730.
Carola Wenk is supported by NSF grant CCF 2107434 .}
\newpage
\small

\bibliography{references}

\newpage
\appendix
\end{document}